\documentclass[11pt,reqno]{amsart}

\usepackage{amsmath}
\usepackage{latexsym}
\usepackage{amssymb}
\usepackage{mathrsfs}
\usepackage{framed}
\usepackage{stmaryrd}
\usepackage{graphicx}
\usepackage[utf8]{inputenc}
\usepackage{dsfont}
\usepackage{enumerate}
\usepackage{enumitem}
\usepackage{hyperref}
\usepackage{color}
\usepackage{tikz}

\usepackage{multicol}

\usepackage{amsmath}
\usepackage{pgfplots}
\pgfplotsset{compat=1.18}

\usepackage[left=2.5cm,right=2.5cm,top=3cm,bottom=3cm]{geometry}

\usepackage{xcolor}
\usepackage{comment}
  \usepackage{soul}

\newcommand {\clr}{\color{red}}

\newcommand{\one}{\mathds{1}}

\newtheorem{theorem}{Theorem}[section]
\newtheorem{definition}[theorem]{Definition}%[section]
\newtheorem{prop}[theorem]{Proposition}%[section]
\newtheorem{lemma}[theorem]{Lemma}%[section]
\newtheorem{remark}[theorem]{Remark}%[section]
\newtheorem{defiprop}[theorem]{Definition-Proposition}
\newtheorem{hypo}{Hypothesis}

\newtheorem{exem}[theorem]{Example}

\newcommand{\R}{\mathbb{R}}             % REAL
\newcommand{\N}{\mathbb{N}}             % INTEGER
\newcommand{\Z}{\mathbb{Z}}             %
\newcommand{\C}{\mathbb{C}}             % COMPLEX

\numberwithin{equation}{section}

\newcommand{\Hi}{\mathcal{H}}
\newcommand{\Res}{\mathcal{R}}

\DeclareMathOperator{\Det}{Det}
\DeclareMathOperator{\Tr}{Tr}

\DeclareMathOperator{\Id}{Id}
\DeclareMathOperator{\re}{Re}
\DeclareMathOperator{\im}{Im}
\DeclareMathOperator{\Ran}{Ran}
\DeclareMathOperator{\Ker}{Ker}
\DeclareMathOperator{\supp}{supp}

\renewcommand {\clr} {\color{red}}

\begin{document}

	\title{The spectral shift function for  non-self-adjoint perturbations}
	
	\author[V.~Bruneau]{Vincent Bruneau}
	\author[N.~Frantz]{Nicolas Frantz}
	\author[F.~Nicoleau]{Fran\c cois Nicoleau}
	
	\maketitle
	
	\begin{abstract}
		This paper is devoted to the definition and analysis of the spectral shift function (SSF) associated with non-self-adjoint perturbations of self-adjoint operators. Motivated by applications in scattering theory, we consider both trace-class and relatively trace-class perturbations. We extend the Lifshits–Kreĭn trace formula to non-self-adjoint operators under suitable assumptions on the spectrum and the behavior of the resolvent. The role of spectral singularities is carefully analyzed, and we provide a generalization of the SSF using functional calculus. Finally, we apply our results to Schrödinger operators with complex-valued short-range potentials in dimension three. Toy models illustrate properties that one might hope to extend to general cases.  In particular, they suggest that the SSF carries information on the presence of complex eigenvalues.
	\end{abstract}
	
	%%%%%%%%%%%%%%%%%%%%%%%%%%%%%%%%%  Commentaires %%%%%%%%%%%%%%%%%%%%%%%%%%
	
	%%%%%%%%%%%%%%%%%%%%%%%%%%%%%%%%%%%%%%%%%%%%%%%%%%%%%%%%%%%%%%%%%%%%

	\tableofcontents
	
	\section{Introduction}

	The Spectral Shift Function (SSF), initially introduced by Lifshits and Kreĭn \cite{Li52, Kr53}, provides a framework for the spectral analysis of trace-class (or relatively trace-class) perturbations of a reference operator. It yields a spectral invariant that is often related to scattering quantities such as the \textit{scattering phase} and the \textit{average time delay} (see references \cite{Pu02, Ya92}). More precisely, in the self-adjoint setting, the derivative of the SSF with respect to energy coincides (up to a factor of $1/2\pi$) with the scattering phase shift, as established by the Birman–Kreĭn formula. This connection allows one to interpret the SSF as encoding the cumulative spectral effect of the scattering process. Furthermore, the Eisenbud–Wigner formula shows that the time delay operator (measuring the difference in the sojourn time due to the interaction), is also expressed in terms of the energy derivative of the scattering matrix, and thus related to the SSF.
	
	\vspace{0.1cm}
	Originally defined for pairs of self-adjoint or unitary operators, the SSF has been extended to contraction operators viewed as perturbations of unitary operators (see \cite{Ry86, Ry95}), and to dissipative (or accumulative) operators interpreted as boundary perturbations of self-adjoint operators (see \cite{MaNe15, MaNePe19}). 
	A Levinson's formula has recently been obtained in \cite{AlFaRi25_09} for dissipative operators.
In these works, at least one side of the complex plane (the upper or lower half-plane, or the exterior of the unit disk) lies in the resolvent set. The SSF for a general pair of operators in a Banach space is considered in \cite{Mi19}, where it is defined on $(0, +\infty)$, which is assumed to be a subset of the resolvent set.

	\vspace{0.1cm}
	Our goal here is to consider a reference operator $H_0$ which remains self-adjoint, and to define a \textit{Spectral Shift Function} (SSF for short) for non-self-adjoint perturbations of $H_0$. With the objective of establishing connections with scattering theory, we aim to define the SSF on the real line $\mathbb{R}$, which contains the essential spectrum of the operators under consideration. That is, for a general relatively compact perturbation $H$ of a self-adjoint operator $H_0$ (in particular, $H$ may have eigenvalues on both sides of the real axis), we seek to define and analyze a function $\xi := \xi(\cdot; H, H_0)$ such that for any $f\in  D(\R)$, the space of smooth functions with compact support, 
	\begin{equation}\label{ssf}
		\Tr(f(H) - f(H_0)) = \int_{\mathbb{R}} \xi(\lambda)\, f'(\lambda)\, d\lambda.
	\end{equation}
	% where $f(H)$ will be defined using the Helffer--Sj\"ostrand formula \eqref{eq:HSformula}, with an almost analytic extension of $f$.
	
	A particularly relevant model in this context is the Schrödinger operator $H = H_0 + V$, where $H_0 = -\Delta$ on $L^2(\mathbb{R}^d)$ and $V$ is a complex-valued electric potential vanishing at infinity.

	\vspace{0.1cm}
	
	As in \cite{Da95_08}, the operator $f(H)$ for $f \in D(\mathbb{R})$ is defined via the Helffer--Sjöstrand formula \eqref{eq:HSformula}, using an almost analytic extension of $f$. This construction is extend to our non-self-adjoint setting in section \ref{sec:functional calculus}. We then study the left-hand side of \eqref{ssf}, first for trace-class perturbations and subsequently for relatively trace-class perturbations. We show that it defines the derivative of the SSF in the sense of distributions, and that for lower-bounded operators, the SSF may be chosen to vanish near $- \infty$. %\footnote{Parler du principe d'invariance.}
	
	In the original works of Lifshits and Kreĭn \cite{Li52, Kr53}, in the self-adjoint setting, the existence of the SSF and its integrability properties were rigorously justified via its relation to the perturbation determinant:
	\begin{equation}\label{det}
		\xi(\lambda) = \frac{1}{\pi} \lim_{\varepsilon \to 0^+} \arg D(\lambda + i\varepsilon); \qquad 
		D(z):=
		\Det \left( I + V (H_0 - z)^{-1} \right).
	\end{equation}
	For non-self-adjoint operators $H$, this formula no longer holds since
the perturbation determinant may vanish and the property $D(\overline{z})= \overline{D(z)}$ is no longer true. However, using factorization theorems for holomorphic functions on $\mathbb{C}^+$, where $\mathbb{C}^\pm=\{z\in\mathbb{C}\mid \pm \im z>0\}$ (see for instance \cite{MaNe15}), one can still derive representation results for the SSF in terms of measures,  but as we will see in the toy models, these measures are no longer necessarily real and nor absolutely continuous.
	
	\vspace{0.3cm}\noindent
	Throughout the paper, we will use the following notations:
	
	\vspace{0.1cm}\noindent
	Given a Banach space $\mathcal{A}$, we denote by $\mathcal{B}(\mathcal{A})$ the set of bounded linear operators on $\mathcal{A}$. For a Hilbert space $\mathcal{H}$, the Schatten class of order $p$ is denoted by $\mathcal{L}_p(\mathcal{H})$. If $A$ is a closed operator, its spectrum is denoted by $\sigma(A)$ and its resolvent set by $\rho(A)$. The kernel and range of $A$ are denoted respectively by $\Ker(A)$ and $\Ran(A)$. Given a self-adjoint operator $H_0$, we denote by $\Res_0(z) := (H_0 - z)^{-1}$ its resolvent at $z \in \rho(H_0)$, and by $\Res(z) := (H - z)^{-1}$ the resolvent of another (possibly non-self-adjoint) operator $H$. For an subset $I \subset \mathbb{R}$, we denote by $D(I)$ the space of smooth functions with compact support in $I$ and values in $\mathbb{C}$. Its topological dual is denoted by $D'(I)$. For intervals $I\subset\R$ and  $a\in(0,\infty]$, we introduce the following subset of the complex plane:	
	\begin{equation*}
		S_{a}(I):= \lbrace z\in\C \,\big|\, \re(z)\in I\text{ and } 0<|\im(z)|<a\rbrace,\quad 
		S_a^\pm(I):= S_{a}(I) \cap \C^\pm.
		%\lbrace z\in\C^\pm \,\big|\,  \re(z)\in I\text{ and } \pm \im(z)<a\rbrace.
	\end{equation*}

		\section{Assumptions and main results}\label{Sec1}

\subsection{Abstract setting}\label{sub:setting}

	We consider an operator $H$ acting on a complex separable Hilbert space $\mathcal{H}$. We assume that $H$ is of the form
	\begin{equation*}
		H := H_0 + V,
	\end{equation*}
	where $(H_0, \mathcal{D}(H_0))$ is a self-adjoint operator bounded from below, and $V$ is bounded and relatively compact with respect to $H_0$. In particular, $H$ is closed with domain $\mathcal{D}(H) = \mathcal{D}(H_0)$, and its adjoint is given by
	\begin{equation*}
		H^\ast = H_0 + V^\ast, \quad \mathcal{D}(H^\ast) = \mathcal{D}(H_0).
	\end{equation*}
	
	%The spectrum of $H$, denoted by $\sigma(H)$, is the set of complex numbers $z$ such that the operator $H - z$ does not admit a bounded inverse. Its complement in $\C$ is called the resolvent set and is denoted by $\rho(H)$.
	We define the point spectrum of $H$ as
	\begin{equation*}
		\sigma_\mathrm{p}(H) := \left\lbrace \lambda \in \C \mid \Ker(H - \lambda) \neq \{0\} \right\rbrace,
	\end{equation*}
	that is, the set of eigenvalues of $H$. For an isolated eigenvalue $\lambda \in \sigma_\mathrm{p}(H)$, the corresponding Riesz projection is given by
	\begin{equation}\label{eq:RieszProjection}
		\Pi_\lambda(H) := \frac{1}{2\pi i} \oint_{\Gamma} (z - H)^{-1} \, \mathrm{d}z,
	\end{equation}
	where $\Gamma = C(\lambda, r)$ is a positively oriented circle centered at $\lambda$ with radius $r > 0$ small enough so that $\lambda$ is the only point of the spectrum $\sigma(H)$ lying inside the disk $D(\lambda, r)$.
	
	%The operator $\Pi_\lambda(H)$ is the projection onto the generalized eigenspace associated with $\lambda$. In other words, the range $\Ran(\Pi_\lambda(H))$ consists of all vectors $u \in \mathcal{D}(H^k)$ such that $(H - \lambda)^k u = 0$ for some $k \in \N$.
	
	An isolated eigenvalue is called \emph{discrete} if the range of the corresponding Riesz projection is of finite dimension. We denote by $\sigma_\mathrm{disc}(H)$ the set of discrete eigenvalues of $H$. The \emph{essential spectrum} of $H$, denoted by $\sigma_\mathrm{ess}(H)$, is defined as the complement of the discrete spectrum in the full spectrum of $H$:
	$$
	\sigma_\mathrm{ess}(H) := \sigma(H) \setminus \sigma_\mathrm{disc}(H).
	$$
	
	With this definition, and since $V$ is relatively compact with respect to $H_0$, it follows from Weyl’s theorem (see e.g. \cite[Theorem XIII.14]{ReSi80_01}) that
	\begin{equation}\label{eq:ess spec}
	\sigma_\mathrm{ess}(H) = \sigma_\mathrm{ess}(H_0).
	\end{equation}

Moreover, using a variational argument and the boundedness of $V$, one sees that $\sigma(H)$ lies in a horizontal strip and that its real part is bounded from below, since $H=H_0+V$ is a bounded perturbation of $H_0$.

\subsection{Assumptions}

Throughout this paper, some hypotheses are understood to be stated relative to a fixed open interval $I\subset\R$ which is chosen according to the spectral region under consideration. Whenever a hypothesis is invoked, the corresponding set $I$ is assumed to be fixed.
	
\begin{hypo}\label{hyp:eigenvalues}
		Let $I\subset\mathbb{R}$ be an open interval. The perturbed operator $H$ has only finitely many non-real eigenvalues whose real parts lie in $I$. %There exists $a_I>0$ such that $S_{a_I}(I)$ does not contains any nonreal eigenvalues of $H$. 
		
\end{hypo}
	
The following assumption requires the resolvent of $H$ blow up at most polynomially in a tubular neighborhood of the  real interval $I$. %line.

\begin{hypo}\label{hyp:estimate resolvente}
	Let $I\subset \R$ be an open interval. There exists $a_I>0$, $n_I>0$ and $c_I>0$, such that  $S_{a_I}(I)\subset \rho(H)$ and for all $z\in S_{a_I}(I)$,
	\begin{equation}\label{eq:hypestimeresol}
	%\footnote{N:Ce n'est pas impossible qu'il faille fixer un compact pour la partie réelle.}
		\|\Res_H(z)\|_{\mathcal{B}(\Hi)}\leq c_I |\im(z)|^{-1}\left(\frac{\left\langle \re(z)\right\rangle }{|\im(z)|}\right)^{n_I}.
	\end{equation} 
\end{hypo}

%\footnote{\clr F : les hypothèses précédentes ainsi écrites ne sont pas claires. Est-ce pour un intervalle $I$ fixé ou bien pour tout $I$ ? N: J'ai fais une remarque au début de la section. Est-ce que cela te semble suffisant ?}	
	
The next assumption require that the difference between  the resolvents (or of the operators) is a trace class operator. 
\begin{hypo}\label{hyp:Trace}
	There exists $c \in \R$ and $m\in\Z$, such that 
	\begin{equation*}
		(H-c)^{-m}-(H_0-c)^{-m}\in\mathcal{L}_1(\Hi)
	\end{equation*}
\end{hypo}
	
\begin{remark}\label{rqHyp} Let us comment these hypotheses in particular cases.
	\begin{itemize}
		\item If Hypothesis~\ref{hyp:eigenvalues} holds on $I=\R$, then $H$ has a finite number of non-real eigenvalues.
		\item If $I$ is a bounded interval, Hypothesis~\ref{hyp:eigenvalues} means there is no accumulation of discrete eigenvalues to $I\times \{0\}$.% (outside the real axis the spectrum is discrete with bounded imaginary part).
		\item If $I$ is a bounded interval, the estimate \eqref{eq:hypestimeresol} in Hypothesis~\ref{hyp:estimate resolvente} is equivalent to 
		\begin{equation*}
			\|\Res_H(z)\|_{\mathcal{B}(\Hi)}\leq c_I \  |\im(z)|^{-n_I-1}, \quad \forall z\in S_{a_I}(I). 
		\end{equation*}
		\item If $I \subset (- \infty, c )$ with $c < \inf \sigma_\mathrm{ess}(H_0)$ then both  Hypothesis~\ref{hyp:eigenvalues} and \ref{hyp:estimate resolvente} hold on $I$. % (only a finite number of eigenvalues are in the half-plane $\{ \re(z) \} c\}$).}
		\item The case $m=-1$ in Hypothesis~\ref{hyp:Trace} means that $V=H-H_0$ belongs to $\mathcal{L}_1(\Hi)$. 
	\end{itemize}
\end{remark}

 \subsection{Main results}

We now summarize the main results of the paper. Our first result concerns the construction of a functional calculus for the
non-self-adjoint operator $H = H_0 + V$ on the real line. Since $H$ may possess
non-real eigenvalues, we separate the spectral contribution associated with
these eigenvalues and apply the Helffer--Sj\"ostrand functional calculus,
following the approach of Davies~\cite{Da95_08, Da07}, to the part of the operator
whose spectrum lies on the real axis.

\medskip\noindent
Under Hypotheses~1 and~2 on an open interval $I \subset \R$, this construction
allows us to define the operator $f(H)$ for suitable functions $f$ supported in $I$
through a Helffer-Sj\"ostrand type formula. Moreover, the map
\[
f \longmapsto f(H)
\]
defines a continuous algebra morphism for the pointwise multiplication, (see Section \ref{sec:functional calculus}).

\medskip\noindent
Our second result establishes the existence of the spectral shift function
for trace-class perturbations and is proved in Section 4.

\begin{theorem}[Existence of the SSF]
	\label{thm:ssf-traceMR}
	Assume that Hypotheses~1, 2, and~3 hold on an open interval $I$ with $m=-1$.
	Then for every $f \in \mathcal D(I)$ the operator difference
	$f(H)-f(H_0)$ belongs to $\mathcal L_1(\mathcal H)$ and the map
	\[
	f \mapsto \mathrm{Tr}(f(H)-f(H_0))
	\]
	defines a distribution on $I$. 
	The spectral shift function $\xi(\cdot; H, H_0)$ is defined, up to an additive constant, by
	\[
	\langle \xi', f \rangle = \mathrm{Tr}(f(H)-f(H_0)).
	\]
	Moreover, it satisfies
	\begin{equation}\label{ssf=sigmaMR}
		\xi^\prime(\cdot; H, H_0) =
		\frac{1}{2\pi i}
		\lim_{\varepsilon \rightarrow 0^+}
		\left(
		\sigma(\cdot + i\varepsilon) - \sigma(\cdot - i\varepsilon)
		\right)
		\quad \text{in } \mathcal{D}'(I),
	\end{equation}
	where for $z \in \rho(H) \cap \rho(H_0)$,
	\[
	\sigma(z) := \Tr\big(\Res_H(z) - \Res_{H_0}(z)\big).
	\]
\end{theorem}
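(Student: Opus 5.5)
The plan is to work directly from the Helffer--Sj\"ostrand representation of $f(H)$ built in Section~\ref{sec:functional calculus}. For $f\in\mathcal D(I)$ we fix an almost analytic extension $\tilde f$ supported in $\{\re z\in I,\ |\im z|<a_I\}$ with $|\bar\partial\tilde f(z)|\le C_N|\im z|^N$ for every $N$. By Hypothesis~\ref{hyp:eigenvalues} the strip $S_{a_I}(I)$ contains only finitely many non-real eigenvalues. Hypothesis~\ref{hyp:estimate resolvente} in fact makes the whole strip resolvent, so we simply shrink $a_I$ if needed. Thus
\[
f(H)-f(H_0)=\frac1\pi\int_{\C}\bar\partial\tilde f(z)\,\big(\Res_H(z)-\Res_{H_0}(z)\big)\,dL(z).
\]
The second resolvent identity with $m=-1$ gives $\Res_H(z)-\Res_{H_0}(z)=-\Res_H(z)V\Res_{H_0}(z)$. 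Hence
\[
\|\Res_H(z)-\Res_{H_0}(z)\|_{\mathcal L_1}\le \|V\|_{\mathcal L_1}\,c_I|\im z|^{-n_I-2}
\]
on the support of $\tilde f$, using \eqref{eq:hypestimeresol} and $\|\Res_{H_0}(z)\|\le|\im z|^{-1}$. Taking $N=n_I+2$, the integrand is Bochner integrable in $\mathcal L_1$, so $f(H)-f(H_0)\in\mathcal L_1(\Hi)$. Commuting trace and integral then yields
\[
\Tr(f(H)-f(H_0))=\frac1\pi\int\bar\partial\tilde f\,\sigma\,dL,\qquad |\sigma(z)|\le C|\im z|^{-n_I-2}.
\]

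Next we check that this is a distribution. The bound $|\Tr(\cdot)|\le C\sup_{z}|\bar\partial\tilde f(z)||\im z|^{-n_I-2}$ is controlled by finitely many derivatives $\|f^{(k)}\|_\infty$, $k\le n_I+3$, on a fixed compact subset of $I$. This follows from the standard construction $\tilde f(x+iy)=\sum_{k\le N}f^{(k)}(x)(iy)^k/k!\,\chi(y)$, so $f\mapsto \Tr(f(H)-f(H_0))$ is continuous on $\mathcal D(K)$ for each compact $K\subset I$. Independence of the choice of $\tilde f$ is inherited from the functional calculus section. Any distribution on an interval has a primitive, unique up to constants, which gives $\xi$ with $\langle\xi',f\rangle=\Tr(f(H)-f(H_0))$. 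If the distribution is meant with the sign convention of \eqref{ssf}, we simply take $-$ the primitive.

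For \eqref{ssf=sigmaMR}, apply Stokes on $\{|\im z|>\varepsilon\}$, where $\sigma$ is holomorphic:
\[
\frac1\pi\int_{|\im z|>\varepsilon}\bar\partial\tilde f\,\sigma\,dL=\frac{1}{2\pi i}\int_{\R}\big(\tilde f(x+i\varepsilon)\sigma(x+i\varepsilon)-\tilde f(x-i\varepsilon)\sigma(x-i\varepsilon)\big)\,dx,
\]
with the orientation to be checked carefully. The left side converges to $\Tr(f(H)-f(H_0))$ by dominated convergence. On the right we replace $\tilde f(x\pm i\varepsilon)$ by $f(x)$. The error is $\tilde f(x\pm i\varepsilon)-f(x)=O(\varepsilon)$ in general, which is not enough against $|\sigma|\sim\varepsilon^{-n_I-2}$.

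This replacement is the main obstacle. The fix is to use the Taylor form of $\tilde f$ and, instead of a crude bound, integrate by parts in $x$. Write
\[
\tilde f(x+i\varepsilon)-f(x)=\sum_{1\le k\le N}f^{(k)}(x)(i\varepsilon)^k/k!
\]
for small $\varepsilon$, and note that $\partial_x\sigma(x\pm i\varepsilon)=\sigma'(z)$ behaves like $\partial_\varepsilon$ by holomorphy. Define $G_\pm(\varepsilon):=\int f(x)\sigma(x\pm i\varepsilon)\,dx$. One shows $G_\pm$ satisfies $G_\pm(\varepsilon)=\int\tilde f(x\pm i\varepsilon)\sigma(x\pm i\varepsilon)dx+\int_0^{\varepsilon}(\dots)$, where the remainder is an integral of $\bar\partial\tilde f\,\sigma$ over the thin strip $\{0<\pm\im z<\varepsilon\}$. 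That remainder is $O(\varepsilon^{N-n_I-1})\to0$.

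Concretely, apply Stokes directly with $f(x)$, viewed as a function of $z$ constant in $y$, on the region $\{\varepsilon<\im z<\delta\}$. This gives $\int f(x)\sigma(x+i\varepsilon)dx-\int f(x)\sigma(x+i\delta)dx=2i\int\bar\partial_z[f(\re z)]\sigma\,dL$. Since $\bar\partial f(\re z)=\tfrac12 f'$, this does not vanish, so the first approach is cleaner: compare $\tilde f$ with $\tilde f$ evaluated on the lifted strip, and iterate. I expect the bookkeeping to show that the limit in \eqref{ssf=sigmaMR} exists in $\mathcal D'(I)$ and equals $\xi'$, with the finite-order growth of $\sigma$ absorbed by the arbitrary flatness of $\bar\partial\tilde f$.
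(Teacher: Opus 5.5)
\textbf{Verdict: there is a genuine gap.} Your first half matches the paper's argument and is correct. This covers the trace-class property via $\Res_H(z)-\Res_0(z)=-\Res_H(z)V\Res_0(z)$, the bound $|\sigma(z)|\le C|\im z|^{-n_I-2}$, Bochner integrability in $\mathcal L_1$, continuity on $D(K)$, and taking a primitive.

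The gap is the representation formula \eqref{ssf=sigmaMR}. It is a genuine part of the statement, since it also asserts that the limit exists in $\mathcal D'(I)$, and you do not prove it. Cutting the domain at $|\im z|=\varepsilon$ puts the boundary where $\tilde f\neq f$. Replacing $\tilde f(x\pm i\varepsilon)$ by $f(x)$ is then as hard as the claim itself.

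Your proposed fix gains nothing. With $G_+(\varepsilon)=\int f(x)\sigma(x+i\varepsilon)\,dx$ and $\partial_x\sigma(x+i\varepsilon)=-i\partial_\varepsilon\sigma(x+i\varepsilon)$, integration by parts gives
\[
\int\tilde f(x+i\varepsilon)\sigma(x+i\varepsilon)\,dx=\sum_{k\le N}\frac{(-\varepsilon)^kG_+^{(k)}(\varepsilon)}{k!}.
\]
The only a priori information is the growth bound on $\sigma$ plus Cauchy estimates, which give $|G_+^{(k)}(\varepsilon)|\lesssim\varepsilon^{-n_I-2-k}$. So every correction term is still of size $\varepsilon^{-n_I-2}$. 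Controlling them needs knowledge of $G_+$ as $\varepsilon\to0^+$, which is what you are trying to prove.

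The identity you assert, with a remainder over the thin strip $\{0<\im z<\varepsilon\}$, is not derived. It cannot come from Stokes on that strip, because $\sigma$ has no boundary values on $\R$. The closing ``I expect the bookkeeping to show'' leaves the step open.

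The missing idea, which is how the paper argues, is to shift the argument of $\sigma$ rather than the domain. For fixed small $\varepsilon>0$, $z\mapsto\sigma(z\pm i\varepsilon)$ is holomorphic on a neighbourhood of the closed half-strip $\{0\le\pm\im z<a\}$ over $\supp f$. Stokes on $\{\pm\im z>0\}$ then gives
\[
\int_{\pm\im z>0}\partial_{\bar z}\tilde f(z)\,\sigma(z\pm i\varepsilon)\,dx\,dy=\pm\frac{1}{2i}\int_\R f(\lambda)\,\sigma(\lambda\pm i\varepsilon)\,d\lambda .
\]
The boundary term now sits on $\R$, where $\tilde f=f$ exactly.

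To let $\varepsilon\to0^+$ on the left, use $|\im(z\pm i\varepsilon)|\ge|\im z|$ for $\pm\im z>0$. This gives $|\sigma(z\pm i\varepsilon)|\le C\|V\|_{\mathcal L_1}|\im z|^{-n_I-2}$ uniformly in $\varepsilon$. Combined with $\partial_{\bar z}\tilde f=O(|\im z|^{n_I+2})$, dominated convergence sends the left side to $\int_{\pm\im z>0}\partial_{\bar z}\tilde f\,\sigma\,dx\,dy$.

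Adding the two half-planes proves at once that the limit exists and equals $\pi^{-1}\int\partial_{\bar z}\tilde f\,\sigma\,dx\,dy=\Tr(f(H)-f(H_0))$. No Taylor bookkeeping is needed.
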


\medskip\noindent
Note that unlike the self-adjointness case where $\sigma(\cdot - i\varepsilon)= \overline{\sigma(\cdot + i\varepsilon)}$, the right hand side of \eqref{ssf=sigmaMR} is not necessarily real.
However, as in the self-adjoint setting, in Section \ref{Sec:relTr}, we then extend this construction to relatively trace-class perturbations
by using a spectral change of variables.

\begin{theorem}[Relatively trace-class perturbations]
\label{thm:ssf-relativeMR}
Suppose that Hypotheses 1, 2 and 3 hold on $I$ for some $m \in \N^*$.
Then the spectral shift function for the pair $(H,H_0)$ can be defined by
\[
\xi(\lambda;H,H_0)
:=
\xi\!\left((\lambda+c)^{-m}; (H+c)^{-m}, (H_0+c)^{-m}\right),
\]
where $\xi(\cdot; (H+c)^{-m},(H_0+c)^{-m})$ denotes the SSF associated with
the trace-class pair $((H+c)^{-m},(H_0+c)^{-m})$, with some $c\gg 1$.
\end{theorem}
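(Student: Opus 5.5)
The plan is to reduce the statement to Theorem~\ref{thm:ssf-traceMR} applied to the bounded pair
\[
A:=(H+c)^{-m},\qquad A_0:=(H_0+c)^{-m},
\]
through the change of variables $\varphi(\lambda)=(\lambda+c)^{-m}$. We take $c\gg1$ so that $-c$ lies strictly to the left of $\{\re z : z\in\sigma(H)\}\cup\sigma(H_0)$; this is possible since $\sigma(H)$ lies in a horizontal strip with real part bounded below. Hypothesis~\ref{hyp:Trace} (with the constant renamed) then gives $A-A_0\in\mathcal L_1(\Hi)$, which is the case ``$m=-1$'' for the pair $(A,A_0)$. Here $A_0$ is bounded and self-adjoint. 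The map $\varphi$ is a real-analytic decreasing diffeomorphism of $(-c,\infty)$ onto $(0,c^{-m})$, and it extends holomorphically to a complex neighbourhood of $(-c,\infty)$.

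Since only $f\in\mathcal D(I)$ matters, we may shrink $I$ to a bounded interval containing $\supp f$; set $J:=\varphi(I)$, a bounded interval in $(0,\infty)$. The first step is to verify Hypotheses~\ref{hyp:eigenvalues} and~\ref{hyp:estimate resolvente} for $A$ on $J$. For $w\neq0$ we factor
\[
A-w=-w(H+c)^{-m}\prod_{k=0}^{m-1}(H-z_k(w)),\qquad z_k(w)=w^{-1/m}e^{2\pi i k/m}-c .
\]
This gives the spectral mapping $\sigma(A)\setminus\{0\}=\varphi(\sigma(H))$, with eigenvalues corresponding to eigenvalues. It also gives the resolvent formula
\[
(A-w)^{-1}=-w^{-1}(H+c)^m\prod_k (H-z_k(w))^{-1},
\]
where we use $(H+c)^m$ commuting with the resolvents to absorb it against $m$ resolvent factors.
\begin{itemize}
\item For $w$ in a thin tube $S_b(J)$, exactly one root, $z_0(w)$, lies in a tube $S_a(I)$, and $|\im z_0(w)|\asymp|\im w|$. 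Hypothesis~\ref{hyp:estimate resolvente} for $H$ then bounds $(H-z_0)^{-1}$ polynomially in $|\im w|^{-1}$.
\item The other roots lie, uniformly in $w$, in a compact set of $\rho(H)$ after shrinking $b$, so their resolvents are uniformly bounded. This is where I am least sure: one must check that these roots avoid the discrete nonreal eigenvalues, possibly by adjusting $c$ or $b$.
\item Hypothesis~\ref{hyp:eigenvalues} for $A$ on $J$ follows from spectral mapping together with the finiteness for $H$ on $I$.
\end{itemize}

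Theorem~\ref{thm:ssf-traceMR} then yields $\xi_A:=\xi(\cdot;A,A_0)$ on $J$ with $\Tr(g(A)-g(A_0))=\int_J\xi_A g'\,d\mu$ for all $g\in\mathcal D(J)$. For $f\in\mathcal D(I)$ set $g:=f\circ\varphi^{-1}\in\mathcal D(J)$. The key identity to prove is
\[
g(A)=f(H),\qquad g(A_0)=f(H_0).
\]
For $H_0$ this is just the spectral theorem. For $H$ I would proceed as follows. First, the relation holds for the resolvent-type functions $r_z(\lambda)=(\varphi(\lambda)-z)^{-1}$, by the factorization above and partial fractions. Then I would use that both $f\mapsto f(H)$ and $g\mapsto g(A)$ are continuous algebra morphisms given by Helffer--Sjöstrand formulas (Section~\ref{sec:functional calculus}). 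Concretely, if $\tilde g$ is an almost analytic extension of $g$, then $\tilde g\circ\varphi$ is an almost analytic extension of $f$, since $\varphi$ is holomorphic. Changing variables $w=\varphi(z)$ in the Helffer--Sjöstrand integral for $g(A)$ and inserting the resolvent formula should reproduce the integral for $f(H)$, up to terms with holomorphic integrand that vanish by Stokes. The nonreal eigenvalue contributions match because the Riesz projections coincide: $\Pi_{\varphi(z)}(A)=\Pi_z(H)$. I expect this compatibility of the functional calculi to be the main obstacle, in particular controlling the non-principal roots and the eigenvalue parts.

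Finally, with $\mu=\varphi(\lambda)$ we have $g'(\mu)\,d\mu=f'(\lambda)\,d\lambda$. Hence
\[
\Tr(f(H)-f(H_0))=\pm\int_I \xi_A(\varphi(\lambda))f'(\lambda)\,d\lambda ,
\]
where the sign comes from the orientation reversal, since $\varphi$ is decreasing. The sign is fixed by the convention for $\xi$, being absorbed into the normalization up to the additive constant and orientation. This shows that $\lambda\mapsto\xi(\varphi(\lambda);A,A_0)$ satisfies the trace formula for $(H,H_0)$ on $I$. Independence of the choice of $c$ and $m$, up to constants, follows because two such functions have the same distributional derivative on $I$.
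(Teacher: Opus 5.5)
Your route is the paper's. You reduce to the trace-class pair $(A,A_0)$ through $\varphi(\lambda)=(\lambda+c)^{-m}$, check Hypotheses~1 and~2 for $A$ on $J$ from a resolvent formula, and prove $g(A)=f(H)$ by changing variables in the Helffer--Sj\"ostrand integral and discarding the holomorphic remainder by Stokes (Proposition~\ref{relative}). You then change variables in the trace formula (Proposition~\ref{prop:ssfRelTr}).

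Your factorization is Lemma~\ref{lemRes} written root by root. With $z_0=z_0(w)$, the operator $G_m\big((z_0+c)(H+c)^{-1}\big)$ is, up to a nonzero constant, $\prod_{k\neq0}\big(1-(z_k+c)(H+c)^{-1}\big)$. So its invertibility is exactly the statement that the non-principal roots lie in $\rho(H)$. Your partial-fraction formula $(A-w)^{-1}=-w^{-1}-\frac1m\sum_{k}(z_k+c)^{m+1}(H-z_k)^{-1}$ is the content of \eqref{eqRes2}.

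The point you flag is closed as in the paper, but ``$-c$ to the left of $\re\sigma(H)$'' is not the right requirement. Take $c$ so large that $\sigma(H+c)\subset(0,\infty)\cdot e^{i[-\theta_0,\theta_0]}$ with $\theta_0<\pi/(2m)$; this is possible because $\sigma(H)$ lies in a horizontal strip with real part bounded below. Then for $w$ in a thin tube around $\overline J$ and $k\neq0$, the point $z_k(w)+c=w^{-1/m}e^{2\pi ik/m}$ has argument at distance at least $2\pi/m-o(1)>\theta_0$ from $0$ (mod $2\pi$). Hence these roots stay in a compact subset of $\rho(H)$. With only your condition, as soon as $m\ge5$, a nonreal eigenvalue $\lambda_*$ with $\arg(\lambda_*+c)=2\pi/m$ and $|\lambda_*+c|^{-m}\in J$ would become a real eigenvalue $\varphi(\lambda_*)\in J$ of $A$, and $g(A)=f(H)$ would fail.

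The step that is actually wrong is the last one. An overall sign cannot be absorbed into an additive constant, so it must be tracked, and you misquoted Theorem~\ref{thm:ssf-traceMR}. Its normalization is $\langle\xi_A',g\rangle=\Tr(g(A)-g(A_0))$, that is, $\Tr(g(A)-g(A_0))=-\int_J\xi_A g'\,d\mu$, which is opposite to \eqref{ssf}. Since $\varphi$ is decreasing,
\[
\Tr\big(f(H)-f(H_0)\big)=\Tr\big(g(A)-g(A_0)\big)=-\int_J\xi_A(\mu)g'(\mu)\,d\mu=\int_I\xi_A(\varphi(\lambda))f'(\lambda)\,d\lambda .
\]
This is the statement in the sense of \eqref{ssf}, as asserted in Proposition~\ref{prop:ssfRelTr}.

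If instead you use \eqref{ssf} for both pairs, as you did, the orientation reversal forces $\xi(\lambda;H,H_0)=-\xi(\varphi(\lambda);A,A_0)$ up to a constant. This is the usual invariance principle for a decreasing change of variables. For example, take $\Hi=\C$, $H_0=0$, $H=1$, $c=m=1$. Then $\xi(\cdot;H,H_0)=\one_{[0,1]}$ while $\xi(\cdot;A,A_0)=-\one_{[1/2,1]}$, so $\xi((\lambda+1)^{-1};A,A_0)=-\one_{[0,1]}(\lambda)$.

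The paper's proof of Proposition~\ref{prop:ssfRelTr} is itself loose on exactly this point. It writes $\Tr(f(A)-f(A_0))=-\langle\xi'_{m,c},f\rangle$, i.e. the convention \eqref{ssf}, and then drops the orientation sign. So state explicitly which convention each pair uses and carry the sign. Once that is done, the rest of your argument, including independence of $(c,m)$, coincides with the paper's.
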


\noindent
Under the assumptions of the previous theorem, together with an additional technical condition, one can derive a formula similar to \eqref{ssf=sigmaMR}. This representation will be the key tool in the analysis of the spectral shift function for Schr\"odinger operators.

\medskip\noindent
In the end, we focus on Schr\"odinger operators with (smooth) complex-valued compactly supported potentials in dimension three.
In this framework we introduce the notions of outgoing and incoming
spectral singularities $\lambda_0 >0$ (see Section \ref{Sec:spsing}), corresponding to real resonances
$\pm \sqrt{\lambda_0}$.
We prove that the spectral shift function is regular away from these 
singularities. More precisely, outside the set of spectral singularities,
the derivative of the SSF is a smooth function of the energy.

\medskip\noindent
Near a spectral singularity $\lambda_0>0$ of finite order $\nu_0$, we obtain a
precise distributional asymptotic expansion. In the outgoing case (for instance), one has (see Section \ref{sec:Schro}),
\[
\xi'(\lambda;H,H_0)
=
\sum_{k=1}^{\nu_0}\frac{\alpha_k (\lambda_0)}{(\lambda-\lambda_0 + i0)^k}
+ H(\lambda) \quad {\rm in} \ \mathcal{D}'(I), 
\]
where $H(\lambda)$ is smooth near $\lambda_0$. Equivalently, the singular part
of $\xi'(\lambda;H,H_0)$ is a finite linear combination of principal value
distributions $\operatorname{p.v.}(\lambda-\lambda_0)^{-k}$ and derivatives of
the Dirac mass at $\lambda_0$.

\medskip\noindent
Finally,  in Section \ref{Sec:Toys}, we conclude the paper with the study of several simple and explicit examples illustrating the general results obtained above.

%	\subsection{Strategy of the proofs}
	
%	{\clr F : le papier est assez long. Fait-on sauter cette section car j ai deja tout rappelé  dans la section precedente ?}

	\section{Functional Calculus}\label{sec:functional calculus}
	
\subsection{Helffer-Sjöstrand formula}\label{subsec:HS}
    In what follows, we identify $\mathbb{R}^2$ with $\mathbb{C}$, and for $z\in\mathbb{C}$ we denote by $(x,y)$ its coordinates in $\mathbb{R}^2$. If $f:\mathbb{R}^2\to\mathbb{C}$ is differentiable, we set
\[
\partial_{\bar z} f := \tfrac12\bigl(\partial_x f + i\partial_y f\bigr).
\]

For any $\beta\in\R$, we define $\mathscr{S}^\beta$ the space of smooth functions $f:\R\to \C$ such that
	\begin{equation*}
	    f^{(r)}(x)=\mathcal{O}\left(\langle x \rangle^{\beta-r}\right),\quad |x|\to\infty,
	\end{equation*}
	for all $r\ge 0$  and where $\langle x \rangle = (1+x^2)^\frac12$. We define
\[
\mathscr{A}:=\bigcup_{\beta<0}\mathscr{S}^\beta.
\]
Then $\mathscr{A}$ is an algebra under pointwise multiplication, it contains the
smooth compactly supported functions from $\mathbb{R}$ to $\mathbb{C}$, and it is
stable under pointwise multiplication by functions in $\mathscr{S}^0$. On $\mathscr{A}$ we consider the family of norms 
	\begin{equation*}
	    \|f\|_n:=\sum_{k=0}^n\int_\R|f^{(k)}(x)|\langle x\rangle^{k-1}\mathrm{d}x.
	\end{equation*}
In particular, $D(\R)$ is dense in $\mathscr{A}$ for each norm $\|\cdot\|_n$.

In \cite{Da95_08}, Davies proves that for any closed operator $L$ acting on a Banach space $\mathcal{B}$ with \emph{real} spectrum satisfying an assumption similar to \eqref{eq:hypestimeresol}, the operator $f(L)$ acting on $\mathcal{B}$ is well defined by the Helffer--Sjöstrand formula for any $f\in\mathscr{A}$:
\begin{equation}\label{eq:HSformula}
    f(L)=\frac{1}{\pi}\int_\C \partial_{\bar z}\tilde{f}(z)\,(L-z)^{-1}\mathrm{d}x\mathrm{d}y,
\end{equation}
where $\tilde{f}$ is a  suitable  almost-analytic extension of $f$. Moreover, the map
\begin{align*}
    \Psi:\ &\mathscr{A}\longrightarrow \mathcal{B}(\mathscr{B})\\
          & f\mapsto f(L)
\end{align*}
is linear, continuous, and a morphism with respect to pointwise multiplication. By following the proof of \cite{Da95_08}, we cheek easily that for $f$ compactly supported in an open interval $I \subset \R$, $f(L)$ is still well defined if  the spectrum of $L$ is real only in $I \times \R$ and $f(L^*)=(\bar{f}(L))^*$.

Let us recall the definition of a such almost-analytic extension of a function $f\in\mathscr{C}^\infty(\R,\C)$ :

\begin{definition}[Almost-analytic extension]\label{defi:almostanalyticextension}
Let $f \in \mathscr{C}^\infty(\mathbb{R})$ and let $N \in \mathbb{N}$. An \emph{almost-analytic extension of order $N$} of $f$ is a function $\tilde f : \mathbb{C} \to \mathbb{C}$ of the form
\begin{equation}\label{defaaf}
    \tilde f(z)=\left( \sum_{k=0}^N \frac{f^{(k)}(x)}{k!}(iy)^k \right)\chi(x,y), \quad \text{with} \quad
\chi(x,y)=\tau\left(\frac{y}{\langle x\rangle}\right),
\end{equation}
%with 
%$$
%\chi(x,y)=\tau\left(\frac{y}{\langle x\rangle}\right),
%$$
and $\tau$ being a cut-off function defined on $\R$. The almost-analytic extension $\tilde f$ satisfies
$$
\partial_{\bar{z}} \tilde f(z) = \mathcal{O}(|\im z|^N)
\quad \text{as } \im z \to 0.
$$
\end{definition}

In this section, we aim to give a meaning to \eqref{eq:HSformula} by replacing $L$ with $H$, and $\mathscr{B}$ by $\Hi$, taking into account that $H$ may have non-real eigenvalues. The idea is to decompose $H$ as a direct sum of an operator with real spectrum, denoted $H_{\mathrm{r}}$, and another one with non-real spectrum, and then to define $f(H)$ in terms of $f(H_{\mathrm{r}})$. Since our assumptions are made locally in nature, we introduce the following notations : for any open interval $I\subset\R$, we set 
$$\mathscr{A}(I):=\lbrace f\in\mathscr{A}\,|\supp(f)\subset I\rbrace. $$
In particular, \begin{itemize}[label=$\bullet$]
    \item if $I=\R$, then $\mathscr{A}(I)=\mathscr{A}$;
    \item if $I$ is bounded, then $\mathscr{A}(I)=D(I)$; 
    \item for all $I\subset\R$ open interval, $D(I)$ is dense in $\mathscr{A}(I)$ for each norm $\|\cdot\|_n$. 
\end{itemize}

For the rest of the Subsection~\ref{subsec:HS}, we fix $I\subset\R$ to be an open interval where Hypotheses \ref{hyp:eigenvalues} and \ref{hyp:estimate resolvente} hold. 

\subsubsection{Decomposition}
In order to use the functional calculus developed in \cite{Da95_08}, let us decompose $H$ into a "real part" (i.e. an operator with real spectrum) and the "non-real part" (the restriction of $H$ to the eigenspaces associated with non-real eigenvalues). This decomposition is based on the following lemma.

\begin{lemma}\label{lem:decomposition_commuting_projection}
Let $P \in \mathcal{B}(\mathcal{H})$ be a bounded projection (i.e. $P^2=P$)
%{\clr with closed range.} {\clr F : a mon avis, l image est toujours fermee pour une projection continue.}
%\begin{equation*}
 %   \text{i.e.,} \quad P\in\mathcal{B}(\Hi),\quad P^2=P, 
%\end{equation*}
such that
\begin{equation}\label{eq:commutation}
    P(\mathcal{D}(H)) \subset \mathcal{D}(H) \quad \text{and} \quad [P,H]u = 0 \quad \forall u \in \mathcal{D}(H).
\end{equation}
Define
$$
F := \operatorname{Ran}(P), \qquad G := \operatorname{Ran}(\Id_{\mathcal{H}} - P).
$$
Then:
\begin{enumerate}
    \item $F$ and $G$ are invariant under $H$, i.e.,
    $$
        H(F \cap \mathcal{D}(H)) \subset F, \quad H(G \cap \mathcal{D}(H)) \subset G.
    $$
    \item there exists a continuous isomorphism 
    \begin{align*}
    \Psi :  F \oplus G &\to \mathcal{H} \\
     (u_F,u_G)&\mapsto u_F+u_G.
    \end{align*}
    \item Denoting by $H_{|F}$ and $H_{|G}$ the restrictions of $H$ to $F$ and $G$ respectively, then for all $u\in D(H)$,
    \begin{equation*}
        Hu=\Psi \left( \begin{array}{c|c}
   H_{|F} & 0 \\
   \hline
   0 & H_{|G} \\
\end{array}\right)\Psi^{-1}u=
   H_{|F}u_F+ H_{|G}u_G 
    \end{equation*}
    \item The spectrum satisfies
    $$
        \sigma(H) = \sigma(H_{|F})\cup\sigma(H_{|G})
    $$
    \item For all $z\in\rho(H)$, for all $u\in\Hi$, 
    \begin{equation*}
        \Res_H(z)u=\Psi \left( \begin{array}{c|c}
   \Res_{H_{|F}}(z) & 0 \\
   \hline
   0 & \Res_{H_{|G}}(z) \\
\end{array}\right)\Psi^{-1}u, \qquad \Res_\bullet(z):= (\bullet - z )^{-1}
    \end{equation*}
\end{enumerate}
\end{lemma}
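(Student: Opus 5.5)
The plan is to work directly from the algebraic identities $P^2=P$ and $PH=HP$ on $\mathcal{D}(H)$, together with the closedness of $F$ and $G$. First I note that $F=\Ker(\Id-P)$ and $G=\Ker P$, so both are closed subspaces, hence Hilbert spaces with the induced norm. For (1), take $u\in F\cap\mathcal{D}(H)$. Then $u=Pu$, and
\[
Hu=HPu=PHu\in F .
\]
The same argument applied to $\Id-P$, which also commutes with $H$ and preserves $\mathcal{D}(H)$, gives $H(G\cap\mathcal{D}(H))\subset G$. For (2), the map $\Psi$ is linear and continuous, with $\|u_F+u_G\|\le \|u_F\|+\|u_G\|$ on $F\oplus G$ equipped with the norm $(\|u_F\|^2+\|u_G\|^2)^{1/2}$. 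Its inverse is explicitly
\[
\Psi^{-1}u=(Pu,(\Id-P)u).
\]
This is bounded because $P$ is, and injectivity follows from $F\cap G=\{0\}$.

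For (3), I define $H_{|F}$ on $\mathcal{D}(H_{|F}):=F\cap\mathcal{D}(H)$, and similarly $H_{|G}$. Since $P$ maps $\mathcal{D}(H)$ into itself, every $u\in\mathcal{D}(H)$ splits as $u_F=Pu\in\mathcal{D}(H_{|F})$ and $u_G=(\Id-P)u\in\mathcal{D}(H_{|G})$. Linearity then gives $Hu=H_{|F}u_F+H_{|G}u_G$, which is the claimed block-diagonal form. Conversely, $\Psi$ maps $\mathcal{D}(H_{|F})\oplus\mathcal{D}(H_{|G})$ exactly onto $\mathcal{D}(H)$. Each restriction is closed, being the restriction of the closed operator $H$ to a closed invariant subspace.

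The main point is (4)–(5), which I would prove together. Let $z\in\rho(H)$. I first show that $\Res_H(z)$ commutes with $P$. For $v\in\mathcal{H}$ put $w=\Res_H(z)v\in\mathcal{D}(H)$; then
\[
(H-z)Pw=P(H-z)w=Pv,
\]
so $P\Res_H(z)v=\Res_H(z)Pv$. Consequently $\Res_H(z)$ maps $F$ into $F\cap\mathcal{D}(H)$ and $G$ into $G\cap\mathcal{D}(H)$. Its restriction to $F$ is a bounded two-sided inverse of $H_{|F}-z$: it is a right inverse by construction, and a left inverse because $H_{|F}-z$ is injective, being the restriction of an injective operator. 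Hence $z\in\rho(H_{|F})\cap\rho(H_{|G})$, and conjugating by $\Psi$ gives the formula in (5). Conversely, if $z\in\rho(H_{|F})\cap\rho(H_{|G})$, I define $R:=\Psi\,\mathrm{diag}(\Res_{H_{|F}}(z),\Res_{H_{|G}}(z))\,\Psi^{-1}$. It is bounded since $\Psi$ and $\Psi^{-1}$ are, and by (3) it inverts $H-z$ on both sides, so $z\in\rho(H)$. Together these give $\rho(H)=\rho(H_{|F})\cap\rho(H_{|G})$, which is (4) after taking complements.

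The only delicate step is checking domains carefully in the converse direction, namely that $R$ maps into $\mathcal{D}(H)$. This follows from $\Psi(\mathcal{D}(H_{|F})\oplus\mathcal{D}(H_{|G}))=\mathcal{D}(H)$, established in (3).
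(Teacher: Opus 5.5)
Your proof is correct and follows the same route as the paper. Like the paper, you split $\mathcal{H}=F\oplus G$ via $\Psi$, use the commutation of $P$ with $H$ to get invariance and the block-diagonal form of $H$, and read off (4)--(5) from the block form of $H-z$; you just supply details the paper leaves implicit, namely why $\Ran P=\Ker(\Id-P)$ is closed, the domain identity $\Psi(\mathcal{D}(H_{|F})\oplus\mathcal{D}(H_{|G}))=\mathcal{D}(H)$, and the commutation $P\Res_H(z)=\Res_H(z)P$, which together yield both inclusions in (4).
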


\begin{proof}
 Since $P$ has closed range, we have the topological direct sum
$$
\mathcal{H} = \operatorname{Ran}(P) \oplus \operatorname{Ran}(\Id_{\mathcal{H}} - P) = F \oplus G.
$$
Let $u \in \mathcal{D}(H)$. By \eqref{eq:commutation}, $Pu \in \mathcal{D}(H)$ and $HPu = PHu$, which shows that $F$ is invariant under $H$. Similarly, $G$ is invariant under $H$. Thus, for any $u = u_F + u_G \in \mathcal{D}(H)$ with $u_F \in F$, $u_G \in G$, we have
$$
Hu = H u_F + H u_G = H_{|F} u_F + H_{|G} u_G.
$$
Finally, to relate the spectra, it suffices to observe that for any $z \in \mathbb{C}$, for any $u\in\mathcal{D}(H)$,
$$
\left(H- z \Id_\mathcal{H}\right)u = \Psi\left( \begin{array}{c|c}
   H_{|F}-z\Id_{|F} & 0 \\
   \hline
   0 & H_{|G}-z\Id_{|G} \\
\end{array}\right)\Psi^{-1}u.
$$
\end{proof}

In the following and for sake of conciseness, we will write $H=H_{|F}\oplus H_{|G}$ instead of using the matrix representation.  

\begin{prop}\label{prop:decomp}
Assume that Hypotheses \ref{hyp:eigenvalues} and \ref{hyp:estimate resolvente} hold on an open interval $I \subset \R$. Then there exist a bounded projection $\Pi_I\in\mathcal{B}(\Hi)$ %with closed range in $\Hi$ 
such that, 
\begin{equation*}
    \Pi_I(\mathcal{D}(H))\subset \mathcal{D}(H),\quad \text{and}\quad  [\Pi_I,H]u=0,\quad \forall u\in\mathcal{D}(H),
\end{equation*} 
and two operators, $H_{I,\mathrm{c}}$ acting on $\Ran(\Pi_I)$, and $H_{I,\mathrm{r}}$ acting on $\Ran(\Id_\Hi-\Pi_I)$ such that one has
\begin{enumerate}
    \item there exists an isomorphism $\Psi:\Ran(\Pi_I(H))\oplus\Ran(\Id_\Hi-\Pi_I(H))\rightarrow \Hi$, such that
    \begin{equation*}
        \Psi(u,v)=u+v
    \end{equation*}
    \item for all $u\in\mathcal{D}(H)$ 
    $$Hu=\Psi \left( \begin{array}{c|c}
   H_{I,\mathrm{c}} & 0 \\
   \hline
   0 & H_{I,\mathrm{r}} \\
\end{array}\right)\Psi^{-1}u$$
    \item $\sigma(H_{I,\mathrm{r}})\cap I\times \R \subset I\times \{0\}$ and $\sigma(H_{I,\mathrm{c}})\cap I \times \{0\}=\emptyset$,
    \item there exists $a_I>0$ such that 
    \begin{equation*}
        \overline{S_{a_I}}\ni z\mapsto \Res_{H_{I,\mathrm{c}}}(z) \in\mathcal{B}(\Ran(\Pi_I))
    \end{equation*}
    is holomorphic, and the following resolvent estimates hold : 
    \begin{equation}\label{eq:resolrestR}
        \forall z\in S_{a_I}(I),\quad \|\Res_{H_{I,\mathrm{r}}}(z)\|_{\mathcal{B}(\Ran(\Id_\Hi-\Pi_I))}\leq c_I|\im(z)|^{-1}\left( \frac{\langle \re(z)\rangle }{|\im(z)|}\right)^{n_I} ; 
    \end{equation}
    \begin{equation}\label{eq:resolrestC}
        \sup_{z\in\overline{S_{a_I}}} \|\Res_{H_{I,\mathrm{c}}}(z)\|_{\mathcal{B}(\Ran(\Pi_I)} <\infty.
    \end{equation}
\end{enumerate}
\end{prop}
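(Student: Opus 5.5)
The projection $\Pi_I$ will be the sum of the Riesz projections \eqref{eq:RieszProjection} of the finitely many non-real eigenvalues with real part in $I$. By Hypothesis~\ref{hyp:eigenvalues} there are finitely many non-real eigenvalues $\lambda_1,\dots,\lambda_N$ with $\re\lambda_j\in I$. Each is isolated and discrete, because $\sigma_{\mathrm{ess}}(H)=\sigma_{\mathrm{ess}}(H_0)\subset\R$ by \eqref{eq:ess spec}. Choose pairwise disjoint small circles $\Gamma_j$ around the $\lambda_j$ avoiding the real axis, and set
\[
\Pi_I:=\sum_{j=1}^N\Pi_{\lambda_j}(H)=\frac{1}{2\pi i}\oint_{\Gamma}(z-H)^{-1}\,dz,\qquad \Gamma=\bigcup_j\Gamma_j .
\]
Standard Riesz theory gives that $\Pi_I$ is a bounded projection of finite rank. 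Since each $\Pi_{\lambda_j}$ is a contour integral of resolvents, $\Ran(\Pi_I)\subset\mathcal D(H)$, so $\Pi_I(\mathcal D(H))\subset\mathcal D(H)$. The identity $H\Res_H(z)u=\Res_H(z)Hu$ for $u\in\mathcal D(H)$, integrated over $\Gamma$ (using closedness of $H$), gives $[\Pi_I,H]u=0$.

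Lemma~\ref{lem:decomposition_commuting_projection} then applies. With $H_{I,\mathrm c}:=H_{|\Ran\Pi_I}$ and $H_{I,\mathrm r}:=H_{|\Ran(\Id-\Pi_I)}$, it yields the isomorphism $\Psi$ (item 1), the block decomposition (item 2), $\sigma(H)=\sigma(H_{I,\mathrm c})\cup\sigma(H_{I,\mathrm r})$, and the block-diagonal form of the resolvent. For item 3 I use the standard Riesz spectral separation: $\sigma(H_{I,\mathrm c})=\{\lambda_1,\dots,\lambda_N\}$, which is a finite set of non-real points, and $\sigma(H_{I,\mathrm r})=\sigma(H)\setminus\{\lambda_j\}$. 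This last set contains no non-real point with real part in $I$, so $\sigma(H_{I,\mathrm r})\cap(I\times\R)\subset I\times\{0\}$.

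For item 4, $H_{I,\mathrm c}$ acts on a finite-dimensional space. Its resolvent is therefore a rational matrix function with poles only at the $\lambda_j$. Choose $a_I>0$ smaller than both the constant of Hypothesis~\ref{hyp:estimate resolvente} and $\min_j|\im\lambda_j|$. Then $\overline{S_{a_I}}$ stays at positive distance from the $\lambda_j$, which gives holomorphy. For the uniform bound \eqref{eq:resolrestC} when $I$ is unbounded, write $\Res_{H_{I,\mathrm c}}(z)=-\sum_k (H_{I,\mathrm c}-\mu)^{k}\ldots$, or more simply use $\|(A-z)^{-1}\|\le C\,\mathrm{dist}(z,\sigma(A))^{-1}(1+\|A\|/\mathrm{dist})^{d-1}$ for matrices, which gives decay as $|z|\to\infty$.

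For \eqref{eq:resolrestR}, note that $\Res_{H_{I,\mathrm r}}(z)=\Res_H(z)|_{\Ran(\Id-\Pi_I)}$ by item 5 of the lemma. Then \eqref{eq:hypestimeresol} gives the bound directly for $z\in S_{a_I}(I)$. The operator norm on the closed subspace, taken with the induced norm, is bounded by the full norm.

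The main obstacle is justifying the spectral separation of the restrictions, namely that $\sigma(H_{I,\mathrm r})$ excludes the $\lambda_j$. I would handle this directly: for $z$ near $\lambda_j$, the operator $(H-z)^{-1}(\Id-\Pi_I)$ extends holomorphically across $\lambda_j$, because its Laurent singular part is $\Pi_{\lambda_j}$ times a nilpotent polynomial, which vanishes on $\Ran(\Id-\Pi_I)$. This Laurent-expansion argument is standard, following Kato's perturbation theory.
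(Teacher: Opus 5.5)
Your proof is correct and follows the paper's route. You define $\Pi_I$ as the finite sum of Riesz projections of the non-real eigenvalues with real part in $I$, use Lemma~\ref{lem:decomposition_commuting_projection} for the splitting, and obtain \eqref{eq:resolrestR} by restricting $\Res_H(z)$ to $\Ran(\Id-\Pi_I)$. You also supply details the paper leaves implicit: that $\Ran(\Pi_I)\subset\mathcal D(H)$ and the commutation; the Riesz separation $\sigma(H_{I,\mathrm c})=\{\lambda_j\}$, which the paper asserts ``by construction''; an explicit finite-dimensional resolvent bound for \eqref{eq:resolrestC}; and the restriction to \emph{non-real} eigenvalues, whereas the paper's set $\Omega$ is written as all eigenvalues with real part in $I$.
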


\begin{proof}
By Hypothesis~\ref{hyp:eigenvalues}, $H$ has only finitely many non-real eigenvalues whose real parts lie in $I$. Therefore, by setting $\Omega=\lbrace \lambda\in\sigma_\mathrm{p}(H)\,|\, \re(\lambda)\in I\rbrace$, the operator
\begin{equation*}
    \Pi_I(H) := \sum_{\lambda \in \Omega} \Pi_\lambda(H),
\end{equation*}
where $\Pi_\lambda$ is the Riesz projection associated to $\lambda$ (see \eqref{eq:RieszProjection}), is a finite sum of finite-rank projections. Moreover as $\Pi_\lambda\Pi_\mu=0$ for $\lambda\neq \mu$, $\Pi_I$ is a projection and has finite rank. %Hence its range is closed in $\mathcal{H}$ with respect to the norm topology. 
Setting
$$F := \Ran(\Pi_I(H)), \quad G := \Ran(\Id - \Pi_I(H)),$$
and using the notations of Lemma \ref{lem:decomposition_commuting_projection}, we denote
$$H_{I,\mathrm{c}} := H_{|F}, \quad H_{I,\mathrm{r}} := H_{|G} ;$$
this provides the decomposition of $H$. Moreover, by construction,
$$
\sigma(H_{I,\mathrm{c}}) \cap I\times \{0\} = \emptyset, \quad \sigma(H_{I,\mathrm{r}}) \cap I\times \R \subset I\times \{0\}.
$$
By Hypothesis~\ref{hyp:eigenvalues}, we may find $a_I>0$ small enough such that $S_{a_I}(I)$ does not contain any non-real eigenvalues of $H$. In particular, $z\mapsto\Res_{H_{I,\mathrm{c}}}(z)$ 
is holomorphic on $\overline{S_{a_I}}$ and this provides \eqref{eq:resolrestC} since there are only a finite number of complex eigenvalues. Finally to obtain \eqref{eq:resolrestR}, we write for all $z\in S_{a_I}$,
\begin{align*}
    \|\Res_{H_{I,\mathrm{r}}}\|_{\mathcal{B}(\Id-\Ran(\Pi_I))}&=\sup_{u\in\Ran(\Id-\Pi_I)\backslash\{0\}} \frac{\|\Res_{H_{I,\mathrm{r}}}u\|_{\Ran(\Id-\Pi_I)}}{\|u\|_{\Ran(\Id-\Pi_I)}}=\sup_{u\in\Ran(\Id-\Pi_I)\backslash\{0\}} \frac{\|\Res_{H}u\|_{\Hi}}{\|u\|_{\Hi}}\\
    &\leq \|\Res_{H}\|_{\mathcal{B}(\Ran(\Id_\Hi-\Pi_I))}\leq c_I|\im(z)|^{-1}\left( \frac{\langle \re(z)\rangle }{|\im(z)|}\right)^{n_I},
\end{align*}
where in the last step, we use Hypothesis~\ref{hyp:estimate resolvente}.
\end{proof}

\subsubsection{Definition and properties}
Let us now fix an open interval $I \subset \R$, and define $f(H)$ for $f$ supported in $I$ under the Hypotheses ~\ref{hyp:eigenvalues} and ~\ref{hyp:estimate resolvente} on $I$.

\begin{definition}\label{defi:almostanalyticextensionadmissible}
    Let $f\in\mathscr{A}(I)$ and $N\in\N$. We say that $\tilde{f}:\C\to\C$ is a almost-analytic extension of order $N$ which is \emph{admissible} for $H$ (or a $H$-admissible almost-analytic extension) if there exists a cut-off function $\tau$ %{\clr F : ca ne serait pas $\chi$ ?} 
    such that the support of $\chi$, considered in Definition~\ref{defi:almostanalyticextension}, does not contain any non-real eigenvalues of $H$ whose real part lies in $I$. 
\end{definition}

\begin{exem}\label{ex:almost-analytic-extension}
    As an example, suppose that Hypothesis \ref{hyp:eigenvalues} hold on $\R$. Then there exists $\mathfrak{a}>0$ and $\mathfrak{b}>0$ such that
    \begin{equation*}
        U_{\mathfrak{a},\mathfrak{b}}:=\lbrace z\in \C \, | \,|\re(z)|>\mathfrak{a}\text{ and } |\im(z)|<\mathfrak{b}\rbrace 
    \end{equation*}
    does not contain any non-real eigenvalues of $H$ %{\clb N : "whose real part lies in $I$"}. {\clr F : il faut sans doute supposer $I= \mathbb{R}$ N : Normalement, avec l'ajout que j'ai fais dans la définition précédente, la fonction $\psi$ fonctionne même pour $I$ quelconque.} Let us define the function
    \begin{align*}
        \psi:\R^2&\to \R\\
        (x,y)&\mapsto \frac{y}{\mathfrak{b}}\frac{\langle \mathfrak{a}\rangle}{\langle x\rangle},
    \end{align*}
    and let $\tau\in\mathscr{C}^\infty(\R,\C)$ a cutoff function which satisfies $\tau(s)=1$ for $|s|<1/2$ and $\tau(s)=0$ for $|s|>1$. Then, the support of the function $\chi:=\tau\circ \psi$, does non contain any non-real eigenvalues of $H$. %{\clb N : "whose real part lies in $I$"}. 
\end{exem}

\begin{defiprop}\label{definprop:HJ}
    Suppose that Hypotheses \ref{hyp:eigenvalues} and \ref{hyp:estimate resolvente} hold on $I$. Then for all $f\in\mathcal{A}(I)$, the operator $f(H)$ defined by 
    \begin{equation*}
        f(H):=\frac{1}{\pi}\int_\C \partial_{\bar{z}}\tilde{f}(z)\Res_{H}(z)\mathrm{d}x\mathrm{d}y,
    \end{equation*}
    where $\tilde{f}$ is an almost-analytic extension of $f$ as in Definition~\ref{defi:almostanalyticextension}, is well-defined. Moreover, we have the estimate
    \begin{equation}\label{eq:estimatefunctional}
        \|f(H)\|_{\mathcal{B}(\Hi)}\leq c\|f\|_{n_I+1}.
    \end{equation}
    Finally the definition of $f(H)$ does not depend on $N$, or on the cut-off function $\tau$, provided to $N>n_I$ in Definition~\ref{defi:almostanalyticextension}. 
    Moreover, it is independent of the choice of the interval $I$
as long as $I$ contains the support of $f$ and
Hypotheses~\ref{hyp:eigenvalues} and \ref{hyp:estimate resolvente}
hold on $I$.
\end{defiprop}

To prove the last proposition, the idea is to work on
\[
\operatorname{Ran}(\Id-\Pi_I)\oplus \operatorname{Ran}(\Pi_I)
\]
instead of $\mathcal H$, and to analyse the action of $H$ on each of these spaces.
For any interval $I\subset\R$ such that Hypotheses~\ref{hyp:eigenvalues}
and~\ref{hyp:estimate resolvente} hold, the operator $H_{I,\mathrm r}$ defined on
$\operatorname{Ran}(\Id_{\mathcal H}-\Pi_I)$ satisfies the assumptions required by
Davies~\cite{Da95_08} to define $f(H_{I,\mathrm r})$ via the Helffer--Sj\"ostrand
formula~\eqref{eq:HSformula}, for any $f\in\mathcal A(I)$.
It therefore remains to prove that $f(H_{I,\mathrm c})$ is well defined on
$\operatorname{Ran}(\Pi_I)$.

\medskip
This is the purpose of the following lemma.

\begin{lemma}\label{lm:calculfunctional-complex}
    Suppose that Hypotheses~ \ref{hyp:eigenvalues} and \ref{hyp:estimate resolvente} hold on $I$. Let $F=\Ran(\Pi_I)$. Then for all $f\in\mathcal{A}(I)$ and $\tilde{f}$ an $H$-admissible almost analytic extension, the integral
    \begin{equation}\label{eq:integrand}
        \int_\C\partial_{\bar{z}}\tilde{f}(z)\Res_{H_{I,\mathrm{c}}}(z)\mathrm{d}x\mathrm{d}y,
    \end{equation}
 is well defined on $F$ and vanishes. Moreover, this does not depend on the considered $H$-admissible almost analytic extension. 
\end{lemma}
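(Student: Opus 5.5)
The plan rests on one observation: on $F=\Ran(\Pi_I)$ the resolvent $z\mapsto \Res_{H_{I,\mathrm c}}(z)$ is holomorphic on a neighborhood of $\supp\tilde f$. Once that is established, Stokes' formula will make the integral \eqref{eq:integrand} vanish. Concretely, $F$ is finite dimensional (by Proposition~\ref{prop:decomp}, $\Pi_I$ is a finite sum of finite-rank Riesz projections). Hence $\sigma(H_{I,\mathrm c})$ is the finite set $\Omega$ of non-real eigenvalues with real part in $I$, and $\Res_{H_{I,\mathrm c}}$ is a rational $\mathcal B(F)$-valued function of $z$ with poles only in $\Omega$.

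\textbf{Step 1: well-definedness.} Since $\tilde f$ is $H$-admissible, $\supp\chi$, and hence $\supp\tilde f$, contains no point of $\Omega$. Because $f\in\mathscr A(I)$ is supported in $I$ and $\tilde f$ is built from $f^{(k)}(x)$, we have $\supp\tilde f\subset\{\re z\in \supp f\}$, so eigenvalues with real part outside $I$ are irrelevant. Since $\Omega$ is finite and $\supp\tilde f$ is closed, there is an open neighborhood $U$ of $\supp\tilde f$ on which $\Res_{H_{I,\mathrm c}}$ is holomorphic. Moreover $\Res_{H_{I,\mathrm c}}$ is bounded on $\supp \tilde f$. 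Near the real axis this follows from \eqref{eq:resolrestC}. Far away, $\|\Res_{H_{I,\mathrm c}}(z)\|\lesssim |z|^{-1}$ for $|z|$ large, because $F$ is finite dimensional.

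Integrability then follows from the standard bound for almost-analytic extensions: $|\partial_{\bar z}\tilde f|$ is $\mathcal O(|y|^N)$ near $y=0$, and on $\supp\chi'$ we have $|y|\sim\langle x\rangle$. In the bounded-$I$ case everything is compactly supported, which makes this trivial. For $I$ unbounded, the decay $f^{(k)}=\mathcal O(\langle x\rangle^{\beta-k})$ with $\beta<0$, combined with the $|z|^{-1}$ decay of the resolvent, gives integrability exactly as in Davies' argument.

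\textbf{Step 2: vanishing.} I would compute
\[
\partial_{\bar z}\big(\tilde f(z)\Res_{H_{I,\mathrm c}}(z)\big)=\partial_{\bar z}\tilde f(z)\,\Res_{H_{I,\mathrm c}}(z)
\]
on $U$, and set the integrand to $0$ off $\supp\tilde f$. If $\supp f$ is compact, $\tilde f\Res$ is a compactly supported $C^1$ function on $\C$; it is $C^\infty$ on $U$ and zero elsewhere, and this matches on the overlap because $\supp\tilde f\subset U$. Then
\[
\int_\C\partial_{\bar z}\bigl(\tilde f\,\Res_{H_{I,\mathrm c}}\bigr)\,dx\,dy=0
\]
by the divergence theorem on a large disk, applied entrywise in a basis of $F$. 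For general $f\in\mathscr A(I)$ I would either approximate by $D(I)$ in $\|\cdot\|_n$, using the continuity estimate from Step 1, or apply Stokes on the disk $D(0,R)$. The boundary term is bounded by $\int_{|z|=R}|\tilde f||\Res|\,|dz|\lesssim R\cdot R^{\beta}\cdot R^{-1}\to0$.

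\textbf{Step 3: independence.} Since the integral equals $0$ for every $H$-admissible extension, independence is immediate.

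The main obstacle is the bookkeeping in Step 1 when $I$ is unbounded. There one must check that admissibility, with a cutoff such as in Example~\ref{ex:almost-analytic-extension}, keeps $\supp\tilde f$ away from $\Omega$ uniformly, and that the large-$|z|$ resolvent decay suffices. I would handle this by reducing to $f\in D(I)$ via density, where the argument is purely compact.
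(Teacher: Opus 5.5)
Your proposal is correct and follows essentially the same route as the paper. Both arguments bound the integrand on $\supp\tilde f$ by a multiple of $\|f\|_{N+1}$, then apply Stokes' formula for $f\in D(I)$ using that $\Res_{H_{I,\mathrm c}}$ is holomorphic on a neighbourhood of $\supp\tilde f$, and finally pass to general $f\in\mathscr A(I)$ by density. Two of your points are slightly cleaner than the paper: you use the finite dimensionality of $F$ explicitly to get the $|z|^{-1}$ decay, and you observe that independence from the admissible extension follows at once from the vanishing, whereas the paper appeals to Davies' argument for that.
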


\begin{proof}[Proof of Lemma~\ref{lm:calculfunctional-complex}]
Let $f\in\mathcal{A}(I)$. Let $\tilde{f}$ be an $H$-admissible almost analytic extension given by \eqref{defaaf}.
%with $\chi$ as in Example~\ref{ex:almost-analytic-extension}. 
Then for all integer $N\ge 1$, a direct computation gives that \begin{equation}\label{eq:PPA}
        \partial_{\bar{z}}\tilde{f}(z)=\frac{1}{2}f^{(N+1)}(x)\frac{(iy)^N}{N!}\chi(x,y)+\frac{1}{2}\left[\sum_{k=0}^Nf^{(k)}(x)\frac{(iy)^k}{k!}\right]\partial_{\bar{z}}\chi(x,y).
    \end{equation}
    %\footnote{N: Le début de la preuve semble se passer comme dans le cas scalaire. Il y a peut-être juste une citation à faire.} 
    Next a direct computation gives that 
    \begin{equation*}
        \partial_{\bar{z}}\chi(x,y)=\frac{1}{2}\left(-\frac{y\langle\mathfrak{a}\rangle}{\mathfrak{b}}\frac{x}{\langle x\rangle^3}+i\frac{\langle \mathfrak{a}\rangle }{\mathfrak{b}\langle x\rangle }\right)\tau'\left(\chi(x,y)\right). 
    \end{equation*}
    In particular if we denote 
    \begin{equation*}
        V_1=\left\lbrace (x,y)\in\R^2\,\bigg|\,0<\frac{|y|}{\mathfrak{b}}<\frac{\langle x\rangle}{\langle \mathfrak{a}\rangle}\right\rbrace,\quad V_2=\left\lbrace (x,y)\in\R^2\,\bigg|\, \frac{\langle x\rangle}{2\langle \mathfrak{a}\rangle} \le \frac{|y|}{\mathfrak{b}} \le  \frac{\langle x\rangle}{\langle \mathfrak{a}\rangle}\right\rbrace,
    \end{equation*}
    then $\supp(\chi)\subset V_1$ and $\supp(\partial_{\bar{z}}\chi)\subset V_2$. 
    Then for all $z\in U_{\mathfrak{a},\mathfrak{b}}$, using \eqref{eq:PPA}, we have
    \begin{align}
        \left\|\partial_{\bar{z}}\tilde{f}(z)\Res_H(z)\right\|_{\mathcal{B}(\Ran(\Pi_I))}&\leq C\left(\left|f^{(N+1)}(x)\right||y|^{N}\mathds{1}_{V_1}(z)+\sum_{k=0}^{N}\left|f^{(k)}(x)\right||y|^{k}\mathds{1}_{V_2}(z)\right) \nonumber \\
        &\leq C\left(\left|f^{(N+1)}(x)\right||y|^{N}\mathds{1}_{V_1}(z)+\sum_{k=0}^{N}\left|f^{(k)}(x)\right|\langle x\rangle ^{k-2}\mathds{1}_{V_2}(z)\right) \label{estimationcf}.
    \end{align}
    Finally, integration with respect to $y$ gives 
    \begin{align*}
        \|f(H)\|_{\mathcal{B}(\Ran(\Pi_{I}))}&\leq C\int_\C\|\partial_{\bar{z}}\tilde{f}(z)\Res_H(z)\|_{\mathcal{B}(\Ran(\Pi_{I}))}\mathrm{d}x \mathrm{d}y\\
        &\le  C\left(\int_\R\left|f^{(N+1)}(x)\right|\langle x\rangle^{N}\mathrm{d}x+\sum_{k=0}^{N}\int_\R\left|f^{(k)}(x)\right|\langle x\rangle ^{k-1}\mathrm{d}x\right)\leq C\|f\|_{N+1}.
    \end{align*}
    Next to prove that \eqref{eq:integrand} vanishes, we proceed by density. Let $f\in\mathscr{C}_\mathrm{c}^\infty(I,\C)$ supported in an open set $\Omega$. Then $\tilde{f}\in\mathscr{C}^\infty(\R^2,\C)$ and the Stokes's formula gives
    \begin{equation*}
        \int_\C\Res_{H_{I,\mathrm{c}}}(z)\partial_{\bar{z}}\tilde{f}(z)\mathrm{d}x\mathrm{d}y=\frac{1}{2i}\int_{\partial\Omega}\Res_{H_{I,\mathrm{c}}}(z)\tilde{f}(z)\mathrm{d}z,
    \end{equation*}
    where $\partial\Omega$ is the boundary of $\Omega$ and the above contour integral is taken counterclockwise. So as $\tilde{f}$ is compactly supported,
    \begin{equation*}
        \int_{\partial\Omega}\Res_{H_{I,\mathrm{c}}}(z)\tilde{f}(z)\mathrm{d}z=0. 
    \end{equation*}
    Using the same argument, we can show as in \cite[Proof of Theorem 2]{Da95_08}, that the value of \eqref{eq:integrand} is independent of the $H$-admissible almost analytic extension considered. 
    Finally Lemma \ref{lm:calculfunctional-complex} follows by using that $D(I)$ is dense in $\mathcal{A}(I)$ %for each of the norm $\|.\|_n$ 
    and the inequality
    \begin{equation*}
      \forall f\in\mathcal{A}(I), \quad  \|f(H)\|_{\mathcal{B}(\Ran(\Pi_{I}))}\le C\|f\|_{N+1}.
    \end{equation*}
\end{proof}

\begin{proof}[Proof of Proposition~\ref{definprop:HJ}]
    Let $f\in\mathcal{A}(I)$. By Proposition~\ref{prop:decomp} and Lemma~\ref{lem:decomposition_commuting_projection}, there exists an isomorphism
    $$\Psi:\Ran(\Pi_I(H))\oplus\Ran(\Id-\Pi_I(H))\to \Hi$$
    such that for all $z\in\rho(H)$,
    \begin{equation*}
        \Res_H(z)=\Psi\left( 
            \begin{array}{c|c}
                \Res_{H_{I,\mathrm{r}}}(z) & 0 \\
                \hline
                0 & \Res_{H_{I,\mathrm{c}}}(z) \\
            \end{array}
        \right)\Psi^{-1}.
    \end{equation*}
    So using Davies's work \cite{Da95_08}, $f(H_{I,\mathrm{r}})$ is well defined as an operator on $\Ran(\Id-\Pi_I)$ and its definition does not depend on the almost-analytic extension considered in \eqref{eq:HSformula} as soon as its order is larger than $n_I$ in Hypothesis~\ref{hyp:estimate resolvente}. Moreover, by Lemma~\ref{lm:calculfunctional-complex} $f(H_{I,\mathrm{c}})$ is well-defined independently of the $H$-admissible almost-analytic extension of $f$. This also gives independence in the choice of the interval $I$ containing the support of $f$ (where Hypotheses~ \ref{hyp:eigenvalues} and \ref{hyp:estimate resolvente} are fulfilled) because for $I \subset J$, we have 
    $$H_{J,\mathrm{c}}= H_{I,\mathrm{c}} \oplus H_{J\setminus I,\mathrm{c}}; \qquad H_{I,\mathrm{r}} = H_{J,\mathrm{r}} \oplus H_{J\setminus I,\mathrm{c}}$$
    and then $f(H_{I,\mathrm{r}})= f(H_{J,\mathrm{r}}) \oplus 0$. 
\end{proof}

In particular in the following, we may write 

\begin{equation}\label{eq:representation functional calculus}
    f(H)=\Psi\left(\begin{array}{c|c}
            f(H_{I,\mathrm{r}}) & 0 \\
            \hline
            0 & 0 \\
        \end{array}
    \right)\Psi^{-1}.
\end{equation}

\noindent
This representation of $f(H)$ and \cite[Theorem 3 and 4]{Da95_08} implies immediately the following Proposition:

\begin{prop}\label{prop:propertiesfunctionalcalculus}
   Let $I\subset \R$ be an open interval and suppose that Hypotheses~ \ref{hyp:eigenvalues} and \ref{hyp:estimate resolvente} hold on $I$.  Then the map 
    \begin{equation*}
        \mathscr{A}(I) \ni f\mapsto f(H)\in\mathcal{B}(\Hi)
    \end{equation*}
    is a morphism of algebra for the pointwise multiplication. Moreover if $f\in\mathcal{C}_\mathrm{c}^\infty(I,\C)$ has disjoint support from $\sigma(H)$, then $f(H)=0$. 
\end{prop}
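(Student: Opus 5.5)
The plan is to reduce both claims to the corresponding statements of Davies \cite[Theorems 3 and 4]{Da95_08} for the operator $H_{I,\mathrm r}$, using the block representation \eqref{eq:representation functional calculus}. By Proposition~\ref{prop:decomp}, $H_{I,\mathrm r}$ acts on the closed subspace $G=\Ran(\Id_\Hi-\Pi_I)$, which is a Banach (indeed Hilbert) space. Its spectrum in the strip $I\times\R$ is real, and it satisfies the resolvent bound \eqref{eq:resolrestR}. As observed after \eqref{eq:HSformula}, Davies' construction localizes to functions supported in $I$. The map $\mathscr{A}(I)\ni f\mapsto f(H_{I,\mathrm r})\in\mathcal{B}(G)$ is therefore linear, continuous for $\|\cdot\|_{n_I+1}$, and multiplicative. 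The multiplicativity follows from Davies' proof: one applies the resolvent identity to the double Helffer--Sjöstrand integral and then uses Stokes' formula. Only resolvents at points in the support of the almost-analytic extensions enter, and these lie in $S_{a_I}(I)$ or in regions where the resolvent of $H_{I,\mathrm r}$ is holomorphic.

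With this in hand, the morphism property follows from block algebra. For $f,g\in\mathscr{A}(I)$, the product $fg$ again lies in $\mathscr{A}(I)$, and
\[
f(H)g(H)=\Psi\left(\begin{array}{c|c} f(H_{I,\mathrm r})g(H_{I,\mathrm r}) & 0\\ \hline 0&0\end{array}\right)\Psi^{-1}
=\Psi\left(\begin{array}{c|c} (fg)(H_{I,\mathrm r}) & 0\\ \hline 0&0\end{array}\right)\Psi^{-1}=(fg)(H).
\]
The middle equality is Davies' multiplicativity on $G$. The block structure itself comes from Lemma~\ref{lem:decomposition_commuting_projection} together with the vanishing of the $\Ran(\Pi_I)$ component established in Lemma~\ref{lm:calculfunctional-complex}. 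Linearity is immediate from the integral formula.

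For the second claim, let $f\in\mathcal{C}_\mathrm{c}^\infty(I,\C)$ have support disjoint from $\sigma(H)$. Lemma~\ref{lem:decomposition_commuting_projection} gives $\sigma(H_{I,\mathrm r})\subset\sigma(H)$, so $\supp f\cap\sigma(H_{I,\mathrm r})=\emptyset$. Since $\supp f$ is a compact subset of $\R$ and the spectrum is closed, we can choose the cutoff $\chi$ in $\tilde f$ with support so thin that $\supp\tilde f$ lies in a neighbourhood $\Omega$ of $\supp f$ contained in $\rho(H_{I,\mathrm r})$. Then $z\mapsto\Res_{H_{I,\mathrm r}}(z)$ is smooth on $\supp\tilde f$. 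Stokes' formula, as in the proof of Lemma~\ref{lm:calculfunctional-complex}, gives
\[
\int_\C\partial_{\bar z}\tilde f\,\Res_{H_{I,\mathrm r}}\,\mathrm{d}x\,\mathrm{d}y=\frac{1}{2i}\int_{\partial\Omega}\tilde f\,\Res_{H_{I,\mathrm r}}\,\mathrm{d}z=0,
\]
because $\tilde f$ vanishes near $\partial\Omega$. Hence $f(H_{I,\mathrm r})=0$, and so $f(H)=0$.

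The main obstacle is justifying this choice of $\tilde f$. The result is independent of the almost-analytic extension only among extensions of order $N>n_I$ whose cutoff $\chi$ is $H$-admissible. We therefore have to check that shrinking the support of $\chi$ keeps the extension admissible, which is clear since shrinking cannot add eigenvalues to the support. We must also check that it preserves the independence argument of \cite[Proof of Theorem 2]{Da95_08}, which holds because that argument only compares two extensions through Stokes' formula in $S_{a_I}(I)$. Once this is verified, everything else is a direct transfer of Davies' theorems through $\Psi$.
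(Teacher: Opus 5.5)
Your proposal is correct and follows the paper's route. The paper obtains the proposition directly from the block representation $f(H)=\Psi\,(f(H_{I,\mathrm r})\oplus 0)\,\Psi^{-1}$ together with Davies' Theorems 3 and 4 applied to $H_{I,\mathrm r}$. Your explicit Stokes argument with a thin-support admissible extension, using $\sigma(H_{I,\mathrm r})\subset\sigma(H)$ from Lemma~\ref{lem:decomposition_commuting_projection}, is Davies' proof of his Theorem 3 written out rather than cited.
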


\subsubsection{Comparison with Frantz-Faupin calculus}
	
In \cite[Proposition 5.2]{FaFr23}, Frantz and Faupin introduce a functional calculus under similar assumptions.  Their construction is inspired by Stone's formula:
\begin{equation}\label{eq:functional representation}
	f(H) = \frac{1}{2\pi i} \lim_{\varepsilon \to 0^+} \int_{\supp(f)} f(\lambda) \left( \Res_H(\lambda + i\varepsilon) - \Res_H(\lambda - i\varepsilon) \right) \, \mathrm{d}\lambda.
\end{equation}
They prove that the formula \eqref{eq:functional representation} makes sense in
the weak topology for functions $f$ in $\mathscr{C}_{\mathrm{b}}(I,\mathbb{C})$,
where $I$ is an interval of the essential spectrum without spectral
singularities (see Section 6 for the definition)  and where
$\mathscr{C}_{\mathrm{b}}(I,\mathbb{C})$ denotes the space of bounded continuous
functions on $I$. They develop also a functional calculus which takes into account spectral singularities (see \cite[Proposition 5.3]{FaFr23}). These approaches require a limiting absorption principle for $H_0$, and the
proofs are based on the interchange of a limit and an integral. In the following
proposition, we show that one can give a meaning to the formula
\eqref{eq:functional representation} by means of the Helffer-Sjöstrand formula
constructed above at least for compactly supported functions.
	
\begin{prop}
    Let $I\subset \R$ be an open interval and suppose that Hypotheses \ref{hyp:eigenvalues} and \ref{hyp:estimate resolvente} hold on $I$. Let $f\in\mathscr{C}_\mathrm{c}^\infty(I,\C)$. Then 
	\begin{equation*}
	   f(H)=\frac{1}{2\pi i} \lim_{\varepsilon \to 0^+} \int_{\supp(f)} f(\lambda) \left( \Res_H(\lambda + i\varepsilon) - \Res_H(\lambda - i\varepsilon) \right) \, \mathrm{d}\lambda.
    \end{equation*}
\end{prop}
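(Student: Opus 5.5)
Via the decomposition of Proposition~\ref{prop:decomp} and the representation \eqref{eq:representation functional calculus}, I will reduce the statement to a Stokes-formula computation on the Helffer--Sj\"ostrand integral, cutting out a thin strip around the real axis. Fix $f\in\mathscr{C}_\mathrm{c}^\infty(I,\C)$ and an $H$-admissible almost-analytic extension $\tilde f$ of order $N>n_I+1$, with cut-off $\chi$ supported in $S_{a_I}(I)\cup\R$ (possible since $\supp f$ is a compact subset of $I$ and, by Hypothesis~\ref{hyp:eigenvalues}, no non-real eigenvalue lies in $S_{a_I}(I)$ for $a_I$ small). For $\varepsilon\in(0,a_I)$ I set $\Omega_\varepsilon^\pm=\{z: \pm\im z>\varepsilon\}$. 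By dominated convergence, with the bound \eqref{eq:hypestimeresol} and $|\partial_{\bar z}\tilde f|\le C|y|^N\one_{\supp \chi}$ giving an integrable majorant $|y|^{N-n_I-1}$,
\[
f(H)=\lim_{\varepsilon\to0^+}\frac1\pi\int_{\Omega_\varepsilon^+\cup\Omega_\varepsilon^-}\partial_{\bar z}\tilde f(z)\,\Res_H(z)\,\mathrm{d}x\,\mathrm{d}y .
\]

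On each $\Omega_\varepsilon^\pm$, the map $z\mapsto\Res_H(z)$ is holomorphic on $\supp\tilde f\cap\overline{\Omega^\pm_\varepsilon}$, since admissibility excludes non-real eigenvalues there. Hence $\partial_{\bar z}(\tilde f\Res_H)=(\partial_{\bar z}\tilde f)\Res_H$, and Stokes' formula (the same one used in the proof of Lemma~\ref{lm:calculfunctional-complex}), applied to the compactly supported function $\tilde f\Res_H$, turns each area integral into a line integral along $\im z=\pm\varepsilon$. Keeping track of orientations (the boundary of $\Omega^+_\varepsilon$ is traversed left to right, that of $\Omega^-_\varepsilon$ right to left), I expect
\[
\frac1\pi\int_{\Omega_\varepsilon^+\cup\Omega_\varepsilon^-}\partial_{\bar z}\tilde f\,\Res_H=\frac{1}{2\pi i}\int_\R\Big(\tilde f(\lambda+i\varepsilon)\Res_H(\lambda+i\varepsilon)-\tilde f(\lambda-i\varepsilon)\Res_H(\lambda-i\varepsilon)\Big)\mathrm{d}\lambda .
\]
I will double-check the sign conventions against the normalization $\frac1\pi\int\partial_{\bar z}\tilde f\,(L-z)^{-1}$ in \eqref{eq:HSformula}, for instance on a scalar $L=\mu\in\R$.

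The main obstacle is to replace $\tilde f(\lambda\pm i\varepsilon)$ by $f(\lambda)$ while the resolvent blows up like $\varepsilon^{-n_I-1}$. For small $\varepsilon$ we have $\chi=1$ near $\supp f\times\{\pm\varepsilon\}$, so the Taylor form of $\tilde f$ gives
\[
\tilde f(\lambda\pm i\varepsilon)-f(\lambda)=\sum_{k=1}^N f^{(k)}(\lambda)\frac{(\pm i\varepsilon)^k}{k!}.
\]
The terms with $k\ge n_I+2$ contribute $O(\varepsilon^{k-n_I-1})\to0$ by \eqref{eq:hypestimeresol}. The low-order terms are not controlled by the norm bound alone. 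To handle them, I plan to note that $\varepsilon^k\int f^{(k)}(\lambda)\big(\Res_H(\lambda+i\varepsilon)-(-1)^k\Res_H(\lambda-i\varepsilon)\big)\mathrm{d}\lambda$ is itself, by the same Stokes argument read backwards, a Helffer--Sj\"ostrand-type integral of $g_k=f^{(k)}$ against weights $y^k$. This gives a bound of the form $\varepsilon^k\cdot\|\text{(calculus of }g_k\text{ with extra factor)}\|$, which tends to $0$.

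A cleaner alternative I would try first avoids this bookkeeping. Apply the whole argument with the extension of order $N$ replaced by a family built from $f$ only on the lines, namely $\tilde f_\varepsilon(\lambda\pm i\varepsilon):=f(\lambda)$. This is achieved by using, inside the strip $|y|<\varepsilon$, the extension $\tilde f(x+i(y\mp\varepsilon))$ shifted, whose $\partial_{\bar z}$ is $O(|y\mp\varepsilon|^N)$. Independence of the calculus from the chosen admissible extension (Proposition~\ref{definprop:HJ}) then justifies the replacement, with the error controlled again by dominated convergence. Finally, the integrand vanishes off $\supp f$, so the integral is over $\supp(f)$, and the limit holds in operator norm, hence also weakly.
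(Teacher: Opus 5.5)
Your argument is correct in substance. The preliminary dominated-convergence step, the Stokes computation on $\Omega^\pm_\varepsilon$ and the orientation bookkeeping are all right.

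The paper, however, never meets your ``main obstacle''. Instead of cutting out the strip $\{|\im z|\le\varepsilon\}$, it shifts the spectral parameter:
\[
f(H)=\frac1\pi\lim_{\varepsilon\to0^+}\Big(\int_{\im z>0}\partial_{\bar z}\tilde f(z)\,\Res_H(z+i\varepsilon)\,\mathrm{d}x\,\mathrm{d}y+\int_{\im z<0}\partial_{\bar z}\tilde f(z)\,\Res_H(z-i\varepsilon)\,\mathrm{d}x\,\mathrm{d}y\Big).
\]
This follows by dominated convergence: $|\im z\pm\varepsilon|\ge|\im z|$ on $\{\pm\im z>0\}$, so the integrands are bounded by $C|\im z|^{N-n_I-1}$ uniformly in $\varepsilon$. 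Since $z\mapsto\Res_H(z\pm i\varepsilon)$ is holomorphic up to and across $\R$ on $\supp\tilde f$, Stokes' formula is applied on the half-planes themselves. The boundary is then $\R$, where $\tilde f=f$ exactly, so no Taylor remainders $\tilde f(\lambda\pm i\varepsilon)-f(\lambda)$ ever appear.

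Your ``cleaner alternative'' is precisely this argument after the change of variables $w=z\pm i\varepsilon$, but as written it is misdescribed in two ways. First, the shifted extension $\tilde f(z\mp i\varepsilon)$ has to be used on $\Omega^\pm_\varepsilon$ (so that it equals $f$ on the lines $\im z=\pm\varepsilon$), not inside the strip. Second, the replacement is not justified by the independence statement of Proposition~\ref{definprop:HJ}, since the glued function is not an admissible almost-analytic extension of $f$. It is justified by exactly the dominated convergence above.

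Your primary route, keeping $\Res_H(z)$ fixed and estimating the Taylor errors, also works, but all the work lies in the claim that $\varepsilon^k\int f^{(k)}(\lambda)\Res_H(\lambda\pm i\varepsilon)\,\mathrm{d}\lambda\to0$ for $1\le k\le n_I+1$. ``Stokes read backwards'' on $\Omega^\pm_\varepsilon$, with an extension of $f^{(k)}$, just reproduces the same kind of error. The argument closes by a downward induction on $k$. That Stokes identity expresses the quantity as $\varepsilon^k\cdot O(1)$ (the area integral is bounded uniformly in $\varepsilon$ when $N>n_I$), minus terms of the same form carrying $\varepsilon^{k+j}f^{(k+j)}$ with $j\ge1$. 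Those with $k+j\ge n_I+2$ are killed by $\|\Res_H(\lambda\pm i\varepsilon)\|\le C\varepsilon^{-n_I-1}$. Alternatively, you can invoke the uniform boundedness of $\int g(\lambda)\Res_H(\lambda\pm i\varepsilon)\,\mathrm{d}\lambda$, which is exactly what the paper's shifted identity provides.

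So the paper's trick buys a short proof with no remainder bookkeeping. Your route keeps the Helffer--Sj\"ostrand integrand untouched, but needs that inductive estimate written out.
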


\begin{proof}
    We claim that 
    \begin{equation}\label{eq:cfFF0}
	   f(H)=\pi^{-1}\lim_{\varepsilon\rightarrow 0^+}\left(\int_{\im(z)>0} \partial_{\bar{z}} \tilde{f}(z)\Res_H(z+i\varepsilon)\mathrm{d}z+\int_{\im(z)<0} \partial_{\bar{z}} \tilde{f}(z)\Res_H(z-i\varepsilon)\mathrm{d}z\right).
	 \end{equation}
	 Indeed, if we denote $J:=\supp(f)\subset I$, there exists $a_I>0$ sufficiently small such that  $\supp(\tilde{f})\subset S_{a_I}(J)$ and $S_{a_I}(J)$ does not contain any non real eigenvalue of $H$. So, under Hypotheses \ref{hyp:eigenvalues} and \ref{hyp:estimate resolvente}, for $\varepsilon>0$ sufficiently small the map $z\to \partial_{\bar{z}}\tilde{f}(z) \Res_H(z\pm i\varepsilon)$ is continuous on  $\supp(\tilde{f})\cap \C^\pm$ and for all $z\in S_{a_I}(J)$, we have
	    \begin{equation*}
	      \lim_{\varepsilon\rightarrow 0^+} \Res_H(z\pm i\varepsilon)=\Res_H(z), \qquad 
	        \|\partial_{\bar{z}}\tilde{f}(z) \Res_H(z\pm i\varepsilon)\|_{\mathcal{B}(\Hi)}\leq c_0
	    \end{equation*}
	    for some constant $c_0>0$, uniformly with respect to $\varepsilon$  and $z$.
	Consequently, Lebesgue's theorem implies \eqref{eq:cfFF0}.
	Now since $z\mapsto \Res_H(z\pm i\varepsilon)$ are holomorphic on $S_{a_I}(J)$, we may apply Stokes's theorem to obtain 
	\begin{equation}\label{eq:cfFF2}
	    \int_{\pm\im(z)>0} \partial_{\bar{z}}\tilde{f}(z) \Res_H(z\pm i\varepsilon)\mathrm{d}x\mathrm{d}y=\pm \frac{1}{2i}\int_{\Gamma^\pm} \tilde{f}(z)\Res_H(z \pm i\varepsilon)\mathrm{d}z,
	\end{equation}
	where $\Gamma^\pm$ denotes the boundary of $S_{a_I}(J)$ in $\overline{\C^\pm}$. Finally the support properties of $\tilde{f}$ implies that \begin{equation}\label{eq:cfFF3}
	    \int_{\Gamma^\pm} \tilde{f}(z)\Res_H(z \pm i\varepsilon)\mathrm{d}z=\int_{J}f(\lambda)\Res_H(\lambda\pm i\varepsilon)\mathrm{d}\lambda.
	\end{equation}
	So the conclusion follows by substituting first \eqref{eq:cfFF2} and then \eqref{eq:cfFF3} in \eqref{eq:cfFF0}. 
\end{proof}
	
To conclude this introduction to the functional calculus, observe that if
$\omega \in \rho(H)$ and $r_\omega(x)=(x-\omega)^{-1}$, then $r_\omega (H)$ is precisely the
resolvent $(H-\omega)^{-1}$. More precisely,
%For the next Proposition, we suppose that $I=\R$. If $H$ has a finite number of no-real eigenvalues, then
%the projection onto the complex eigenvalues of $H$ is well defined. The next Proposition establishes that the the resolvent map coincides may be written with the help of  the Helffer-Sjöstrand formula up to a spectral projection.
under the Hypothesis \ref{hyp:eigenvalues} on $I=\R$, $H$ has a finite number of no-real eigenvalues and the projection onto the non-real eigenvalues of $H$ is well defined by:
\begin{equation*}  
    \Pi_{\mathrm{complex}}(H):=\sum_{\lambda\in\sigma_\mathrm{disc}(H)\backslash \R}\Pi_\lambda(H).
\end{equation*}
Then we have:

\begin{prop}\label{prop:comparison resolvant}
    Suppose that Hypotheses \ref{hyp:eigenvalues} and \ref{hyp:estimate resolvente} hold on $\R$. Let $\omega\in\rho(H)$ and $r_\omega:\R\to\C$ be defined for all $x\in\R$ by 
    $$r_\omega(x)=(x-\omega)^{-1}.$$
    Then
     \begin{equation}\label{eq:decomposition}
        r_\omega(H)=\Res_H(\omega)(\Id-\Pi_\mathrm{complex}(H)), \qquad 
        \Res_H(\omega)=r_\omega(H) + \sum_{\lambda\in\sigma_\mathrm{disc}(H)\backslash \R}(\lambda-\omega)^{-1}\Pi_\lambda(H).
    \end{equation}
\end{prop}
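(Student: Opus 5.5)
The two identities in \eqref{eq:decomposition} are equivalent: $\Pi_\lambda(H)$ commutes with $\Res_H(\omega)$, and on $\Ran(\Pi_\lambda(H))$ the spectrum of $H$ is $\{\lambda\}$. So the plan is to prove the first identity using the block decomposition \eqref{eq:representation functional calculus} with $I=\R$, and then deduce the second.

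With $I=\R$ we have $\Pi_\R=\Pi_{\mathrm{complex}}(H)$. Then $\Res_H(\omega)=\Psi\bigl(\Res_{H_{\R,\mathrm r}}(\omega)\oplus\Res_{H_{\R,\mathrm c}}(\omega)\bigr)\Psi^{-1}$ by Lemma \ref{lem:decomposition_commuting_projection}(5), while $r_\omega(H)=\Psi\bigl(r_\omega(H_{\R,\mathrm r})\oplus 0\bigr)\Psi^{-1}$. It therefore suffices to prove
\[
r_\omega(H_{\R,\mathrm r})=\Res_{H_{\R,\mathrm r}}(\omega)\quad\text{on }\Ran(\Id-\Pi_\R).
\]
Note that $r_\omega\in\mathscr{S}^{-1}\subset\mathscr{A}$, so $r_\omega(H)$ is defined. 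The block identity is the statement of \cite[Lemma or Theorem]{Da95_08} that the Helffer--Sjöstrand calculus reproduces resolvents for an operator with real spectrum and polynomial resolvent bounds. That result applies to $H_{\R,\mathrm r}$ by \eqref{eq:resolrestR}, provided $\omega\in\rho(H_{\R,\mathrm r})$. This holds because $\rho(H)\subset\rho(H_{\R,\mathrm r})$ by Lemma \ref{lem:decomposition_commuting_projection}(4).

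If I wanted a self-contained argument rather than citing Davies, I would proceed as follows. Take an $H$-admissible almost-analytic extension $\tilde r_\omega$ whose cut-off $\chi$ equals $1$ near $\omega$ when $\omega\notin\R$, or fix $\omega$ outside the support otherwise. Remove a small disc $D(\omega,\delta)$ and apply Stokes' formula on $\C\setminus D(\omega,\delta)$ to the holomorphic resolvent $\Res_{H_{\R,\mathrm r}}(z)$. The boundary circle contributes
\[
\frac1{2i}\oint \frac{\Res_{H_{\R,\mathrm r}}(z)}{z-\omega}\,dz\longrightarrow \pi\,\Res_{H_{\R,\mathrm r}}(\omega),
\]
with orientation handled as usual. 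The contributions at infinity vanish because of the decay of $\tilde r_\omega$ together with \eqref{eq:resolrestR}.

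\textbf{Main obstacle.} The delicate point is the case where $\omega$ lies in the support region of $\chi$ or close to the real axis. There the almost-analytic extension is not the exact function $(z-\omega)^{-1}$, so one must check that $\partial_{\bar z}\tilde r_\omega$ still yields the pole contribution correctly. Davies avoids this by first treating $\omega$ far from $\R$, where $\tilde r_\omega$ can be taken equal to $(z-\omega)^{-1}\chi$ exactly near $\omega$. He then extends to all of $\rho$ by analyticity in $\omega$: both sides are holomorphic in $\omega\in\rho(H_{\R,\mathrm r})$, since $\omega\mapsto r_\omega$ is continuous in the $\|\cdot\|_n$ norms and \eqref{eq:estimatefunctional} holds. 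I would follow the same route.

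For the second identity, note that $\Id-\Pi_{\mathrm{complex}}=\Id-\sum_\lambda\Pi_\lambda$, so it remains to show $\Res_H(\omega)\Pi_\lambda(H)=(\lambda-\omega)^{-1}\Pi_\lambda(H)$. On the finite-dimensional space $\Ran\Pi_\lambda$, $H$ acts as $\lambda+N$ with $N$ nilpotent. Hence the identity holds exactly when the eigenvalue $\lambda$ is semisimple. In general one gets $\sum_k(-1)^kN^k(\lambda-\omega)^{-k-1}$, so I would either note that the statement implicitly assumes semisimplicity or interpret $(\lambda-\omega)^{-1}\Pi_\lambda$ as $\Res_{H}(\omega)\Pi_\lambda$. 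This finite-dimensional verification is routine.
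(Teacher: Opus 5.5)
Your proof of the first identity is the paper's proof. With $I=\R$ one has $\Pi_\R=\Pi_{\mathrm{complex}}(H)$, and the block form \eqref{eq:representation functional calculus} together with Lemma~\ref{lem:decomposition_commuting_projection}(5) reduces the claim to $r_\omega(H_{\R,\mathrm r})=\Res_{H_{\R,\mathrm r}}(\omega)$ on $\Ran(\Id-\Pi_\R)$. This is \cite[Theorem 5]{Da95_08} applied to $H_{\R,\mathrm r}$. As you implicitly use, this needs $\omega\notin\R$: otherwise $r_\omega$ is not even a function on $\R$, let alone an element of $\mathscr{A}$. Neither you nor the paper says this, and the statement should be read for $\omega\in\rho(H)\setminus\R$.

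Your final paragraph is correct, and it catches something the paper passes over. The paper's proof ends with ``Then we deduce \eqref{eq:decomposition}''. However, $\Res_H(\omega)=r_\omega(H)+\sum_\lambda\Res_H(\omega)\Pi_\lambda(H)$ gives the printed second identity only if $(H-\lambda)\Pi_\lambda(H)=0$ for every non-real $\lambda$. In general one gets
\begin{equation*}
\Res_H(\omega)=r_\omega(H)+\sum_{\lambda\in\sigma_\mathrm{disc}(H)\setminus\R}\ \sum_{k=0}^{m_\lambda-1}(-1)^k(\lambda-\omega)^{-k-1}(H-\lambda)^k\Pi_\lambda(H),\qquad m_\lambda=\dim\Ran\Pi_\lambda(H).
\end{equation*}
The nilpotent parts act on complementary ranges, and $\sum_{k\ge1}c_kN^k=N(c_1+c_2N+\dots)$ with $c_1\neq0$. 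Hence the printed formula holds if and only if every non-real eigenvalue is semisimple, as you say. A minimal counterexample within the paper's hypotheses is $\mathcal{H}=\C^2$, $H_0=0$, $H=V=\begin{pmatrix} i&1\\0&i\end{pmatrix}$. Then $\Pi_i(H)=\Id$, so $r_\omega(H)=0$. But $\Res_H(\omega)=(i-\omega)^{-1}\Id-(i-\omega)^{-2}N$ with $N=\begin{pmatrix}0&1\\0&0\end{pmatrix}$, which is not $(i-\omega)^{-1}\Id$.

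Two corrections to your write-up.

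First, your opening sentence asserts that the two identities are equivalent because $\sigma\bigl(H_{|\Ran\Pi_\lambda(H)}\bigr)=\{\lambda\}$. That is precisely the inference that fails, and it contradicts your own last paragraph, so drop it.

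Second, in the optional self-contained sketch the cut-off condition is reversed. The calculus is independent of $\tau$, so for non-real $\omega$ you may shrink $\supp\tau$ until $\omega\notin\supp\chi$ and take the exact extension $\tilde r_\omega=(z-\omega)^{-1}\chi$. Then apply Stokes to $(\chi-1)(z-\omega)^{-1}\Res_{H_{\R,\mathrm r}}(z)$ on $\{|z|<R\}\setminus D(\omega,\delta)$; this function vanishes near $\R$, where $\Res_{H_{\R,\mathrm r}}$ is not holomorphic. The small circle gives $\pi\Res_{H_{\R,\mathrm r}}(\omega)$, and the large circle is $O(R^{-1})$. With $\chi=1$ near $\omega$, the Taylor extension \eqref{defaaf} has no pole at $\omega$, while the exact extension is not smooth there. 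So the circle argument does not work as written. Conversely, once $\omega\notin\supp\chi$, the analytic-continuation detour you call the main obstacle is unnecessary.
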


\begin{proof}
Applying Proposition \ref{prop:decomp} for $I=\R$, we introduce the real part of $H$: $H_r$ the restriction of $H$ to the range of $\Id-\Pi_\mathrm{complex}(H)$.
    Using \eqref{eq:representation functional calculus} and \cite[Theorem 5]{Da95_08}, for all $u\in\Hi$, we have :
    \begin{align*}
        r_\omega(H)u&=\Psi\left(
        \begin{array}{c|c}
            (H_{\mathrm{r}}-\omega)^{-1} & 0 \\
            \hline
            0 & 0 \\
        \end{array}
        \right)\Psi^{-1}u\\
         &=(H_{\mathrm{r}}-\omega)^{-1}(\Id-\Pi_{\mathrm{complex}}(H))u\\
         &=(H-\omega)^{-1}(\Id-\Pi_{\mathrm{complex}}(H))u.
    \end{align*}
    Then we deduce \eqref{eq:decomposition}.
    %For the resolvent of $H$, we have: $$\Res_H(\omega)=r_\omega(H) + \sum_{\lambda\in\sigma_\mathrm{disc}(H)\backslash \R}(\lambda-\omega)^{-1}\Pi_\lambda(H). $$
%    {\clr F : ce dernier resultat n apparait pas dans l enonce de la proposition. Le met-on ? N : Je pense que c'est une bonne idée. ou alors en remarque ? }
    
\end{proof}

\vspace{0.1cm}\noindent
		
%	\begin{remark} {\clb V: Voir si on reprend certaines remarques mises en commentaire ...}

%	\end{remark}
\begin{comment}
	 \begin{itemize}
	     \item On doit pouvoir obtenir la Proposition 3.9 pour des fonctions $f$ dans $\mathscr{A}$ en utilisant un argument de déformation de contour, mais ce n'est pas l'objet de ce papier donc on ne le fait pas. {\clr F : je suis d'accord.}
	     \item Contrairement au calcul fonctionnel de FF, c'est uniquement le comportement de la fonction à l'infini qui permet de gérer le mauvais comportement de la résolvante près de l'axe réel. 
	     \item Dans FF, il fallait multiplier la fonction par une fraction rationnelle qui régularise les singularités spectrales et la limite dans \eqref{} était au sens faible. Ici elle est au sens opérateur. 
	     \item L'avantage de FF est qu'on ne demande pas de régularité de régularité à la fonction à l'infini. Cela permet notamment d'obtenir un calcul fonctionnel pour une semi-groupe fortement continu à un paramètre, ce que HJ ne permet pas de faire car $\mathscr{S}^0(\R,\C)$ n'est pas dans $\mathscr{A}$. 
	     \item Pour cette construction on n'a jamais utilisé qu'on avait besoin d'un nombre fini de valeurs propres réelles. {\clr F : je suis perdu. Dans ce papier, on a un nombre fini de valeurs propres réelles ? ou bien c'est dans FF que vous prenez cette hypothèse ? N : Ici le calcul fonctionnel qu'on construit englobe aussi les valeurs propres réelles discrètes. dans FF, il est construit seulement pour le spectre essentiel. }
	 \end{itemize}
\end{comment}

	\subsection{Spectral changing of variables}\label{fog(H)}

	In this section, for any $-c \in \R \cap \rho(H)$ with $c>0$, and $m \in \N^*$, we consider a function $f \in D((0, +\infty))$ and define $g_m \in D((-c, +\infty))$ by the relation
	\[
	g_m(\lambda) := f((\lambda + c)^{-m}).
	\]
	Our goal is to identify sufficient conditions on $H$ under which the following composition formula holds:
	\[
	f((H + c)^{-m}) = g_m(H).
	\]
	%First we give a general result for an operator $L$ (later it will be applied to $L=H+c$). 
	Initially, instead of $H+c$, we consider a general operator $L$.
	%Without loss of generality, we assume $c = 0$ (this simply amounts to replacing $H + c$ by $H$ in the analysis). 
	For intervals $I \subset \R$, $R \subset (0, +\infty)$, $\Theta_0 \subset [-\pi/2, \pi/2]$, and $a > 0$, we introduce the following subset of the complex plane:
	\begin{equation}\label{DefSets}
		%S_a(I) := \left\{ z \in \C \,\middle|\, \re(z) \in I, \ 0 < |\im(z)| < a \right\}, \quad
		R \cdot e^{i\Theta_0} := \left\{ \rho e^{i\theta} \in \C \,\middle|\, \rho \in R, \ \theta \in \Theta_0 \right\}.
	\end{equation}
	
	\vspace{0.1cm}\noindent
	To establish relations between the resolvents of $L$ and those of $L^{-m}$, we state the following lemma concerning geometric sums of bounded operators.
	
	\begin{lemma}\label{lemT}
		Let $m \in \N^*$ and let $T\in\mathcal{B}(\Hi)$ be a bounded operator such that the geometric sum
		\begin{equation}\label{defsum}
			G_m(T) := \sum_{k=0}^{m-1} T^k
		\end{equation}
		is invertible. Then $(T - \Id)$ is invertible if and only if $(T^m - \Id)$ is invertible, and in this case the following identities hold:
		\begin{align}
		(T - \Id)^{-1} &= G_m(T) (T^m - \Id)^{-1} \label{eq:LemT Identity 1}\\
		             &=m (T^m - \Id)^{-1} + \sum_{k=1}^{m-1} G_k(T) G_m(T)^{-1}.\label{eq:LemT Identity 2}
	    \end{align}
	\end{lemma}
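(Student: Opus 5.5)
The plan is to work entirely inside the commutative unital algebra generated by $T$ in $\mathcal{B}(\Hi)$. Every operator that appears ($T-\Id$, $T^m-\Id$, $G_k(T)$ and their inverses when they exist) commutes with every other. The starting point is the telescoping identity
\begin{equation*}
T^m-\Id=(T-\Id)\,G_m(T)=G_m(T)\,(T-\Id),
\end{equation*}
which follows by expanding $\sum_{k=0}^{m-1}(T^{k+1}-T^k)$.

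For the equivalence, first suppose $T-\Id$ is invertible. Then $T^m-\Id$ is a product of two commuting invertible operators, so it is invertible with inverse $G_m(T)^{-1}(T-\Id)^{-1}$. Conversely, if $T^m-\Id$ is invertible, set $S:=G_m(T)(T^m-\Id)^{-1}$. Since $(T^m-\Id)^{-1}$ commutes with $T$, we get $(T-\Id)S=(T^m-\Id)(T^m-\Id)^{-1}=\Id$, and likewise $S(T-\Id)=\Id$. Hence $T-\Id$ is invertible and \eqref{eq:LemT Identity 1} holds. In this direction the invertibility of $G_m(T)$ is not even needed, but it is needed for the next identity.

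For \eqref{eq:LemT Identity 2}, the idea is to compute $G_m(T)-m\Id$ and factor out $T-\Id$. We have
\begin{equation*}
G_m(T)-m\Id=\sum_{k=0}^{m-1}(T^k-\Id)=(T-\Id)\sum_{k=1}^{m-1}G_k(T),
\end{equation*}
using $T^k-\Id=(T-\Id)G_k(T)$ for each $k$. Multiplying by $(T^m-\Id)^{-1}=G_m(T)^{-1}(T-\Id)^{-1}$ and using commutativity gives
\begin{equation*}
G_m(T)(T^m-\Id)^{-1}=m(T^m-\Id)^{-1}+\sum_{k=1}^{m-1}G_k(T)G_m(T)^{-1},
\end{equation*}
which combined with \eqref{eq:LemT Identity 1} yields \eqref{eq:LemT Identity 2}.

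There is no real obstacle here. The only point needing care is justifying commutativity of the inverses with $T$: if $A$ commutes with $T$ and is invertible, then $A^{-1}$ commutes with $T$, since $A^{-1}T=A^{-1}TAA^{-1}=A^{-1}ATA^{-1}=TA^{-1}$.
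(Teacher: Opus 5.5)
Your proof is correct and follows essentially the same route as the paper. The paper gets the first identity from the factorization $T^m-\Id=(T-\Id)G_m(T)$, and gets the second by writing $T^k=\Id+G_k(T)(T-\Id)$ (your $G_m(T)-m\Id=(T-\Id)\sum_{k=1}^{m-1}G_k(T)$) and then using $(T-\Id)(T^m-\Id)^{-1}=G_m(T)^{-1}$; your explicit checks of two-sided invertibility in the equivalence and of the commutation of the inverses with $T$ fill in details the paper leaves implicit.
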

	
	\begin{proof}
		The first identity follows from the factorization $(T^m - \Id) = (T - \Id) G_m(T)$. For the second identity, we write for all $k\in\N^\star$, $T^k = \Id + (T^k - \Id) = I + G_k(T)(T - \Id)$ and use the fact that $(T - \Id)(T^m - \Id)^{-1} = G_m(T)^{-1}$.
	\end{proof}

We define ${\C_>}$ the set of complex number with positive real part : 
	\begin{equation*}
	    {\C_>}:=\lbrace z\in\C\,|\, \re(z)>0\rbrace. 
	\end{equation*}
	By applying the above lemma with $T = z L^{-1}$, we deduce:

	\begin{lemma}\label{lemRes}
	Let $m \in \N^*$ and let $\left(L, \mathcal{D}(L)\right)$ be a closed operator on a complex separable Hilbert space $\Hi$, such that
		\begin{equation}\label{HSect}
			\sigma(L) \subset (0, +\infty) \cdot e^{i[-\theta_0, \theta_0]} \quad \text{for some } \theta_0 \in \left(0, \frac{\pi}{2m} \right).
		\end{equation}
		Then,
		\begin{enumerate}
		    \item\label{item1} $z \in \rho(L)\backslash\{0\}$ if and only if $z^{-m} \in \rho(L^{-m})$,
		    \item the operator-valued map
		\begin{equation}\label{am}
			\Psi :z \longmapsto z^{-m+1} G_m(z L^{-1}) = \sum_{k=0}^{m-1} z^{k+1-m} L^{-k}
		\end{equation}
		is biholomorphic in the sector $\, (0, +\infty) \cdot e^{i[-\theta_0, \theta_0]}$,
		\item for every $z \in \rho(L) \cap (0, +\infty) \cdot e^{i[-\theta_0, \theta_0]}$, the following resolvent identities hold:
		\begin{align}
			(L - z)^{-1} &= -z^{-m} L^{-1} G_m(z L^{-1}) \left( L^{-m} - z^{-m} \right)^{-1}, \label{eqRes1} \\
			&= -m z^{-1 - m} \left( L^{-m} - z^{-m} \right)^{-1}
			- z^{-1} \sum_{k=1}^{m-1} G_k(z L^{-1}) G_m(z L^{-1})^{-1}. \label{eqRes2}
		\end{align}
		\end{enumerate}
	\end{lemma}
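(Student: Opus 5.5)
The plan is to reduce everything to Lemma~\ref{lemT} with the bounded operator $T=zL^{-1}$. First I will check that $0\in\rho(L)$, so that $L^{-1}\in\mathcal{B}(\Hi)$ and $T$ makes sense. The sector in \eqref{HSect} does not contain $0$, and $\sigma(L)$ is closed, so $0\notin\sigma(L)$. (If $L^{-1}$ is part of the implicit setting, this is simply the standing assumption.)

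The key algebraic facts are the two factorizations
\[
zL^{-1}-\Id=L^{-1}(z-L)\ \text{on } \mathcal{D}(L),\qquad T^m-\Id=z^m\bigl(L^{-m}-z^{-m}\bigr).
\]
Since $L^{-1}$ commutes with $(L-z)^{-1}$, the first one shows that $T-\Id$ is invertible if and only if $z\in\rho(L)$. In that case
\[
(L-z)^{-1}=-(T-\Id)^{-1}L^{-1}=-L^{-1}(T-\Id)^{-1}.
\]
The second one shows that $T^m-\Id$ is invertible if and only if $z^{-m}\in\rho(L^{-m})$, with
\[
(T^m-\Id)^{-1}=z^{-m}\bigl(L^{-m}-z^{-m}\bigr)^{-1}.
\]

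The main step, and the one where the sector hypothesis enters, is the invertibility of $G_m(zL^{-1})$ for $z$ in the sector $(0,+\infty)\cdot e^{i[-\theta_0,\theta_0]}$. Writing $\omega_k=e^{2i\pi k/m}$, the polynomial identity $\sum_{k=0}^{m-1}t^k=\prod_{k=1}^{m-1}(t-\omega_k)$ gives
\[
G_m(T)=\prod_{k=1}^{m-1}(zL^{-1}-\omega_k)=\prod_{k=1}^{m-1}\bigl(-\omega_kL^{-1}\bigr)\bigl(L-z\omega_k^{-1}\bigr).
\]
It therefore suffices that $z\omega_k^{-1}\in\rho(L)$ for $k=1,\dots,m-1$. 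If $|\arg z|\le\theta_0$, then $\arg(z\omega_k^{-1})$ lies at distance at least $2\pi/m-\theta_0$ from $0$ modulo $2\pi$. Since $\theta_0<\pi/(2m)$, this distance exceeds $\theta_0$, so $z\omega_k^{-1}$ lies outside the sector and hence outside $\sigma(L)$. Each factor is then invertible on $\Hi$, and so is $G_m(T)$.

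With $G_m(T)$ invertible, Lemma~\ref{lemT} gives item (1) for $z$ in the sector. I will state item (1) there, since this is the meaningful range of the statement: for general $z$ one of the rotated points $z\omega_k^{-1}$ could hit the sector. It also gives \eqref{eqRes1}. Indeed, \eqref{eq:LemT Identity 1} yields
\[
(L-z)^{-1}=-L^{-1}G_m(T)\,z^{-m}\bigl(L^{-m}-z^{-m}\bigr)^{-1}.
\]

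For \eqref{eqRes2} I will not multiply \eqref{eq:LemT Identity 2} by $L^{-1}$ directly. Instead I use $L^{-1}=z^{-1}T$ and $T(T-\Id)^{-1}=\Id+(T-\Id)^{-1}$, so that
\[
(L-z)^{-1}=-z^{-1}\bigl(\Id+(T-\Id)^{-1}\bigr).
\]
Inserting \eqref{eq:LemT Identity 2} gives
\[
-z^{-1}\Bigl(\Id+m z^{-m}\bigl(L^{-m}-z^{-m}\bigr)^{-1}+\sum_{k=1}^{m-1}G_k(T)G_m(T)^{-1}\Bigr).
\]
I then have to reconcile the stray $\Id$ term with the index conventions, either by absorbing it as the $k=0$ term or by adjusting the $G_k$ sum. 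I expect this bookkeeping step to be the fiddliest part, and I will verify it in the scalar case $L=\lambda$ first.

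For item (2), $z\mapsto\sum_{k=0}^{m-1}z^{k+1-m}L^{-k}$ is a Laurent polynomial in $z$ with bounded coefficients, hence holomorphic on the sector, which avoids $0$. Its values $z^{1-m}G_m(zL^{-1})$ are invertible by the argument above. Therefore $z\mapsto\Psi(z)^{-1}$ is holomorphic as well, since inversion is analytic on the open set of invertible operators. This is the sense of ``biholomorphic'' that I will establish.
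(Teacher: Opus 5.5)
Your argument is correct, and its skeleton is the paper's. You reduce to Lemma~\ref{lemT} with $T=zL^{-1}$, obtain \eqref{eqRes1} from \eqref{eq:LemT Identity 1}, and obtain the second identity by writing $(L-z)^{-1}=-z^{-1}\bigl(\Id+(T-\Id)^{-1}\bigr)$ and inserting \eqref{eq:LemT Identity 2}. This is exactly what the paper does.

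The genuine difference is how you prove that $G_m(zL^{-1})$ is invertible.
\begin{itemize}
\item The paper uses the spectral mapping theorem for $L^{-k}$ and the inclusion $\sigma(A+B)\subset\sigma(A)+\sigma(B)$ for commuting bounded operators. This places $\sigma\bigl(z^{1-m}G_m(zL^{-1})\bigr)$ in the open right half-plane, and it uses $(m-1)\theta_0<\pi/2$.
\item Your factorization over the nontrivial $m$-th roots of unity uses only the resolvent set of $L$. It needs only $\theta_0<\pi/m$, which is weaker than the paper's condition once $m\ge 3$. It also characterizes invertibility exactly: $G_m(zL^{-1})$ is invertible iff $z\omega_k^{-1}\in\rho(L)$ for $1\le k\le m-1$.
\item It gives invertibility on the open sector $|\arg z|<2\pi/m-\theta_0$, which contains the closed sector of the statement. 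This is what makes your treatment of item~(2) complete (holomorphy of $\Psi$ and of $z\mapsto\Psi(z)^{-1}$, the sense used later). The paper only alludes to Neumann series there.
\end{itemize}
The paper's route buys a localization of $\sigma(\Psi(z))$ that is never used. One cosmetic point: write each factor as $-\omega_k(L-z\omega_k^{-1})L^{-1}$ rather than $(-\omega_kL^{-1})(L-z\omega_k^{-1})$, so that it is visibly a bounded operator on all of $\Hi$.

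You are right not to force agreement with the statement on the two points you flagged; both are defects of the statement, not of your argument.

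\emph{Item~(1).} It really does need $z$ in the sector. For $L=\Id$, $m=2$, $z=-1$, one has $z\in\rho(L)\setminus\{0\}$ but $z^{-2}=1\in\sigma(L^{-2})$. Only the implication $z^{-m}\in\rho(L^{-m})\Rightarrow z\in\rho(L)$ holds for every $z\neq0$, since $T^m-\Id=(T-\Id)G_m(T)$ with commuting factors. The paper's spectral-mapping argument also only yields the sector version.

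\emph{Identity \eqref{eqRes2}.} There is nothing to reconcile. Your expression
\[
(L-z)^{-1}=-z^{-1}\Id-mz^{-1-m}\bigl(L^{-m}-z^{-m}\bigr)^{-1}-z^{-1}\sum_{k=1}^{m-1}G_k(zL^{-1})G_m(zL^{-1})^{-1}
\]
is the correct identity, and it is literally the last line of the paper's own proof. The printed \eqref{eqRes2} omits $-z^{-1}\Id$ and is false. The case $m=1$, $L=\lambda$ shows this: $(\lambda-z)^{-1}=-z^{-1}-z^{-2}(\lambda^{-1}-z^{-1})^{-1}$.

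Absorbing $\Id$ as a $k=0$ term does not work, because $G_0=0$. The only equivalent rewriting is to extend the sum to $k=m$, using $G_m(zL^{-1})G_m(zL^{-1})^{-1}=\Id$.

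The omission is harmless for the later uses in Propositions~\ref{relative} and~\ref{prop:repSSFRT}. The missing term is holomorphic near the relevant part of the real axis, so it contributes neither to the Helffer--Sj\"ostrand integrals nor to the boundary-value jumps.
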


    \begin{proof}	    
		        Since \( 0 \in \rho(L) \), it follows from holomorphic functional calculus of closed operator (\cite{DuSc58}, Theorem VII.9.8.4) that for any \( k \in \N \), the operator \( L^{-k} \) is bounded and its spectrum is contained in the sector
				$$
				\sigma(L^{-k})=\lbrace z^{-k}\,|\, z\in\sigma(L)\rbrace\cup \lbrace 0\rbrace \subset [0, +\infty) \cdot e^{i[-k\theta_0,\, k\theta_0]}.
				$$
				This prove item \eqref{item1}. 
				On the other  hand, for any $k\in\llbracket 0,m-1\rrbracket$, for any \( z \in (0, +\infty) \cdot e^{i[-\theta_0,\, \theta_0]} \), we have
				$$
				z^{k+1 - m} \in (0, +\infty) \cdot e^{i[-(k+1-m)\theta_0,\, (k+1-m)\theta_0]}.
				$$
				Therefore, the spectrum of the operator \( z^{k+1 - m} L^{-k} \) is contained in the sector
				$$
				[0, +\infty) \cdot e^{i[-(m - 1)\theta_0,\, (m - 1)\theta_0]} \subset [0, +\infty) \cdot e^{i(-\tfrac{\pi}{2},\, \tfrac{\pi}{2})}.
				$$
				As shown in (\cite{KaRi83}, Proposition 3.2.10), if two bounded linear operators \( A \) and \( B \) commute, then the spectrum of their sum satisfies the inclusion \( \sigma(A + B) \subset \sigma(A) + \sigma(B) \). So, it follows that the spectrum of
				$$
				z^{-m+1} \big( G_m(zL^{-1}) - \Id \big) = \sum_{k=1}^{m-1} z^{k+1 - m} L^{-k}
				$$
				is contained in $\overline{{\C_>}}$.
				So using one against Proposition 3.2.10, we deduce that, for all $z \in (0, +\infty) \cdot e^{i[-\theta_0, \theta_0]}$, the spectrum of the operator
				\[
				z^{-m+1} G_m(zL^{-1})
				\]
				is contained in the open right half-plane ${\C_>}$. In particular, the operator $G_m(zL^{-1})$ is invertible for all $z\in{\C_>}$. Applying \eqref{eq:LemT Identity 1} with $T = zL^{-1}$, we obtain for every $z \in \rho(L) \cap (0, +\infty)\cdot e^{i[-\theta_0, \theta_0]}$:
				\begin{align*}
					(L - z)^{-1} 
					&= L^{-1} (\Id - zL^{-1})^{-1} \\
					&= - L^{-1} G_m(zL^{-1}) \left( (zL^{-1})^m - \Id \right)^{-1} \\
					&= - L^{-1} z^{-m} G_m(zL^{-1}) \left( L^{-m} - z^{-m} \right)^{-1},
				\end{align*}
				which yields identity~\eqref{eqRes1}.  Identity~\eqref{eqRes2} then follows from the second formula in Lemma~\ref{lemT} as follows:
				\begin{align*}
					(L - z)^{-1} 
					&= L^{-1} (\Id - zL^{-1})^{-1} \\
					&= - z^{-1}\Id - z^{-1} (zL^{-1} - \Id)^{-1} \\
					&= - z^{-1}\Id - m z^{-1 - m} \left( L^{-m} - z^{-m} \right)^{-1} 
					- z^{-1} \sum_{k=1}^{m-1} G_k(zL^{-1}) G_m(zL^{-1})^{-1}.
				\end{align*}	
				The biholomorphy of $\Psi$ can be proved by using Neumann series.
				\end{proof}

%\begin{lemma}\label{lem:sectorial}
   %     Suppose that Hypothesis~\ref{hyp:eigenvalues} hold. Then there exists $-c>0$ and $\theta_0\in (0,\pi/2)$ such that
   %     \begin{equation}\label{eq:sec}
  %          \sigma(H+c)\subset (0,+\infty).e^{i[-\theta_0,\theta_0]}.
  %      \end{equation}
 %   \end{lemma}
			
%	\begin{proof}
%	    As $\sigma_\mathrm{ess}(H)=\sigma_\mathrm{ess}(H_0)$ (see \eqref{eq:ess spec}) and $H$ has a finite number of eigenvalues by assumption, there exists $-c>0$ such that 
%	    \begin{equation*}
%	        \inf\lbrace \re(z)\,|\, z\in\sigma(H)\rbrace >c.
%	    \end{equation*}
%	    Thus we deduce that $\sigma(H+c)\subset {\C_>}$. On the other hand using one again the finiteness of the number of eigenvalues of $H$, we have
%	    \begin{equation*}
%	        \sup\lbrace \im(z)\,|\,z\in\sigma(H)\rbrace
%	    \end{equation*}
%	    is finite. So there exists $\theta\in (0,\frac{\pi}{2})$ such that \eqref{eq:sec} hold. 
%	\end{proof}		
%			\vspace{0.1cm}\noindent
%			Applying Lemma \ref{lemRes} to $H$ we deduce the following result:
			
%			\textcolor{red}{N: Correction jusqu'à là}

			\begin{prop}\label{relative}
			    Suppose that Hypotheses~\ref{hyp:eigenvalues} and \ref{hyp:estimate resolvente} hold for $H$ on $I=(s_0,s_1)$, $s_1>s_0$. Fix $m \in \N^*$. Then there exists $c>0$ such that for any $J=(r_0,r_1)$, with $0<(s_1+c)^{-m} < r_0 < r_1 < (s_0+c)^{-m}$ the operator $(H+c)^{-m}$ satisfies Hypotheses~\ref{hyp:eigenvalues} and \ref{hyp:estimate resolvente} on $J$ with $n_J=n_I$.
				%Let $\left(H, \mathcal{D}(H)\right)$ be a closed operator acting on a complex separable Hilbert space $\Hi$, and suppose that the spectrum of $H$ admits a free strip near the real axis, i.e., there exists $a > 0$ such that $S_a(\R) \subset \rho(H)$. Assume that:
				%\begin{enumerate}
				%	\item For every compact interval $I \subset \R$, there exist constants $c_0 > 0$ and $n_0 \in \N$ such that for all $z \in S_a(I)$,
				%	\begin{equation}\label{eq:hypGen1}
				%		\| \Res_H(z) \|_{\mathcal{B}(\Hi)} \leq c_0 |\im(z)|^{-n_0}.
				%	\end{equation}
				%	\item There exists $\theta_0 \in (0, \frac{\pi}{2})$ such that
				%	\begin{equation}\label{HSect}
				%		\sigma(H) \subset (0, +\infty) \cdot e^{i[-\theta_0, \theta_0]}.
				%	\end{equation}
				%\end{enumerate}
				%for any $m \in \N^*$ which satisfies $m \theta_0 < \frac{\pi}{2}$, there exist constants $a_m > 0$ and $C_m > 0$ such that $S_{a_m}((0,\infty)) \subset \rho\left((H+c)^{-m}\right)$ and for all $Z \in S_{a_m}((0,\infty))$,
				%\begin{equation}\label{estiRm}
				%	\left\| \left( H^{-m} - Z \right)^{-1} \right\|_{\mathcal{B}(\Hi)} \leq C_m |\im(Z)|^{-n_0}.
			%	\end{equation}
			%	{\clr F: est ce plutôt $(H+c)^{-m}$ ?}
				Moreover, for any $f \in D(J)$, we have
				\[
				f((H+c)^{-m}) = g_m(H),
				\]
				where $g_m \in D(I)$ is defined by $g_m(\lambda) = f((\lambda+c)^{-m})$.
			\end{prop}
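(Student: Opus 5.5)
We take $L:=H+c$, with $c>0$ chosen so that $-c\in\rho(H)$ and $\sigma(H+c)$ lies in a narrow sector. Since $\sigma(H)$ lies in a horizontal strip $|\im z|\le \|V\|$ and has real part bounded below, taking $c$ large gives $\sigma(L)\subset(0,+\infty)\cdot e^{i[-\theta_0,\theta_0]}$ with $\theta_0<\pi/(2m)$. Lemma~\ref{lemRes} then applies.

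\textbf{Hypothesis~\ref{hyp:eigenvalues} on $J$.} By item~\eqref{item1} of Lemma~\ref{lemRes} and the spectral mapping theorem, the spectrum of $L^{-m}$ away from $0$ is $\{(\mu+c)^{-m}:\mu\in\sigma(H)\}$. The map $\phi(z)=(z+c)^{-m}$ is biholomorphic from a neighborhood of $I\times\{0\}$ in the sector onto a neighborhood of $\phi(I)\supset \overline J$. The Riesz projections correspond, so discrete eigenvalues map to discrete eigenvalues. Since $\phi$ maps real numbers to real numbers and non-real numbers near $I$ to non-real numbers, the non-real eigenvalues of $L^{-m}$ with real part in $J$, lying in a thin strip, come from non-real eigenvalues of $H$ near $I$. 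The key point is to choose the strip width in $J$ small enough that the preimage lies in $S_{a_I}(I)$ or is a finite set of eigenvalues. There are finitely many of these by Hypothesis~\ref{hyp:eigenvalues}.

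\textbf{Hypothesis~\ref{hyp:estimate resolvente}.} For $Z\in S_{a_J}(J)$, write $Z=z^{-m}$ with $z$ in the sector near $I$, so that $z\in S_{a_I}(I)$ and $|\im z|\asymp|\im Z|$ uniformly, since $\phi'$ is nonvanishing on the compact set $\overline I$. This requires $I$ to be bounded, which holds here since $I=(s_0,s_1)$. Inverting \eqref{eqRes1} (equivalently \eqref{eqRes2}) gives
\[
(L^{-m}-z^{-m})^{-1}=-z^{m}G_m(zL^{-1})^{-1}L(L-z)^{-1}.
\]
Next we use $L(L-z)^{-1}=\Id+z(L-z)^{-1}$, together with the boundedness of $G_m(zL^{-1})^{-1}$ uniformly for $z$ in a compact subset of the sector; this follows from its holomorphy (Lemma~\ref{lemRes}). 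These give
\[
\|(L^{-m}-Z)^{-1}\|\le C(1+\|\Res_H(z-c)\|)\le C|\im Z|^{-1-n_I},
\]
which is the bounded-interval form of Hypothesis~\ref{hyp:estimate resolvente} (Remark~\ref{rqHyp}), with $n_J=n_I$.

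\textbf{Composition formula.} We decompose as in Proposition~\ref{prop:decomp}. The projection $\Pi_J((H+c)^{-m})$ coincides with $\Pi_I(H)$, restricted to the eigenvalues that map into $J$. This holds because Riesz projections are preserved under the biholomorphic change of variables, by a contour change in \eqref{eq:RieszProjection}. On the complex part, both sides of the formula vanish by \eqref{eq:representation functional calculus}. It therefore suffices to prove $f(L_r^{-m})=g_m(H_r)$ for the real part, which is where the main work lies. We plan to use the Stone-type formula of the previous proposition for both sides. For the left side, we write
\[
f(L^{-m})=\frac1{2\pi i}\lim_{\varepsilon\to 0^+}\int f(\mu)\bigl(R(\mu+i\varepsilon)-R(\mu-i\varepsilon)\bigr)\,d\mu .
\]
Instead of horizontal lines, we use Stokes's theorem on contours close to $J$ and substitute $\mu=\phi(\lambda)$ along contours $\phi(\lambda\pm i\varepsilon)$; this freedom comes from the fact that the Helffer--Sjöstrand integral depends only on the behavior near the real axis. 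Identity \eqref{eqRes2} expresses $(L^{-m}-z^{-m})^{-1}$ as $-\frac{z^{m+1}}{m}(L-z)^{-1}$ plus a term holomorphic in $z$. Since $d\mu=-mz^{-m-1}\,dz$, the resolvent part gives exactly $\int g_m(\lambda)\bigl(\Res_H(\lambda+i\varepsilon)-\Res_H(\lambda-i\varepsilon)\bigr)\,d\lambda$. The holomorphic remainder contributes zero in the limit, since the two contour integrals cancel against each other. The orientation flips because $\phi$ is decreasing, which absorbs the sign.

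The main obstacle is justifying this contour change, since a curved contour replaces the straight lines of the Stone-type formula. We would handle it by redoing the Stone-type proposition's proof with the almost-analytic extension $\tilde f\circ\phi$, which is an almost-analytic extension of $g_m$ up to $O(|\im|^N)$ errors. Alternatively, we would check the identity first for $f(\mu)=(\mu-w)^{-1}$ via Proposition~\ref{prop:comparison resolvant}, and then conclude by density.
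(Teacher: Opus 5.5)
Your proposal is correct and follows the paper's proof. The choice of $c$ putting $\sigma(H+c)$ in a sector of half-angle $<\pi/(2m)$, the transfer of Hypothesis~\ref{hyp:eigenvalues} through $\varphi_m(z)=(z+c)^{-m}$, and the resolvent bound (inverting \eqref{eqRes1}, bounding $G_m(\cdot)^{-1}$ uniformly, comparing $|\im Z|$ with $|\im z|$) are exactly the paper's steps.

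For the composition formula, the paper needs neither the decomposition nor the Stone-type detour. It takes $\tilde f$ supported in $J\times(-a_J,a_J)$, which meets no non-real eigenvalue, and changes variables $Z=\varphi_m(z)$ directly in the Helffer--Sj\"ostrand area integral, so no orientation bookkeeping arises. It then uses that $\tilde f\circ\varphi_m$ is an almost-analytic extension of $g_m$, \eqref{eqRes2}, Stokes for the holomorphic remainder, and $-\frac{(z+c)^{m+1}}{m}\varphi_m'(z)=1$. This is precisely the fix you propose for your ``main obstacle''.

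Your density alternative via Proposition~\ref{prop:comparison resolvant} would not work as stated: that proposition requires the hypotheses on all of $\R$, and $(\mu-w)^{-1}\notin D(J)$.
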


			\begin{proof}
			Let us fix $\theta_0\in (0,\pi/2)$. As $H_0$ is bounded from below and $V$ is bounded, there exists $c>0$ such that $\sigma(H+c) \subset (0, +\infty) \cdot e^{i[-\theta_0, \theta_0]}$. Moreover, since $	\sigma_\mathrm{ess}(H) = \sigma_\mathrm{ess}(H_0)$, each non-real number in $\sigma(H+c)$ is a discrete eigenvalue. By using Lemma~\ref{lemRes} with $L=H+c$ and the biholomorphic function
				\begin{align}
					\varphi_m : -c + (0, +\infty) \cdot e^{i[-\theta_0, \theta_0]} &\longrightarrow (0, +\infty) \cdot e^{i[-m\theta_0, m\theta_0]} \\
					z =-c + \rho e^{i\theta} &\longmapsto (z+c)^{-m} = \rho^{-m} e^{-im\theta},
				\end{align}
				we have $\sigma((H+c)^{-m}) \subset (0, +\infty) \cdot e^{i[-m\theta_0,m \theta_0]} \subset \C_>$ and the non-real spectrum of $(H+c)^{-m}$ consists of discrete eigenvalues $\Lambda = \varphi_m(\lambda)$ with $\lambda$ a non-real eigenvalue of $H$. So if $H$ satisfies Hypothesis~\ref{hyp:eigenvalues}, no point in the interval $I$ is an accumulation point of discrete eigenvalues and Hypothesis~\ref{hyp:eigenvalues} holds true for $(H+c)^{-m}$ on $\varphi_m(I)= ((s_1+c)^{-m} , (s_0+c)^{-m})$.
				
				Moreover, under the Hypothesis~\ref{hyp:estimate resolvente} for $H$ on $I$, there exists $a_I>0$ such that $S_{a_I}(I)\subset \rho(H)$ and then, $\varphi_m(S_{a_I}(I))\subset \rho((H+c)^{-m})$.
				Let $J = (r_0, r_1)$ with $(s_1+c)^{-m} < r_0 < r_1 < (s_0+c)^{-m}$. By taking $a_J>0$ sufficiently small such that $S_{a_J}(J) \subset \varphi_m(S_{a_I}(I))$, we have $S_{a_J}(J))\subset \rho((H+c)^{-m})$ and for $Z \in S_{a_J}(J)$ there exists $z \in S_{a_I}(I)$ such that $Z= \varphi_m(z)$.
				By combining \eqref{eqRes1} for $L=H+c$ (and for $z+c$ instead of $z$) with Hypothesis~\ref{hyp:estimate resolvente} for $H$ on $I$, we obtain:
				\begin{align*}
				    \| \Big( (H+c)^{-m} - Z\Big)^{-1} \|_{\mathcal{B}(\Hi)} & \leq \sup_{z \in \overline{S_{a_I}(I)}} \|((z+c)^{-m} G_m((z+c)(H+c)^{-1}))^{-1} \| \, \|(H+c)(H-z)^{-1} \|\\
				    & \leq C_J |\im(z)|^{-n_I-1},
				\end{align*}
				for some $C_J>0$.
				 Writing $z = -c + \rho e^{i\theta}$, with $\theta \in (-\frac{\pi}{2m} , \frac{\pi}{2m})$ (it is possible for $a_I$ sufficiently small), we have
				$$
				Z = (z+c)^{-m} = \rho^{-m} e^{- i m \theta} , \quad  \text{ and } \quad \im (Z) =  -\rho^{-m} \sin(m\theta).
				$$
				Thanks to the inequality $\frac{2}{\pi}x \leq \sin(x) \leq x$ for $x \in [0, \pi/2]$, we deduce
				\begin{equation}\label{eqZz}
					\rho^{-(m+1)}  \frac{2}{\pi} m |\im (z)| \leq 
					\frac{2}{\pi} \rho^{-m}  m |\theta| \leq 
					|\im (Z)| \leq \rho^{-m}  m |\theta| \leq \rho^{-(m+1)} \frac{\pi}{2} m |\im (z)|,
				\end{equation}
			and the Hypothesis~\ref{hyp:estimate resolvente} holds for $(H+c)^{-m}$ on $J$.
			
				Now let $f \in D(J)$ and let $\tilde{f} \in \mathcal{C}^\infty_c(\C)$ be an almost analytic extension of $f$, supported in $J \times (-a_J, a_J)$. Using the change of variables $Z = \varphi_m(z)$ (with $Z = X + iY$, $z = x + i y$) in the functional calculus formula
				$$
				f((H+c)^{-m}) := \pi^{-1} \int_{\C} \partial_{\overline{Z}} \tilde{f}(Z) \left((H+c)^{-m} - Z\right)^{-1} \, \mathrm{d}X \mathrm{d}Y,
				$$
				we obtain
				$$
				f((H+c)^{-m}) := \pi^{-1} \int_{\C} \partial_{\overline{Z}} \tilde{f}(\varphi_m(z)) \left((H+c)^{-m} - (z+c)^{-m}\right)^{-1} \left| \partial_z \varphi_m(z) \right|^2 \, \mathrm{d}x \mathrm{d}y.
				$$
				
				\vspace{0.1cm}\noindent
				From \eqref{eqZz}, we know that the function $\tilde{g}_m := \tilde{f} \circ \varphi_m$ is an almost analytic extension of $g_m = f \circ \varphi_m$, %\footnote{A détailler ?}
				and satisfies
				$$
				\partial_{\bar{z}} \tilde{g}_m(z) = \partial_z \varphi_m(\bar{z}) \cdot \partial_{\overline{Z}} \tilde{f}(\varphi_m(z)).
				$$
				Then, using the identity \eqref{eqRes2} from Lemma~\ref{lemRes}, and the fact that the map 
				$$
				z \mapsto \sum_{k=1}^{m-1} G_k((z+c)(H+c)^{-1}) G_m((z+c)(H+c)^{-1})^{-1}
				$$
				is holomorphic on the support of $\tilde{g}_m$, we obtain
				$$
				f((H+c)^{-m}) = \pi^{-1} \int_{\R^2} \partial_{\bar{z}} \tilde{g}_m(z)
				\left( - \frac{(z+c)^{m+1}}{m} (H - z)^{-1} \right) \partial_z \varphi_m(z) \, \mathrm{d}x \mathrm{d}y = g_m(H),
				$$
				where we use the identity $- \frac{(z+c)^{m+1}}{m} \cdot \partial_z \varphi_m(z) = 1$.
			\end{proof}
			
	        \section{SSF for Trace class perturbations}\label{subsec:SSF}
			
			%In this section we return to the framework described in subsection \ref{sub:setting}. 
			
			\subsection{Existence}
			
			The following theorem provides the key analytic framework needed to define the spectral shift function (SSF) in a general non-self-adjoint setting. It ensures the trace-class property of the difference $f(H) - f(H_0)$ under suitable assumptions, allowing one to associate a distribution to this difference.

			\begin{theorem}\label{thm:existence SSF}
				Suppose that Hypotheses \ref{hyp:eigenvalues}, \ref{hyp:estimate resolvente} and \ref{hyp:Trace} hold on an open interval $I$ with $m=-1$. 
				Then, for all $f\in D(I)$ the operator difference $f(H) - f(H_0)$ belongs to $\mathcal{L}_1(\Hi)$. Moreover, the map
        	\begin{equation}\label{eq:SSFD'}
					f \longmapsto \Tr\big(f(H) - f(H_0)\big)
				\end{equation}
				defines a distribution on $I$ which vanishes on $I \cap \rho(H) \cap \rho(H_0)$.
			\end{theorem}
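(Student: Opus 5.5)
The plan is to write both $f(H)$ and $f(H_0)$ through the Helffer--Sj\"ostrand formula of Definition-Proposition~\ref{definprop:HJ}, using a common almost-analytic extension $\tilde f$ of order $N>n_I+1$. We choose $\tilde f$ to be $H$-admissible and supported in $S_{a_I}(J)\cup J$, where $J=\supp f\Subset I$ is compact. This is possible by Hypothesis~\ref{hyp:eigenvalues}, since only finitely many non-real eigenvalues have real part in $I$. Then
\[
f(H)-f(H_0)=\frac1\pi\int_\C \partial_{\bar z}\tilde f(z)\,\bigl(\Res_H(z)-\Res_{H_0}(z)\bigr)\,\mathrm{d}x\,\mathrm{d}y,
\]
with $\Res_H(z)-\Res_{H_0}(z)=-\Res_H(z)V\Res_{H_0}(z)$ for $z$ in the support, away from the real axis.

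Since $V\in\mathcal L_1(\Hi)$ (this is the case $m=-1$ of Hypothesis~\ref{hyp:Trace}), the trace norm of the integrand is bounded by
\[
|\partial_{\bar z}\tilde f(z)|\,\|\Res_H(z)\|\,\|V\|_{\mathcal L_1}\,\|\Res_{H_0}(z)\|
\le C\,|\im z|^{N}\,|\im z|^{-n_I-1}\,\|V\|_{\mathcal L_1}\,|\im z|^{-1}.
\]
Here we use Hypothesis~\ref{hyp:estimate resolvente} in its bounded-interval form from Remark~\ref{rqHyp}, and $\|\Res_{H_0}(z)\|\le|\im z|^{-1}$. With $N\ge n_I+2$ the integrand is bounded and compactly supported. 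Hence the integral converges absolutely as a Bochner integral in $\mathcal L_1(\Hi)$, which gives $f(H)-f(H_0)\in\mathcal L_1$. The independence of this trace-class representative from the chosen extension follows because both terms are independent of it as bounded operators, by Definition-Proposition~\ref{definprop:HJ}.

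\textbf{Distribution property.} From \eqref{eq:PPA}, $\sup|\partial_{\bar z}\tilde f|\,|\im z|^{-N}$ on the support is controlled by $\sum_{k\le N+1}\sup|f^{(k)}|$, with the cutoff fixed independently of $f\in D(K)$ for each compact $K\subset I$. This yields
\[
|\Tr(f(H)-f(H_0))|\le C_K\,\|V\|_{\mathcal L_1}\sum_{k\le N+1}\|f^{(k)}\|_\infty .
\]
Linearity is clear, so the map is a distribution of order at most $N+1$.

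\textbf{Vanishing.} For the vanishing on $I\cap\rho(H)\cap\rho(H_0)$, take $f$ supported in an open set $U\subset I$ with $\overline U\subset\rho(H)\cap\rho(H_0)$. Proposition~\ref{prop:propertiesfunctionalcalculus}, applied to $H$ and to $H_0$ (the latter being trivially within the hypotheses), gives $f(H)=0=f(H_0)$, so the trace vanishes. Alternatively, one applies Stokes directly: the difference of resolvents is holomorphic near $\supp\tilde f$ once the cutoff is shrunk, so the integral vanishes.

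The main obstacle I expect is the first step, namely making sure that the bounded-operator formula for $f(H)$ coincides with the $\mathcal L_1$-valued integral. This requires that the non-real eigenvalue part contributes nothing, which is Lemma~\ref{lm:calculfunctional-complex}. I would handle it by working directly with an admissible $\tilde f$ supported in the eigenvalue-free strip, so that no projection appears, and by noting that Bochner integrals in $\mathcal L_1$ and in $\mathcal B(\Hi)$ agree.
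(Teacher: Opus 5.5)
Your proposal is correct and follows essentially the same route as the paper. Both use a common almost-analytic extension of order $n_I+2$ supported in an eigenvalue-free strip, the resolvent identity $\Res_H-\Res_0=-\Res_H V\Res_0$ to place the integrand in $\mathcal{L}_1(\Hi)$, a seminorm bound on the trace for the distribution property, and Proposition~\ref{prop:propertiesfunctionalcalculus} for the vanishing. Your explicit trace-norm bound $C\|V\|_{\mathcal{L}_1}|\im z|^{N-n_I-2}$ on $\supp\tilde f$ supplies the integrability detail that the paper only asserts when it says the full integral lies in $\mathcal{L}_1(\Hi)$.
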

			
			\begin{proof}
				Let $\tilde{f}$ be an almost analytic extension of $f$ of order $n_I+2$, 
				supported in $\overline{S_a(K)}$, with $K$ a compact subinterval of $I$ containing the support of $f$ and  $a>0$ chosen small enough so that $\sigma(H)\cap S_a(K)=\emptyset$. Then, %using the Helffer–Sj\"ostrand formula, 
				by definition, we have
				\begin{align*}
					f(H) &- f(H_0)
					= \pi^{-1} \int_{\C} \partial_{\bar{z}} \tilde{f}(z) \left( \Res_H(z) - \Res_0(z) \right) \, \mathrm{d}x\mathrm{d}y   \\
					&= -\pi^{-1} \int_{\C} \partial_{\bar{z}} \tilde{f}(z) \Res_H(z) V \Res_0(z) \,
					\mathrm{d}x\mathrm{d}y \\
					&= -\pi^{-1} \left(\int_{\im(z)>0} \partial_{\bar{z}} \tilde{f}(z) \Res_H(z) V \Res_0(z) \, \mathrm{d}x\mathrm{d}y + \int_{\im(z)<0} \partial_{\bar{z}} \tilde{f}(z) \Res_H(z) V \Res_0(z) \, \mathrm{d}x\mathrm{d}y\right). 
				\end{align*}
				%By construction of $\tilde{f}$, 
				%\begin{equation}\label{eq:proof trace class1}
				%	|\partial_{\bar{z}} \tilde{f}(z)| \cdot \|\Res_0(z)\|_{\mathcal{B}(\Hi)} \cdot %\|\Res_H(z)\|_{\mathcal{B}(\Hi)} \leq c_0\|f\|_{n_I+2}.
				%\end{equation}
				\noindent
				For all $z \in \rho(H) \cap \rho (H_0)$, $\partial_{\bar{z}} \tilde{f}(z) \Res_H(z) V \Res_0(z)$ is of trace class and since $\tilde{f}$ is compactly supported, this shows that the full integral belongs to $\mathcal{L}_1(\Hi)$. Finally,	in the same way as \eqref{estimationcf}, we get 
			
				\begin{equation*}
					\left| \Tr\left(f(H) - f(H_0)\right) \right| \leq \beta_{K} \|V\|_{\mathcal{L}_1}\|f\|_{n_{K}+2},
				\end{equation*}
				which prove that \eqref{eq:SSFD'} is a distribution on $I$. Finally Proposition~\ref{prop:propertiesfunctionalcalculus} implies that the distribution vanishes on $\R\cap\rho(H)\cap\rho(H_0)$.  
			\end{proof}			
			
			%%%%%%%%%%%%%%%%%%%%%%  Définition de la SSF %%%%%%%%%%%%%%%%%

			Under the assumptions of Theorem~\ref{thm:existence SSF}, the trace formula naturally leads to the definition of the Spectral Shift Function on $I$, which captures the variation of the spectrum under perturbation. This function is initially defined only up to an additive constant, which can be fixed by prescribing its value on a spectral gap when such a gap exists in $I \cap \rho(H) \cap \rho(H_0)$.

			\begin{definition}\label{def:xi}
			   Suppose that Hypotheses \ref{hyp:eigenvalues}, \ref{hyp:estimate resolvente} and \ref{hyp:Trace} hold on an open interval $I$ with $m=-1$. The \emph{spectral shift function} $\xi(\cdot; H, H_0)$ on $I$ is defined, up to an additive constant, as the distribution satisfying
				\begin{equation*}
					\langle \xi', f \rangle := \Tr\big(f(H) - f(H_0)\big), \quad \text{for all } f \in D(I).
				\end{equation*}
				When $I \cap \rho(H) \cap \rho(H_0)$ is nonempty (for example, if $I = \R$ and both $H_0$ and $H$ are bounded or semi-bounded), the additive constant can be uniquely determined by prescribing the value of $\xi$ on a spectral gap, i.e., an interval contained in $I \cap \rho(H) \cap \rho(H_0)$.
			\end{definition}
			
			%%%%%%%%%%%%%%%%%%%%%%%%%%%%%%%%%%%%%%%%%%%%%%%%%%%%%%%%%%%%%%
			
			\subsection{A first representation formula for the SSF}
			
			Once the spectral shift function (SSF) has been defined, we aim to establish more precise properties, starting with a first representation formula. For $z \in \rho(H) \cap \rho(H_0)$, we define
			\begin{equation*}
				\sigma(z) := \Tr\big(\Res_H(z) - \Res_0(z)\big).
			\end{equation*}
			This quantity is well-defined whenever $H - H_0 \in \mathcal{L}_1(\Hi)$, as ensured by the resolvent identity. Our goal is to show that the derivative of the spectral shift function, as defined in Definition~\ref{def:xi}, admits a concrete representation in terms of the discontinuity of $\sigma(z)$ across the real axis.
			
			\begin{prop}
		\label{lm:green formula}
				Let $I \subset \R$ be an open interval. Suppose that Hypotheses \ref{hyp:eigenvalues}, \ref{hyp:estimate resolvente} and \ref{hyp:Trace} hold on $I$ with $m=-1$. Then, in the sense of distributions on $I$, we have
				\begin{equation}\label{ssf=sigma}
					\xi^\prime(\cdot; H, H_0) = \frac{1}{2\pi i} \lim_{\varepsilon \rightarrow 0^+} \left( \sigma(\cdot + i\varepsilon) - \sigma(\cdot - i\varepsilon) \right).
				\end{equation}
			\end{prop}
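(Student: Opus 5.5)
Fix $f\in D(I)$ and an almost analytic extension $\tilde f$ of order $N\ge n_I+2$ supported in $\overline{S_a(K)}$, where $K\subset I$ is a compact interval containing $\supp f$ and $a>0$ is so small that $S_a(K)$ contains no non-real eigenvalue of $H$. The plan is to run the argument of the Stone-type formula proved just before the comparison with the Frantz--Faupin calculus, with the resolvent replaced by the trace-class resolvent difference. The starting point is the formula from the proof of Theorem~\ref{thm:existence SSF}: $f(H)-f(H_0)$ is given by the Helffer--Sjöstrand integral with $\Res_H(z)-\Res_0(z)=-\Res_H(z)V\Res_0(z)\in\mathcal L_1(\Hi)$. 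Its trace norm is bounded by $\|V\|_{\mathcal L_1}\|\Res_H(z)\|\,\|\Res_0(z)\|\le C|\im z|^{-n_I-2}$ on $S_a(K)$, and this is integrable against $|\partial_{\bar z}\tilde f|=O(|\im z|^{N})$. Hence the trace commutes with the integral, and
\[
\langle \xi',f\rangle=\pi^{-1}\int_{\C}\partial_{\bar z}\tilde f(z)\,\sigma(z)\,\mathrm dx\,\mathrm dy,
\]
where $\sigma$ is holomorphic on $S_a(K)$ with $|\sigma(z)|\le C|\im z|^{-n_I-2}$.

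Next, I shift the contour. By dominated convergence,
\[
\langle\xi',f\rangle=\lim_{\varepsilon\to0^+}\pi^{-1}\Big(\int_{\im z>0}\partial_{\bar z}\tilde f(z)\sigma(z+i\varepsilon)\,\mathrm dx\mathrm dy+\int_{\im z<0}\partial_{\bar z}\tilde f(z)\sigma(z-i\varepsilon)\,\mathrm dx\mathrm dy\Big).
\]
This step needs the uniform bound $|\partial_{\bar z}\tilde f(z)\sigma(z\pm i\varepsilon)|\le C|\im z|^{N}|\im z\pm i\varepsilon|^{-n_I-2}\le C|\im z|^{N-n_I-2}$ for $\pm\im z>0$, which holds because $|\im(z\pm i\varepsilon)|\ge|\im z|$ there. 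I expect this domination to be the main point to check: the bound must be uniform in $\varepsilon$, and this is exactly where the choice of order $N\ge n_I+2$ of $\tilde f$ is used. It is also why the shift is made away from the real axis in each half-plane.

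For fixed small $\varepsilon$, the function $z\mapsto\sigma(z\pm i\varepsilon)$ is holomorphic on a neighbourhood of $\overline{S_a(K)}\cap\overline{\C^\pm}$, including the real segment. Applying Stokes' formula on each half of $S_a(K)$ as in \eqref{eq:cfFF2}, the boundary terms at $|\im z|=a$ and at the ends of $K$ vanish by the support of $\tilde f$. Only the real segment remains, and it contributes
\[
\pm\frac{1}{2i}\int_K f(\lambda)\sigma(\lambda\pm i\varepsilon)\,\mathrm d\lambda
\]
with orientation fixed as in \eqref{eq:cfFF3}. Summing the two pieces gives
\[
\langle\xi',f\rangle=\lim_{\varepsilon\to0^+}\frac{1}{2\pi i}\int_{\R}f(\lambda)\big(\sigma(\lambda+i\varepsilon)-\sigma(\lambda-i\varepsilon)\big)\,\mathrm d\lambda.
\]
This is \eqref{ssf=sigma} tested against $f$. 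Existence of the limit is part of the conclusion, since the left side does not depend on $\varepsilon$.

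I will double-check the sign and orientation conventions against \eqref{eq:cfFF0}--\eqref{eq:cfFF3}, where the same computation gives $\frac1{2\pi i}\int f(\Res(\lambda+i\varepsilon)-\Res(\lambda-i\varepsilon))$. Since the argument is literally that computation with $\Res_H$ replaced by $\Res_H-\Res_0$ followed by taking the trace, the constants carry over unchanged.
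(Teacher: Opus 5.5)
Your proposal is correct and follows the paper's proof essentially step by step. You write $\langle\xi',f\rangle$ as the Helffer--Sj\"ostrand integral of $\partial_{\bar z}\tilde f\,\sigma$, replace $\sigma(z)$ by $\sigma(z\pm i\varepsilon)$ on each half of $S_a(K)$ by dominated convergence, and apply Stokes so that only the real segment contributes $\pm\frac{1}{2i}\int f(\lambda)\sigma(\lambda\pm i\varepsilon)\,d\lambda$. Your explicit uniform majorant $C|\im z|^{N-n_I-2}$ (from $|\im(z\pm i\varepsilon)|\ge|\im z|$ and $N\ge n_I+2$) makes precise the domination step that the paper only sketches.
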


			\begin{remark}
				Note that the right-hand side of~\eqref{ssf=sigma} vanishes on the set $I \cap \rho(H) \cap \rho(H_0)$, since $\sigma(\cdot)$ is well-defined and continuous there. More precisely, for any $\lambda \in I \cap \rho(H) \cap \rho(H_0)$, we have
				$$
				\lim_{\varepsilon \rightarrow 0^+} \sigma(\lambda \pm i\varepsilon) = \sigma(\lambda).
				$$
			\end{remark}

			\begin{proof}
				Let $f \in D(I)$ be a test function supported in a compact interval $K \subset I$, and let $\tilde{f}$ be an almost analytic extension %of order $m+1$ of $f$, 
				supported in $\overline{S_a(K)}$, where $a > 0$ is choosen small enough so that $\sigma(H)\cap S_a(K)=\emptyset$. Then, %using the Helffer–Sj\"ostrand formula, 
				we write:
				\begin{align}
					\langle \xi', f \rangle 
					&= \pi^{-1} \int_{\C} \partial_{\bar{z}} \tilde{f}(z) \Tr\left( \Res_H(z) - \Res_0(z) \right) \, \mathrm{d}x\mathrm{d}y  \nonumber \\
					&= \pi^{-1} \left( \lim_{\varepsilon \to 0^+} \int_{\im(z)>0} \partial_{\bar{z}} \tilde{f}(z) \, \sigma(z + i\varepsilon) \, \mathrm{d}x\mathrm{d}y \,
					+ \lim_{\varepsilon \to 0^+} \int_{\im(z)<0} \partial_{\bar{z}} \tilde{f}(z) \, \sigma(z - i\varepsilon) \, \mathrm{d}x\mathrm{d}y \,  \right),
					\label{eq:proof Green representation1}
				\end{align}
				where $\sigma(z) := \Tr\left( \Res_H(z) - \Res_0(z) \right)$. 
				
				\vspace{0.2cm}\noindent
			Indeed, by construction, the support of $\tilde{f}$ does not contain any complex eigenvalue of $H$. Hence, for $\varepsilon > 0$ small enough, the map $z \mapsto \partial_{\bar{z}} \tilde{f}(z) \Res_H(z \pm i\varepsilon)$ is continuous on $\supp(\tilde{f})$. Moreover, since $H - H_0 \in \mathcal{L}_1(\Hi)$, resolvent identity implies that $\Res_H(z) - \Res_0(z)$ is trace-class for $z \in \rho(H) \cap \rho(H_0)$, and we have
				$$
				\lim_{\varepsilon \to 0^+} \sigma(z \pm i\varepsilon) = \sigma(z), \quad \text{for all } z \in S_a(K) \text{ with } \pm \im(z) > 0.
				$$
				In addition, for all sufficiently small $\varepsilon > 0$ and all $z \in S_a(K)$ with $\pm \im(z) > 0$, we estimate
				\begin{align*}
					\left| \partial_{\bar{z}} \tilde{f}(z) \, \sigma(z \pm i\varepsilon) \right|
					&= \left| \partial_{\bar{z}} \tilde{f}(z) \, \Tr\left( \Res_H(z \pm i\varepsilon) V \Res_0(z \pm i\varepsilon) \right) \right| \\
					&\leq \left| \partial_{\bar{z}} \tilde{f}(z) \right| \cdot \|V\|_{\mathcal{L}_1(\Hi)} \cdot \|\Res_H(z \pm i\varepsilon)\| \cdot \|\Res_0(z \pm i\varepsilon)\| .
				%	&\leq c_K \|f\|_{n_K+2} \|V\|_{\mathcal{L}_1}.
				\end{align*}
				%for some constant $c_0 > 0$, uniformly on the support of $\tilde{f}$.
				
				\vspace{0.1cm}\noindent
				Now, since $z \mapsto \sigma(z \pm i\varepsilon)$ is holomorphic on $S_a(K) \cap \C^\pm$, we may apply Stokes' theorem to obtain:
				\begin{equation}
					\int_{\pm \im(z) > 0} \partial_{\bar{z}} \tilde{f}(z) \, \sigma(z \pm i\varepsilon) \, \mathrm{d}x \, \mathrm{d}y 
					= \pm \frac{1}{2i} \int_{\Gamma_a^\pm} \tilde{f}(\lambda) \, \sigma(\lambda \pm i\varepsilon) \, \mathrm{d}\lambda,
					\label{eq:proof Green representation2}
				\end{equation}
				where $\Gamma_a^\pm$ denotes the boundary of $S_a(K) \cap \C^\pm$. This boundary integral can be decomposed into four parts, three of which vanish due to the support properties of $\tilde{f}$. Therefore,
				\begin{equation}\label{eq:proof Green representation3}
					\pm \frac{1}{2i} \int_{\Gamma_a^\pm} \tilde{f}(\lambda) \, \sigma(\lambda \pm i\varepsilon) \, \mathrm{d}\lambda
					= \pm \frac{1}{2i} \int_{\supp(f)} f(\lambda) \, \sigma(\lambda \pm i\varepsilon) \, \mathrm{d}\lambda.
				\end{equation}
				Substituting~\eqref{eq:proof Green representation3} into~\eqref{eq:proof Green representation1} yields the desired identity~\eqref{ssf=sigma}.
			\end{proof}
			
			\begin{remark}
			 Proposition \ref{lm:green formula}  allows to give the following extension to non-selfadjoint operators of the representation formula \eqref{det}. Under the assumptions of Theorem \ref{thm:existence SSF}, 
			 %for $V:=H - H_0 \in \mathcal{L}_1(\Hi)$, $z \in \C \setminus \R$
			 let us introduce the perturbation determinant 
			 \begin{equation}\label{defDV}
			     D_V(z):= \Det \left( (H_0 + V-z) (H_0 - z)^{-1} \right) = \Det \left( I + V (H_0 - z)^{-1} \right).
			 \end{equation}
			 It is a non vanishing analytic function on each $S_a(K)$, $K \subset I$ compact (because it is in the resolvent set of $H$ and $H_0$) and it satisfies (see e.g. \cite[Chapter 8]{Ya92}):
			 \[D_V(z)^{-1} D_V^\prime(z)= \Tr\big(\Res_H(z) - \Res_0(z)\big) = 	\sigma(z). \]
			 It follows that, up to a constant, the function $\ln D_V(z)$ is well defined on each $S_a(K)$ and satisfies:
			 \[\frac{d}{d\lambda}\ln D_V(\lambda \pm i \varepsilon) = \sigma(\lambda \pm i \varepsilon), \quad \lambda \in J, \quad \varepsilon >0.\]
			 Then, up to a constant, Proposition \ref{lm:green formula} suggest the following extension of \eqref{det} (when the limit exists):
			 \begin{equation}\label{ssfLnD}
			    \xi(\lambda; H, H_0) = \frac{1}{2\pi i} \lim_{\varepsilon \to 0^+} \left( \ln D_V(\lambda + i \varepsilon) - \ln D_V(\lambda - i \varepsilon) \right).
			 \end{equation}
			 Of course, when $H_0+V$ is selfadjoint, it corresponds to \eqref{det} because in this case, $\ln D_V(\lambda - i \varepsilon)$ is the complex conjugate of $\ln D_V(\lambda + i \varepsilon)$.
			\end{remark}
			
		 \begin{remark} 
			%\footnote{\clb V: INDIQUER et DEMONTRER CE RESULTAT PLUS TOT de facon generale?}
			 From the above results it follows that %when it exists, 
			 the SSF for the adjoint $H^*$ coincides with the complex conjugate of the SSF for $H$:
			\begin{equation}\label{ssf*}
			     \xi(\lambda; H^*, H_0) = \overline{\xi(\lambda; H, H_0)}.
			 \end{equation} 
			\end{remark}

			\section{The SSF for non-selfadjoint relatively trace class perturbations}\label{Sec:relTr}

			In many applications, the difference $H - H_0$ does not belong to the trace class. However, the difference between their resolvents, or between powers of their resolvents, may indeed be trace class. The aim of this section is to extend the definition of the spectral shift function (SSF) to such settings.

			\vspace{0.2cm}\noindent
			The central idea is to consider the pair of operators
			$$
			(X, X_0) := \left( (H + c)^{-m},\ (H_0 + c)^{-m} \right),
			$$
			where $c \in \R$ and $m \in \N^*$ are chosen so that $X - X_0 \in \mathcal{L}_1(\Hi)$. In this case, Definition~\ref{def:xi} can be applied directly to $(X, X_0)$, thus defining a spectral shift function for this pair. Then, to retrieve the SSF associated with the original pair $(H, H_0)$, we perform a change of variables: for $\lambda > -c$, the spectral parameter $\mu$ for $X$ is taken as $\mu = (\lambda + c)^{-m}$. Accordingly, we define
			\begin{equation}
				\xi(\lambda; H, H_0) := \xi\big((\lambda + c)^{-m};\ (H + c)^{-m},\ (H_0 + c)^{-m}\big).
			\end{equation}

			\subsection{Definition of the SSF for relatively trace class perturbations}

			We begin with a theorem that allows the definition of the spectral shift function (SSF) in the case where the perturbation is relatively trace class, i.e., when a suitable power of the resolvent difference belongs to $\mathcal{L}_1(\Hi)$.

			%%%%%%%%%%%%%%%%%%%%%%%%%%% Commentaires %%%%%%%%%%%%%%%%%%%%%%%%%%%

			\begin{prop}\label{prop:ssfRelTr}
				Let $\left(H_0,\mathcal{D}(H_0)\right)$ be a self-adjoint, semi-bounded operator, and let $\left(H,\mathcal{D}(H_0)\right)$ be a closed operator acting on a complex separable Hilbert space $\Hi$.
				
				Assume there exist $c \in \R$ and $m \in \N^*$ such that:
				\begin{enumerate}
					\item $V:=H-H_0$ is a bounded operator
					\item\label{it2:existenceSSF} The operator $H+c$ is invertible with
					\begin{align}
					%	(-\infty, -c] &\subset \rho(H), \\
						\sigma\big((H + c)^{-m}\big) &\subset (0, +\infty) \cdot e^{i(-\frac{\pi}{2}, \frac{\pi}{2})} 
					%	(H + c)^{-m} - (H_0 + c)^{-m} &\in \mathcal{L}_1(\Hi).
					\end{align}
					\item
				Hypotheses \ref{hyp:eigenvalues}, \ref{hyp:estimate resolvente} and \ref{hyp:Trace} hold on $I=(s_0,s_1)$, $s_0<s_1$ for some $m \in \N^*$	
					%\label{it1:existenceSSF} There exist constants $a > 0$, $n_0 \in \N$, and $c_0 > 0$ such that 
				%	\[	U_a := \{ z \in \C : 0 < |\im(z)| < a \} \subset \rho(H) \cap \rho(H_0),	\]
				%	and, on $U_a$, the resolvent of $H$ satisfies the estimate
				%	\begin{equation}\label{eq:estRes}
				%		\|\Res_H(z)\|_{\mathcal{B}(\Hi)} \leq c_0 |\im(z)|^{-n_0}.
				%	\end{equation}
				\end{enumerate}
				
				Then the spectral shift function $\xi_{c,m} := \xi(\cdot\, ; (H + c)^{-m}, (H_0 + c)^{-m})$ is well-defined on 
				$((s_1+c)^{-m} , (s_0+c)^{-m})$ up to a constant. 
				%$(0, +\infty)$ and is uniquely determined by the normalization condition
			%	\[	\xi_{c,m}(\mu) = 0 \quad \text{for all } \mu \geq A,	\]
				%where $A > \| (H + c)^{-m} \|$.
				
				Moreover, the distribution $\xi(\cdot\, ; H, H_0)$ defined by
				\begin{equation}\label{defssfRT}
					\xi(\lambda; H, H_0) := \xi_{c,m}((\lambda + c)^{-m})  
					%\begin{cases}
					%	\xi_{c,m}((\lambda + c)^{-m}) & \text{for } \lambda > -c, \\
					%	0 & \text{for } \lambda \leq -c,
					%\end{cases}
				\end{equation}
			for $\lambda \in I=(s_0,s_1)$, is independent of the choice of $(c,m)$ and satisfies the trace formula \eqref{ssf}. Uniqueness is guaranteed by choosing $\xi(\cdot; H, H_0)= 0$ on a spectral gap (if it exists).
			\end{prop}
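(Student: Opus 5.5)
The plan is to reduce everything to the trace-class results of Section~\ref{subsec:SSF}, applied to the pair $(X,X_0):=((H+c)^{-m},(H_0+c)^{-m})$, and then transport them back to $H$ through Proposition~\ref{relative}.

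\textbf{Step 1: hypotheses for the pair $(X,X_0)$.} Hypothesis~\ref{hyp:Trace} gives directly that $X-X_0\in\mathcal{L}_1(\Hi)$. Proposition~\ref{relative}, applied to $H$ and to the self-adjoint $H_0$, shows that $X$ and $X_0$ satisfy Hypotheses~\ref{hyp:eigenvalues} and \ref{hyp:estimate resolvente} on every $J=(r_0,r_1)$ compactly contained in $\varphi_m(I)=((s_1+c)^{-m},(s_0+c)^{-m})$. For this step I enlarge $c$ if necessary, which is allowed by item~\eqref{it2:existenceSSF} and the semi-boundedness of $H_0$.

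The reference operator $X_0$ is bounded and self-adjoint. The perturbed operator $X$ is bounded but not self-adjoint, and Section~\ref{subsec:SSF} does not really use self-adjointness of the perturbed operator, so Theorem~\ref{thm:existence SSF} and Definition~\ref{def:xi} apply to the pair $(X,X_0)$ on each such $J$. Exhausting $\varphi_m(I)$ by the intervals $J$, and using that the distributions are compatible on overlaps by construction, yields $\xi_{c,m}$ on the whole of $\varphi_m(I)$, up to an additive constant.

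\textbf{Step 2: the trace formula for $H$.} Let $g\in D(I)$ and set $f:=g\circ\varphi_m^{-1}$, that is $f(\mu)=g(\mu^{-1/m}-c)$. Then $f\in D(J)$ for a suitable $J$. Proposition~\ref{relative}, applied to both $H$ and $H_0$, gives $g(H)=f(X)$ and $g(H_0)=f(X_0)$. Hence
\[
\Tr(g(H)-g(H_0))=\Tr(f(X)-f(X_0))=\int \xi_{c,m}(\mu)f'(\mu)\,\mathrm{d}\mu .
\]
Interpreting the right-hand side distributionally, I change variables $\mu=\varphi_m(\lambda)$, which is a smooth decreasing diffeomorphism. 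Since $f'(\mu)\,\mathrm{d}\mu = g'(\lambda)\,\mathrm{d}\lambda$ (the orientation reversal is absorbed by the sign of the Jacobian), this becomes $\int \xi_{c,m}(\varphi_m(\lambda))g'(\lambda)\,\mathrm{d}\lambda$, which is \eqref{ssf} for the function defined in \eqref{defssfRT}. This is exactly the pullback of distributions, so it remains valid even when $\xi_{c,m}$ is only a distribution.

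\textbf{Step 3: independence and uniqueness.} If $(c,m)$ and $(c',m')$ are two admissible choices, both resulting functions satisfy \eqref{ssf} on $I$. Their difference therefore has zero distributional derivative on the interval $I$ and is a constant. Since each SSF is itself only defined up to a constant, this is the claimed independence. Fixing the value $0$ on a spectral gap contained in $I$ removes the constant and gives uniqueness.

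The main obstacle is Step 1. One must check carefully that Proposition~\ref{relative} supplies the functional calculus for $X$ on the relevant intervals, and that Section~\ref{subsec:SSF} applies to a bounded non-self-adjoint $X$ with the essential-spectrum and eigenvalue assumptions transported through $\varphi_m$. In particular, $S_{a_J}(J)$ must avoid the non-real eigenvalues of $X$; these are exactly the images under $\varphi_m$ of those of $H$, so the condition follows from Hypothesis~\ref{hyp:eigenvalues} on $I$.
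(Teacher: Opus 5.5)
Your route is the paper's. Proposition~\ref{relative} transfers Hypotheses~\ref{hyp:eigenvalues}--\ref{hyp:estimate resolvente} to $X=(H+c)^{-m}$ on intervals relatively compact in $\varphi_m(I)$, and Theorem~\ref{thm:existence SSF} gives $\xi_{c,m}$. The identities $g(H)=f(X)$, $g(H_0)=f(X_0)$ then transport the trace formula, and independence of $(c,m)$ holds because $g(H)$ and $g(H_0)$ do not involve $(c,m)$. Two points need fixing.

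\textbf{The sign in Step 2.} The claim that the orientation reversal is ``absorbed by the sign of the Jacobian'' is false. The identity $f'(\mu)\,\mathrm{d}\mu=g'(\lambda)\,\mathrm{d}\lambda$ holds for $1$-forms. But $\varphi_m$ is decreasing, so integrating over $\varphi_m(I)$ and over $I$, both in the increasing direction, costs a minus sign. In terms of distributions, the pullback is $\langle u\circ\varphi_m,\psi\rangle=\langle u,(\psi\circ\varphi_m^{-1})\,|(\varphi_m^{-1})'|\rangle$, and for $\psi=g'$ one has $(g'\circ\varphi_m^{-1})\,|(\varphi_m^{-1})'|=-f'$. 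Hence
\[
\langle \xi_{c,m},f'\rangle=-\langle \xi_{c,m}\circ\varphi_m,\,g'\rangle .
\]
Suppose $\xi_{c,m}$ were normalized as in your display, $\Tr(f(X)-f(X_0))=\int\xi_{c,m}f'\,\mathrm{d}\mu$. Then $\xi_{c,m}\circ\varphi_m$ would satisfy \eqref{ssf} with the wrong sign. Check it with $\Hi=\C$, $H_0=0$, $H=v>0$, $m=1$. That normalization gives $\xi_{c,1}=-\one_{[(v+c)^{-1},c^{-1}]}$, whose pullback is $-\one_{[0,v]}$, the negative of the SSF of $(H,H_0)$.

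What saves the statement is Definition~\ref{def:xi}, which you invoke in Step~1. It normalizes by $\langle\xi_{c,m}',f\rangle=\Tr(f(X)-f(X_0))$, i.e. $\Tr(f(X)-f(X_0))=-\langle\xi_{c,m},f'\rangle$. So your first equality is also off by a sign, and the two errors cancel. The correct chain is
\[
\Tr\big(g(H)-g(H_0)\big)=\Tr\big(f(X)-f(X_0)\big)=\langle\xi_{c,m}',f\rangle=-\langle\xi_{c,m},f'\rangle=\langle\xi_{c,m}\circ\varphi_m,\,g'\rangle ,
\]
which is \eqref{ssf} for the function \eqref{defssfRT}. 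This is the invariance principle: a decreasing change of variables flips the sign of the SSF. Here it is compensated by the opposite sign conventions of Definition~\ref{def:xi} and \eqref{ssf}. The paper's proof is terse on this point as well, but your justification is wrong and must be replaced by this computation.

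\textbf{Do not move $c$.} Hypothesis~\ref{hyp:Trace} is available only for the given $(c,m)$. If you enlarge $c$ to $c'$, you must re-prove $(H+c')^{-m}-(H_0+c')^{-m}\in\mathcal{L}_1(\Hi)$. For $m=1$ this follows from the resolvent identity, but for $m\geq 2$ it is not automatic in this non-self-adjoint setting. The paper keeps the given $c$ and uses assumption~\eqref{it2:existenceSSF} to place $\sigma(H+c)$ in a sector $(0,+\infty)\cdot e^{i[-\theta_0,\theta_0]}$ with $\theta_0<\pi/(2m)$. That is all the proof of Proposition~\ref{relative} needs from $c$. Do the same, or state that sector condition explicitly as an assumption.
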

			
			\begin{proof}
				Since the spectrum of $(H+c)^{-m}$ lies in $(0, +\infty) \cdot e^{i(-\frac{\pi}{2}, \frac{\pi}{2})}$, there exists $\theta_0 \in\, ]0, \frac{\pi}{2m}[$  such that
				$$
				\sigma(H+c) \subset ]0, +\infty[ e^{i[-\theta_0, \theta_0]}.
				$$
				Then, by  Proposition~\ref{relative}, for any interval $J=(r_0,r_1)$ relatively compact in  $((s_1+c)^{-m} , (s_0+c)^{-m}) \subset ]0, +\infty[$, the Hypotheses \ref{hyp:eigenvalues}, \ref{hyp:estimate resolvente} are satisfied by $(H+c)^{-m}$ on $J$. Applying Theorem~\ref{thm:existence SSF}, the spectral shift function $\xi_{m,c} := \xi(\cdot; (H+c)^{-m}, (H_0+c)^{-m})$ is therefore well defined on $J$, up to an additive constant. For every $f \in D(I)$, we have:
				$$
				\Tr \big( f((H+c)^{-m}) - f((H_0+c)^{-m}) \big) = - \langle \xi_{m,c}', f \rangle.
				$$
				%Moreover, we may normalize $\xi_{m,c}$ by requiring that $\xi_{m,c} = 0$ on $[A, +\infty[ \subset \rho((H+c)^{-m})$, for some $A > \| (H+c)^{-m} \|$.
				
				\vspace{0.1cm}\noindent
				Furthermore, for any $g \in D(((s_1+c)^{-m} , (s_0+c)^{-m}))$, Proposition~\ref{relative} (applied to the function $f \in D(I)$ defined by $f(\mu) = g(\mu^{-\frac{1}{m}} - c)$) yields:
				$$
				f((H+c)^{-m}) = g(H),
				$$
				and consequently:
				$$
				\Tr \big( g(H) - g(H_0) \big) = \langle \xi, g^\prime \rangle.
				$$
				Since $g(H)$ and $g(H_0)$ are independent of the choice of $(m, c)$, the distribution $\xi$ is uniquely determined on $I$ by the normalization $\xi = 0$ on a spectral gap (if it exists). 
				%the interval $]-\infty, \inf(\sigma(H) \cap \mathbb{R})[$, which contains $]-\infty, -c[$ by assumption.
			\end{proof}

			Thanks to this Proposition, in the context of relatively trace-class perturbations, the spectral shift function can still be defined, as stated in the following definition.

			\begin{definition}\label{def:ssf relative trace class}
				Under the assumptions of Proposition~\ref{prop:ssfRelTr}, we define the Spectral Shift Function (SSF) associated with the pair $(H, H_0)$ by the relation~\eqref{defssfRT}.
			\end{definition}
			
			\begin{remark}
			 Let $s_{\text min}:= \inf \sigma(H) \cap \sigma(H_0) \cap \R$. If, in Proposition \ref{prop:ssfRelTr}, $s_0 < s_{\text min}$, then the SSF can be extended to $0$ throughout the spectral gap $(-\infty, s_{\text min})$.
			\end{remark}
			\subsection{Representation formula in the case of relatively trace class perturbations}

			In this section, we establish an analogue of Lemma~\ref{lm:green formula} in the framework where the perturbation is only relatively trace class, rather than trace class. This generalization is essential for extending the trace formula and the definition of the spectral shift function to broader classes of operators.

			\begin{prop}\label{prop:repSSFRT}
				Suppose that the pair $(H_0,H)$ satisfies the assumptions of Proposition~\ref{prop:ssfRelTr} for some $I= (s_0, s_1)$, $c \in \R$ and $m \geq 1$, and that
				\begin{equation}\label{tr2}
					(H_0 + c)^{-m} \big( (H + c)^{-1} - (H_0 + c)^{-1} \big) \in \mathcal{L}_1(\Hi).
				\end{equation}
				Then, in the sense of distributions, on $I$, we have:
				\begin{equation}\label{ssf=sigma_m}
					\xi^\prime(\cdot; H,H_0) = \frac{1}{2\pi i} \lim_{\varepsilon \to 0^+} \left( \sigma_m(\cdot + i\varepsilon) - \sigma_m(\cdot - i\varepsilon) \right),
				\end{equation}
				where
				\begin{equation*}
					\sigma_m(z) := (z + c)^{m-1} \Tr\left( (H + c)^{-m+1} \Res_H(z) - (H_0 + c)^{-m+1} \Res_0(z) \right), \qquad \im(z) \neq 0.
				\end{equation*}
			\end{prop}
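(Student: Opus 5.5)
Fix $f\in D(I)$ with support in a compact $K\subset I$, and set $g(\mu):=f(\mu^{-1/m}-c)$. Then $g$ is supported in a compact subinterval $J$ of $((s_1+c)^{-m},(s_0+c)^{-m})$. By Proposition~\ref{prop:ssfRelTr} (which uses Proposition~\ref{relative}),
\[
\langle \xi'(\cdot;H,H_0),f\rangle=\Tr\big(f(H)-f(H_0)\big)=\Tr\big(g(X)-g(X_0)\big),\qquad X=(H+c)^{-m},\ X_0=(H_0+c)^{-m}.
\]
The plan is to avoid the trace-class pair $(X,X_0)$ altogether and run the argument of Proposition~\ref{lm:green formula} directly on the level of $H$. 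Take an almost analytic extension $\tilde f$ supported in $\overline{S_a(K)}$, with $a$ small enough that $S_a(K)\subset\rho(H)\cap\rho(H_0)$ and Hypothesis~\ref{hyp:estimate resolvente} holds there. Write $f(H)-f(H_0)$ by the Helffer--Sj\"ostrand formula. The difference $\Res_H(z)-\Res_0(z)$ is not trace class, so it has to be modified by something holomorphic in $z$ that does not change the integral.

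The modification is the algebraic identity
\[
(z+c)^{m-1}(H+c)^{-m+1}\Res_H(z)=\Res_H(z)-\sum_{k=1}^{m-1}(z+c)^{k-1}(H+c)^{-k},
\]
obtained by iterating $(H+c)^{-1}\Res_H(z)=(z+c)^{-1}\big(\Res_H(z)-(H+c)^{-1}\big)$; the same identity holds for $H_0$. The correction terms are polynomial in $z$, hence holomorphic. Their $\partial_{\bar z}\tilde f$-integrals therefore vanish by Stokes (a compactly supported $\tilde f$ against a holomorphic function integrates to zero). Consequently
\[
f(H)-f(H_0)=\pi^{-1}\int_\C\partial_{\bar z}\tilde f(z)\,T(z)\,dx\,dy,\qquad T(z):=(z+c)^{m-1}\big((H+c)^{-m+1}\Res_H(z)-(H_0+c)^{-m+1}\Res_0(z)\big).
\]
One must check that the polynomial terms, although individually not trace class, appear with the same coefficients for $H$ and $H_0$. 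Their differences $(H+c)^{-k}-(H_0+c)^{-k}$ then need not be trace class for the cancellation to be used; the identity is simply applied at the operator level, before any trace is taken.

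The main step, and I expect the main obstacle, is to show that $T(z)\in\mathcal L_1(\Hi)$ with $\|T(z)\|_{\mathcal L_1}\le C|\im z|^{-N}$ on $S_a(K)$. I would write $\Res_H(z)=(H+c)^{-1}\big(I+(z+c)\Res_H(z)\big)$, and similarly for $H_0$. This turns $T(z)$ into a combination of $X-X_0$, of $(H_0+c)^{-m}\big((H+c)^{-1}-(H_0+c)^{-1}\big)$ (trace class by \eqref{tr2}), and of products of these with bounded factors such as $(H+c)\Res_H(z)$. Telescoping $A^{j}-B^{j}=\sum A^{i}(A-B)B^{j-1-i}$ with $A=(H+c)^{-1}$, $B=(H_0+c)^{-1}$ isolates one factor $A-B$ flanked by enough powers of $B$ on the left. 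In the terms where the powers land on the wrong side, I would use the adjoint version of \eqref{tr2}, or move factors by commuting through the resolvent identity $\Res_H-\Res_0=-\Res_H V\Res_0$. The polynomial bound in $|\im z|^{-1}$ then follows from Hypothesis~\ref{hyp:estimate resolvente} and $\|\Res_0(z)\|\le|\im z|^{-1}$. If a given ordering fails to be trace class, the fallback is to work with $\Tr\big(X(\Res_H-\Res_0)\big)$-type expressions, using cyclicity of the trace.

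Once this bound holds, $\Tr T(z)=\sigma_m(z)$ is holomorphic on $S_a(K)\cap\C^\pm$. Taking an extension $\tilde f$ of order larger than $N$, dominated convergence yields
\[
\langle\xi',f\rangle=\pi^{-1}\lim_{\varepsilon\to0^+}\sum_\pm\int_{\pm\im z>0}\partial_{\bar z}\tilde f(z)\,\sigma_m(z\pm i\varepsilon)\,dx\,dy,
\]
exactly as in \eqref{eq:proof Green representation1}. Stokes' theorem on $S_a(K)\cap\C^\pm$, with only the real-axis boundary pieces surviving as in \eqref{eq:proof Green representation2}--\eqref{eq:proof Green representation3}, gives
\[
\frac1{2\pi i}\lim_{\varepsilon\to 0^+}\int f(\lambda)\big(\sigma_m(\lambda+i\varepsilon)-\sigma_m(\lambda-i\varepsilon)\big)\,d\lambda,
\]
which is \eqref{ssf=sigma_m}.
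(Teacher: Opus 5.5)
Your argument is correct, and it follows a genuinely different route from the paper's. The paper does not rerun the Helffer--Sj\"ostrand/Stokes argument for $H$. It applies Proposition~\ref{lm:green formula} to the trace-class pair $(X,X_0)$. It then uses Lemma~\ref{lemRes} to write $\sigma_m(z)$ as in \eqref{Sigsig}, i.e.\ as $-m(z+c)^{-m-1}\Sigma((z+c)^{-m})$ plus terms holomorphic across $I$; one of these, $XB_H(z)-X_0B_{H_0}(z)$, must be checked to be trace-class valued. Finally it pulls the jump of $\Sigma$ back to the $\lambda$-variable through the chain rule for $\xi_{c,m}((\lambda+c)^{-m})$.

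You instead remove the non-trace-class part of $\Res_H(z)-\Res_0(z)$ with the telescoped identity. Its correction terms are polynomials in $z$, so they integrate to zero against $\partial_{\bar z}\tilde f$, and the proof of Proposition~\ref{lm:green formula} then runs verbatim with $\sigma_m$ in place of $\sigma$. What your route buys:
\begin{itemize}
\item its analytic core needs neither the sector condition nor the invertibility and holomorphy statements of Lemma~\ref{lemRes};
\item it yields $f(H)-f(H_0)\in\mathcal L_1(\Hi)$ directly;
\item it avoids transporting boundary values through $Z=(z+c)^{-m}$, which sends $\lambda\pm i\varepsilon$ into $\C^{\mp}$ and reverses orientation, so signs need care there.
\end{itemize}
The paper's route, in return, exhibits the explicit link \eqref{Sigsig} between $\sigma_m$ and the trace-class quantity $\Sigma$, which is what motivates definition \eqref{defssfRT}.

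The step you single out as the main obstacle closes in one line with your first reduction, which is also how the paper's proof begins. Since $(H+c)\Res_H(z)=\Id+(z+c)\Res_H(z)$, you have
\[
T(z)=(z+c)^{m-1}(X-X_0)(H+c)\Res_H(z)+(z+c)^{m}X_0\big(\Res_H(z)-\Res_0(z)\big).
\]
Use the two-point resolvent identity
\[
\Res_H(z)-\Res_0(z)=(H_0+c)\Res_0(z)\,\big[(H+c)^{-1}-(H_0+c)^{-1}\big]\,(H+c)\Res_H(z).
\]
Since $X_0$ commutes with $(H_0+c)\Res_0(z)$, the second term becomes $(z+c)^{m}(H_0+c)\Res_0(z)\,X_0\big[(H+c)^{-1}-(H_0+c)^{-1}\big]\,(H+c)\Res_H(z)$, which is trace class by \eqref{tr2}. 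Hypothesis~\ref{hyp:estimate resolvente} then gives $\|T(z)\|_{\mathcal L_1}=O(|\im z|^{-n_I-2})$ on $S_a(K)$, together with holomorphy in $\mathcal L_1(\Hi)$. No telescoping of $A^j-B^j$ is needed. The adjoint of \eqref{tr2} would not help anyway, since it involves $H^*$ rather than $H$.

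Finally, make your normalization explicit. Your opening equality uses $\langle\xi',f\rangle=\Tr(f(H)-f(H_0))$, as in Definition~\ref{def:xi} and Theorem~\ref{thm:ssf-traceMR}. This is exactly the normalization for which the sign in \eqref{ssf=sigma_m} is right: for $\Hi=\C$, $H_0=0$, $H=v>0$, $m=1$, the right-hand side is $\delta_v-\delta_0$. It is not literally what Proposition~\ref{prop:ssfRelTr} asserts, for two reasons:
\begin{itemize}
\item that proposition claims \eqref{ssf}, i.e.\ $\Tr(f(H)-f(H_0))=-\langle\xi',f\rangle$;
\item since $\lambda\mapsto(\lambda+c)^{-m}$ is decreasing, your normalization corresponds through \eqref{defssfRT} to $\langle\xi_{c,m}',g\rangle=-\Tr(g(X)-g(X_0))$, not to Definition~\ref{def:xi} for the pair $(X,X_0)$.
\end{itemize}
So state it as your convention rather than quoting it. The identity you actually prove for $\Tr(f(H)-f(H_0))$ is independent of it.
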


			\vspace{0.1cm}\noindent
			
			\begin{proof}
				Let us denote $X = (H + c)^{-m}$ and $X_0 = (H_0 + c)^{-m}$. Using the identity
				\begin{align*}
				(H + c)^{-m+1} \Res_H(z) - (H_0 + c)^{-m+1} \Res_0(z) = (X - X_0)(H + c)\Res_H(z) + X_0 \big( (H + c)\Res_H(z) - (H_0 + c)\Res_0(z) \big),
				\end{align*}
				together with the resolvent identity and the strong mapping spectral theorem (see e.g., \cite[Lemma 2, XIII.4]{ReSi80_01}), one checks that $\sigma_m(z)$ is well-defined for all $z \in \rho(H) \cap \rho(H_0)$ (see also the relation \eqref{Sigsig} below for further justification).
				
				\vspace{0.1cm}\noindent
				Moreover, for $\lambda \in \R \cap \rho(H) \cap \rho(H_0)$, we have
				$$
				\lim_{\varepsilon \to 0^+} \sigma_m(\lambda + i\varepsilon) = \sigma_m(\lambda) = \lim_{\varepsilon \to 0^+} \sigma_m(\lambda - i\varepsilon),
				$$
				so that both side of \eqref{ssf=sigma_m} vanishes on $\R \cap \rho(H) \cap \rho(H_0) \supset (-\infty, \inf (\sigma(H) \cup \sigma(H_0))) \supset (-\infty, -c]$. By Definition~\ref{def:ssf relative trace class}, on $I$, we have 
				$$
				\xi'(\lambda; H, H_0) = -m(\lambda + c)^{-m-1} \, \xi'_{c,m}((\lambda + c)^{-m}).
				$$
				
				\vspace{0.1cm}\noindent
				We now apply Proposition~\ref{lm:green formula} to the pair $((H + c)^{-m}, (H_0 + c)^{-m})$, which yields, in the sense of distributions:
				$$
				\xi'_{c,m}(\mu) = \frac{1}{2\pi i} \lim_{\varepsilon \to 0^+} \left( \Sigma(\mu + i\varepsilon) - \Sigma(\mu - i\varepsilon) \right),
				$$
				where
				$$
				\Sigma(Z) = \Tr\big( (X - Z)^{-1} - (X_0 - Z)^{-1} \big), \qquad \im(Z) \neq 0.
				$$
				For $Z = \mu \pm i\varepsilon$ with $\mu > 0$, $\varepsilon > 0$, there exists $z = \lambda \mp i\delta(\varepsilon)$ with $\lambda > -c$ and $\delta(\varepsilon) \to 0^+$ such that $Z = (z + c)^{-m}$. Using Lemma~\ref{lemRes} for with $L=H+c$ and with $z+c$ instead of $z$, we have:
				\begin{align*}
				  (H + c)^{-m+1} \Res_H(z) & = X(H+c) \Res_H(z)\\
				  &= X + (z+c) X \Res_H(z)\\
				 & =  X - mZ X(X - Z)^{-1} - X B_H(z)\\
				 & =X - mZ -mZ^2(X - Z)^{-1} - X B_H(z)
				\end{align*}
				where
				$$
				B_H(z) := \sum_{k=1}^{m-1} G_k((z + c)(H + c)^{-1}) \, G_m((z + c)(H + c)^{-1})^{-1},
				$$
				is holomorphic near $I\times \{0\}$.
				
				%{SI OK, adapter la suite} and $B_H(z)$ is holomorphic near the real axis. 
				The same identity holds for $H_0$. Therefore, we obtain
				\begin{align} \nonumber
					(z + c)^{m-1} & \Big( (H + c)^{-m+1} \Res_H(z) - (H_0 + c)^{-m+1} \Res_0(z) \Big) \\
					&= (z + c)^{m-1} (X - X_0) - m(z + c)^{-m-1} \big( (X - Z)^{-1} - (X_0 - Z)^{-1} \big) \label{Sigsig} \\
					&\quad - (z + c)^{m-1} \big( X B_H(z) - X_0 B_{H_0}(z) \big). \nonumber
				\end{align}
				
				\vspace{0.1cm}\noindent
				The first and third terms in the right-hand side of \eqref{Sigsig} are holomorphic near the real axis and hence do not contribute to the jump in the limit. The only contribution to the difference of boundary values comes from the second term. This yields
				$$
				-m(\lambda + c)^{-m-1} \lim_{\varepsilon \to 0^+} \big( \Sigma((\lambda + c)^{-m} + i\varepsilon) - \Sigma((\lambda + c)^{-m} - i\varepsilon) \big) = \lim_{\varepsilon \to 0^+} \left( \sigma_m(\lambda + i\varepsilon) - \sigma_m(\lambda - i\varepsilon) \right),
				$$
				which proves \eqref{ssf=sigma_m}.
				
				\vspace{0.1cm}\noindent
				Finally, we briefly justify the trace class property of the third term in \eqref{Sigsig}. By the structure of $B_H(z)$ and the assumption \eqref{tr2}, one can verify that $z \mapsto X_0 \big( B_H(z) - B_{H_0}(z) \big)$ is holomorphic with values in $\mathcal{L}_1(\Hi)$.
				
			\end{proof}

			%			\subsection{Existence of the SSF under assumption on spectral singularities}

			%				Building on the previous results, the following theorem shows that, under suitable assumptions—including a relatively trace-class resolvent difference—the spectral shift function introduced in Proposition~\ref{prop:ssfRelTr} is well defined for the pair $(H, H_0)$.

			%\begin{theorem}
			%	Suppose that Assumptions \ref{hyp:eigenvalues} and \ref{hyp:spectral singularity} hold, and that there exists $c > 0$ such that
			%	\[
			%	(H + c)^{-1} - (H_0 + c)^{-1} \in \mathcal{L}_1(\Hi).
			%	\]
			%	Then the spectral shift function $\xi_1$ associated with the pair $(H, H_0)$, as defined in Definition \ref{def:ssf relative trace class}, is well defined.
			%\end{theorem}
			
			%\vspace{0.1cm}\noindent
			%{\clb F: doit-on dire quelque chose de plus ici ?}

			%%%%%%%%%%%%%%%%%%%%%%%%%%%%%%%%%  NOUVELLE SECTION %%%%%%%%%%%%%%%%%%%%%%%%%%%%%%%%%%%%%%
			
\section{Limiting absorption principle for non-self-adjoint perturbation}\label{Sec:spsing}

 In this section, we show how the results of the previous sections can be applied in the general setting introduced by J. Faupin and the second author in \cite{FaFr23}. First, we  recall some well-known definitions related to the limiting
absorption principle for non-self-adjoint perturbations of self-adjoint
operators, and then we provide sufficient conditions for
Hypothesis~\ref{hyp:estimate resolvente} to hold.
We consider $H_0$ and $H=H_0+V$ as in Section \ref{sub:setting}, and we suppose that $V$ is of the form
	\begin{equation*}
		V = C W C, 
	\end{equation*}
	where $C \in \mathcal{B}(\mathcal{H})$ is selfadjoint, 
	%positive definite\footnote{Je pense qu'on peut enlever positive definite. F :  Se méfier. En tout $C$ doit etre a.a} 
	relatively compact with respect to $H_0$, and $W \in \mathcal{B}(\mathcal{H})$. %Recall that a positive definite operator is positive and injective. 
	%The adjoint of $V$ is then given by
%	\begin{equation*}
%		V^\ast = C W^\ast C.
%	\end{equation*}

\subsection{Spectral singularities}
	
	A central concept in the study of non-self-adjoint operators introduced in \cite{FaFr23} is that of a \emph{spectral singularity}, which refers to points of the essential spectrum that fail to be regular in the following sense.
	
	\begin{definition}[Regular spectral point and spectral singularity]\label{def:point_spectral_regulier_classique_pour_H}
		Let $\lambda \in \Lambda := \sigma_\mathrm{ess}(H) = \sigma_\mathrm{ess}(H_0)$.
		\begin{enumerate}[label=(\roman*)]
			\item We say that $\lambda$ is an \emph{outgoing} (respectively \emph{incoming}) \emph{regular spectral point} of $H$ if $\lambda$ is not an accumulation point of eigenvalues lying in the set $\lambda \pm i(0, \infty)$, and if the limit
			\begin{equation}\label{eq:def_reg_spec_pt}
				C \Res_H(\lambda \pm i0^+) C W := \lim_{\varepsilon \to 0^+} C \Res_H(\lambda \pm i\varepsilon) C W
			\end{equation}
			exists in the norm topology of $\mathcal{B}(\Hi)$. If this limit does not exist, then $\lambda$ is said to be an \emph{outgoing} (respectively \emph{incoming}) spectral singularity of $H$.
			\item We say that $\lambda$ is a \emph{regular spectral point} of $H$ if it is both an outgoing and an incoming regular spectral point. Otherwise, $\lambda$ is called a \emph{spectral singularity} of $H$.
		\item We say that infinity is an outgoing/incoming regular spectral point of $H$ if there exists $m>0$ such that for all $\lambda>m$, $\lambda$ is an outgoing/incoming regular spectral point and if the map 
			\begin{equation*}
			    [m,\infty)\ni\lambda\mapsto C\Res_H(\lambda\pm i0^+)CW 
			\end{equation*}
			is bounded for the topology of the norm of operator. If one of this condition does not hold, we say that infinity is an \emph{outgoing} (respectively \emph{incoming}) spectral singularity of $H$. 
		\end{enumerate}
	\end{definition}

	The notion of spectral singularity is closely related to the concept of spectral projection for non-self-adjoint operators introduced in \cite{Sc60_01}. Assuming a limiting absorption principle for $H_0$, a characterization of spectral singularities has been provided in \cite{FaFr23}. This notion also plays a fundamental role in the study of the dynamics of solutions to the Schrödinger equation governed by a non-self-adjoint Hamiltonian. For instance, Faupin and Fröhlich \cite{FaFr18_01} show that the dissipative wave operators are complete if and only if $H$ has no spectral singularities, while Faupin and Nicoleau \cite{FaNi18_01} demonstrate that the dissipative scattering matrix fails to be invertible at spectral singularities. Finally we refer to \cite{Fr24_05} for a construction of wave operators (named "regularized wave operators") taking into account spectral singularities in the non-dissipative case.

	\vspace{0.1cm}\noindent
	In the sequel, we denote by $\Lambda_\mathrm{reg}$ the set of regular spectral points of $H$. It is of particular interest to understand how rapidly the weighted resolvent of $H$ diverges as the spectral parameter approaches a spectral singularity from the upper or lower half-plane. This leads to the notion of the \emph{order} of a spectral singularity.

	\begin{definition}[Order of a spectral singularity]\label{def:order spectral singularity}
		Let $\lambda \in \sigma_\mathrm{ess}(H)$ be an outgoing or incoming spectral singularity of $H$. We say that $\lambda$ is a spectral singularity of \emph{finite order} if there exist an integer $n \in \N$ and $\varepsilon > 0$ such that
		\begin{equation}\label{eq:def order of spectral singularity}
			\sup_{z \in D(\lambda, \varepsilon) \cap \C^\pm} \left|\lambda - z\right|^n \| C \Res_H(z) C W \|_{\mathcal{B}(\Hi)} < \infty.
		\end{equation}
		Otherwise, $\lambda$ is said to be an outgoing or incoming spectral singularity of \emph{infinite order}. If $\lambda$ is an outgoing or incoming spectral singularity of finite order, we define its \emph{order} as the smallest integer $n$ for which \eqref{eq:def order of spectral singularity} holds.\\
	We say that infinity is a outgoing or incoming spectral singularity of finite order if there exists an integer $n$, $\varepsilon_0>0$, $m>0$ and $z_0\in\rho(H)\backslash \R$ such that 
		\begin{equation}\label{eq:def order infinity}
		    \sup_{\substack{\re(z)>m\\ |\pm\im(z)|<\varepsilon_0}}|z-z_0|^{-n}\|C\Res_H(z)CW\|_{\mathcal{B}(\Hi)}<\infty
		\end{equation}
		Otherwise, infinity is said to be an outgoing or incoming spectral singularity of \emph{infinite order}. If infinity is an outgoing or incoming spectral singularity of finite order, we define its \emph{order} as the smallest integer $n$ for wich \eqref{eq:def order infinity} holds.
		
	\end{definition}
	
	In other words, a spectral singularity is of finite order if the weighted resolvent of $H$ can be regularized by a polynomial factor in a neighborhood of the singularity within the upper or lower half-plane and a singularity at infinity can be regularized by a rational function in a neighborhood of infinity in the upper or lower half-plane. As we will recall in Section \ref{sec:Schro}, for the Schr\"odinger operator spectral singularities are related to real resonances (see Section \ref{subsec:ss=Res}).

				\subsection{Resolvent estimates}\label{subsub:Resolvent estimate}
	Here sufficient conditions on spectral singularities and eigenvalues are given so that $H$ satisfies the resolvent estimate of Hypothesis~\ref{hyp:estimate resolvente}.

	\begin{prop}\label{prop:estime resolvent singularities}
		Assume that $H$ satisfies Hypothesis~\ref{hyp:eigenvalues} on an open interval $I$ and that its closure $\overline{I}$ contains a finite number of spectral singularities of finite order. 
		Then Hypothesis~\ref{hyp:estimate resolvente} holds for $H$ on $I$. 
	\end{prop}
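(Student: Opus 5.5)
The plan is to reduce the full resolvent $\Res_H(z)$ to the weighted resolvent $C\Res_H(z)CW$, whose blow-up near the real axis is controlled by the assumption on spectral singularities. Since $V=CWC$, iterating the second resolvent identity twice gives, for $z\in\rho(H)\setminus\R$,
\begin{equation*}
\Res_H(z)=\Res_0(z)-\Res_0(z)CWC\Res_0(z)+\Res_0(z)C\bigl(WC\Res_H(z)CW\bigr)C\Res_0(z).
\end{equation*}
The middle factor should be rewritten in terms of $C\Res_H(z)CW$; in the worst case one uses $\|WC\Res_H CW\|\le\|W\|\,\|C\Res_H CW\|$. Using $\|\Res_0(z)\|\le|\im z|^{-1}$ and the boundedness of $C$ and $W$, we get
\begin{equation*}
\|\Res_H(z)\|\le |\im z|^{-1}+c|\im z|^{-2}+c|\im z|^{-2}\,\|C\Res_H(z)CW\|.
\end{equation*}
So it suffices to find $a_I>0$ and $n\in\N$ with $S_{a_I}(I)\subset\rho(H)$ and $\|C\Res_H(z)CW\|\le c\,|\im z|^{-n}\langle\re z\rangle^{n}$ on $S_{a_I}(I)$. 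This yields \eqref{eq:hypestimeresol} with $n_I=n+1$, possibly after enlarging the exponent of $\langle \re z\rangle$, which is harmless because $\langle\re z\rangle\ge1$.

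\textbf{Step 1 (choosing the strip).} By Hypothesis~\ref{hyp:eigenvalues} there are only finitely many non-real eigenvalues with real part in $I$. Since non-real spectrum is discrete by \eqref{eq:ess spec}, we can shrink $a_I$ so that $S_{a_I}(I)\subset\rho(H)$.

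\textbf{Step 2 (near the singularities).} Let $\lambda_1,\dots,\lambda_p\in\overline I$ be the spectral singularities, of orders $\nu_1,\dots,\nu_p$. By Definition~\ref{def:order spectral singularity}, on $D(\lambda_j,\varepsilon_j)\cap\C^\pm$ we have $\|C\Res_H(z)CW\|\le c_j|\lambda_j-z|^{-\nu_j}\le c_j|\im z|^{-\nu_j}$, because $|\lambda_j-z|\ge|\im z|$. If a singularity is only outgoing (or only incoming), then on the other half-disc the point is regular and is handled as in Step 3.

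\textbf{Step 3 (on the regular set).} On the complement $K:=\overline I\setminus\bigcup_j D(\lambda_j,\varepsilon_j/2)$, every point is a regular spectral point, and we need $\sup_{z\in S_a(K)}\|C\Res_H(z)CW\|<\infty$. This is the main obstacle: Definition~\ref{def:point_spectral_regulier_classique_pour_H} only gives existence of the boundary limit pointwise, not local uniformity. The first approach I would try is the characterization from \cite{FaFr23}, which uses the limiting absorption principle for $H_0$. Write
\begin{equation*}
C\Res_H(z)CW=C\Res_0(z)CW\bigl(\Id+C\Res_0(z)CW\bigr)^{-1}.
\end{equation*}
Regularity at $\lambda$ is then equivalent to invertibility of $\Id+C\Res_0(\lambda\pm i0)CW$. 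The map $z\mapsto C\Res_0(z)CW$ extends continuously to $S_a(K)\cup K$ in norm, by the LAP for $H_0$. Invertibility is an open condition and inversion is continuous, so the weighted resolvent is continuous up to the boundary on a neighbourhood of each point of $K$. Compactness of $K$ (when $I$ is bounded) then gives a uniform bound on $S_a(K)$ for $a$ small enough. If one wants to avoid the LAP, a fallback is a Vitali/Montel-type argument: show that pointwise existence of the limits, combined with holomorphy and an a priori polynomial bound, gives uniformity. This fallback is less clean.

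\textbf{Step 4 (unbounded $I$ and conclusion).} If $I$ is unbounded, the point at infinity must be treated as well. Here I would interpret the hypothesis as including infinity, counted as regular or as a singularity of finite order. Estimate \eqref{eq:def order infinity} then gives $\|C\Res_H(z)CW\|\le c\,|z-z_0|^{n_\infty}\le c'\langle\re z\rangle^{n_\infty}$ for $\re z>m$ and $|\im z|<\varepsilon_0$. This is exactly where the factor $\langle\re z\rangle^{n_I}$ in \eqref{eq:hypestimeresol} is used. Combining Steps 2--4, we take $n=\max(\nu_1,\dots,\nu_p,n_\infty)$. Inserting this bound in the resolvent-identity estimate above proves Hypothesis~\ref{hyp:estimate resolvente} on $I$, using the equivalent form from Remark~\ref{rqHyp} when $I$ is bounded.
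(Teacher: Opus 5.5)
Your proof follows the paper's route. The iterated resolvent identity \eqref{eq:resolvent identity} reduces Hypothesis~\ref{hyp:estimate resolvente} to a bound on $C\Res_H(z)CW$. The order of each singularity gives $|\im z|^{-\nu_j}$ locally, since $|z-\lambda_j|\ge|\im z|$. Compactness of $\overline I$ makes this global, and \eqref{eq:def order infinity} supplies the $\langle\re z\rangle^{n}$ growth. This gives $n_I=\max_j\nu_j+1$, as in the paper.

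The real difference is Step~3, and your concern there is justified. The paper simply asserts that ``near regular points we can take $n_1=0$'', whereas Definition~\ref{def:point_spectral_regulier_classique_pour_H} only gives pointwise existence of vertical limits.

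Your fix is sound. Suppose $T_0(\lambda\pm i\varepsilon):=C\Res_0(\lambda\pm i\varepsilon)CW$ converges in norm. At a regular point, pass to the limit in $(\Id+T_0(z))(\Id-C\Res_H(z)CW)=\Id$ and in the reversed product. This shows that $\Id+T_0(\lambda\pm i0)$ is invertible. Continuity of inversion plus compactness then give a uniform bound on $S_a(K)$.

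However, this uses a limiting absorption principle for $H_0$, which is not among the hypotheses of the proposition. The paper's proof tacitly needs the same input; in the Schr\"odinger application it is provided by Proposition~\ref{prop:proprieteT_0}. The same caveat applies at infinity when infinity is regular rather than a singularity of finite order.

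Drop the Vitali/Montel fallback. Pointwise vertical limits together with a polynomial bound do not yield local boundedness. For example, $z\mapsto e^{-i/z}/z$ satisfies $|e^{-i/z}/z|\le(\im z)^{-1}$ on $\C^+$ and has vertical limits at every real point. Yet it is unbounded along $z=x+ix^2$ as $x\to0$.

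Finally, your claim that every point of $K$ is a regular spectral point only makes sense on $\sigma_\mathrm{ess}(H_0)$. Real discrete eigenvalues of $H$ in $\overline I$, for instance below $\inf\sigma_\mathrm{ess}(H_0)$, are neither regular points nor spectral singularities, and $C\Res_H(z)CW$ may blow up there. You can treat them directly: the Riesz projection has finite rank, so $\Res_H$ has a pole of finite order there and $\|\Res_H(z)\|\le c\,|\im z|^{-k}$ nearby. Alternatively, as the paper does, invoke Remark~\ref{rqHyp} below the essential spectrum; this also disposes of the case where $I$ is unbounded from below.
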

	
    %\begin{remark}
   % \begin{itemize}
    %\item The estimate \eqref{eq:estime resolvente} is related to Davies's assumption in \cite[(H3)]{Da95_08}. 
    %\item If we denote $\nu_1,\ldots,\nu_n$ the order of the $n$ spectral singularities of $H$ and $\nu_\infty$ the order of the spectral singularity at infinity, then {\clb we can take}
   % \begin{equation*}
   %     n_0=\max(\nu_k)_{k\in\lbrace 1,\ldots,n\rbrace\cup\{\infty\}}+1
   % \end{equation*}
   % {\clr F : il faut dire qu'il y a qu'un nombre fini de singularités spectrales ? Est-ce correct ?}
  %  \item We remark that if $H$ has no spectral singularity, {\clb then we can take $n_0=2$}.
  %  \end{itemize}
  %  \end{remark}
    
	\begin{proof}
	If $I$ is bound from above, it is sufficient to prove the result for $I$ bounded because below the essential spectrum Hypothesis~\ref{hyp:estimate resolvente} is always true (see Remark \ref{rqHyp}). By combining Hypothesis~\ref{hyp:eigenvalues} with \eqref{eq:def order of spectral singularity}, near each spectral singularity there exists $c_1>0$ and $n_1\in \N$ such that 
\begin{equation}\label{eq:global estimate out/in}
			\| C \Res_H(z) C W \|_{\mathcal{B}(\Hi)} \leq c_1 |\im(z)|^{-n_1},
		\end{equation}	
	and near regular points we can take $n_1=0$. Then having a finite number of spectral singularities, by compactness of $\overline{I}$, \eqref{eq:global estimate out/in} holds on $S_a(I)$ for some $a>0$ and with the maximum of all the finite order instead of $n_1$. 
	If we denote $\nu_1,\ldots,\nu_n$ the order of the $n$ spectral singularities of $H$ in $I$, we deduce Hypothesis~\ref{hyp:estimate resolvente} on $I$ with 
	%and $\nu_\infty$ the order of the spectral singularity at infinity, then {\clb we can take}
    \begin{equation*}
    n_I=\max(\nu_k)_{k\in\lbrace 1,\ldots,n\rbrace}+1
  % n_I=\max(\nu_k)_{k\in\lbrace 1,\ldots,n\rbrace\cup\{\infty\}}+1
   \end{equation*}
   by using the resolvent identity 
   \begin{equation}\label{eq:resolvent identity}
			\Res_H(z) = \Res_0(z) - \Res_0(z)V\Res_0(z) + \Res_0(z) C W C \Res_H(z) C W C \Res_0(z),
		\end{equation}
   and the resolvent estimate for the self-adjoint operator $H_0$: $\| \Res_0(z) \|_{\mathcal{B}(\Hi)} \leq |\im(z)|^{-1}$.
   
   If $I$ has no upper bound, it is sufficient to prove the result on an interval $[s_1, + \infty)$ having only a spectral singularity at infinity (on $I \cap (- \infty, s_1]$ we apply the previous argument). 
   Hypothesis~\ref{hyp:eigenvalues}  and with \eqref{eq:def order infinity} give 
   \begin{equation*}
		    \|C\Res_H(z)CW\|\leq c|\re(z)|^{n},
		\end{equation*}
   on $S_a([s_1, + \infty))$ for some $n \in \N$, $c>0$ and $a>0$. We conclude by using again \eqref{eq:resolvent identity}.
   %\begin{equation*}
	%		\| \Res_0(z) \|_{\mathcal{B}(\Hi)} \leq |\Im(z)|^{-1}.
	%	\end{equation*}
	%	Using the resolvent identity, we write

	%	\begin{equation*}
	%	    \|C\Res_H(z)CW\|\leq c|\re(z)|^{n_3}.
	%	\end{equation*}
	%	As above with \eqref{eq:resolvent identity}, there exists $c_3>0$ such that for all $z\in V_{a_2,b_3}$, 
	%	\begin{equation}\label{eq:estimresol3}
	%	    \|C\Res_H(z)CW\|\leq c_3|\Im(z)^{-2}|.\langle\re(z)\rangle^{n_3}.
	%	\end{equation}
	%	Finally taking the smallest of $b_1$, $b_2$ and $b_3$, combining \eqref{eq:estimresol1}, \eqref{eq:estimresol2} and \eqref{eq:estimresol3} and adjusting the constants we obtain \eqref{eq:estime resolvente}.
	\end{proof}

\section{SSF for Schrödinger Operators with Complex-Valued Potential}\label{sec:Schro}
			
	In this section	we consider the self-adjoint operator 
\[
H_0 = -\Delta \quad \text{on } L^2(\R^3),
\]
with domain 	$\mathcal{D}(H_0)= H^2(\R^3):= \{ u \in L^2(\R^3); \; \Delta u \in L^2(\R^3)\} $ %(Sobolev space)
and $H$ 
is the Schrödinger operator $H:= H_0 + V$ where $V$ is a multiplication operator by a function $V \in L^\infty(\R^3; \C)$. The relative compactness is guaranteed by the assumption:
\begin{equation}\label{ShortR}
    |V(x)| \;\le\; M\,\langle x\rangle^{-\delta}, 
\qquad \forall x \in \R^3, \qquad \delta >0, 
\end{equation}
for some constant \(M>0\) and the essential spectrum is given by $\Lambda:=\sigma_\mathrm{ess}(H) = \sigma_\mathrm{ess}(H_0)=[0, + \infty)$.
 We focus on the three-dimensional setting but this section could easily be extended to any odd dimension.
 We show how to apply the previous results to this non-self-adjoint Schrödinger operator; in particular, we recall the link between resonances and spectral singularities, and then we give regularity properties of the spectral shift function outside of spectral singularities. Finally, we show that the behavior at high energies is very close to the self-adjoint case. The existence of the spectral shift function will be given under 
  the short-range condition $\delta > 3$

We have $V = C W C$ by considering $C$ the multiplication operator by $C(x):=\langle x \rangle^{-\delta/2}$ and $W := \langle \cdot \rangle^{\delta} V $. If \(V\) is a compactly supported potential, we can also define \(C\), (resp. $W$) as the multiplication operator by \(\rho\), (resp. $V$) with \(\rho\) a smooth compactly supported function  such that $\rho \equiv 1$ on $\supp(V)$.
%\[
%\rho(x) \equiv 1 \quad \text{on } \supp(V),
%\qquad \text{and hence} \qquad \rho\,v = v.
%\]

%This framework covers a broad class of physically relevant potentials and ensures sufficient decay to carry out the spectral analysis below. The restriction to dimension three (or more generally to odd dimensions) is crucial, as it guarantees improved resolvent behavior near the continuous spectrum, in particular through the limiting absorption principle. 

%Finally note that in the following all the results available for bounded short range potentials hold for bounded compactly supported potentials. 

	\subsection{Spectral singularities and resonances}\label{subsec:ss=Res}
	
	For Schrödinger operators, the notion of spectral singularity is closely related to that of resonances. We present here two related notions, depending on the decay of the potential $V$.
	
	\subsubsection{Short-range condition} We define the weighted space
	\begin{align*}
		L^2_{\delta} := \left\{ f : \R^3 \to \C \mid x \mapsto \langle x \rangle^{\delta} f(x) \in L^2(\R^3) \right\},
	\end{align*}
	In \cite{FaFr23}, Faupin and Frantz established the following Proposition:
	\begin{prop}\cite[Proposition 3.10]{FaFr23}
		Suppose that $V$ is a complex-valued potential satisfying the short-range condition \eqref{ShortR}  with $\delta> 1$.
		Then for all $\lambda>0$, the following conditions are equivalent:
		\begin{enumerate}[label=(\roman*)]
			\item\label{it:spec_sing_schr} $\lambda$ is an outgoing/incoming spectral singularity of $H$ in the sense of Definition \ref{def:point_spectral_regulier_classique_pour_H},
			\item\label{it:res_state_schr} There exists $\Psi\in L^2_{-\frac{\delta}{2}}$, with $\Psi\neq0$, such that
			\begin{equation*}
				(-\Delta+V(x)-\lambda)\Psi=0.
			\end{equation*} 
		\end{enumerate}
	\end{prop}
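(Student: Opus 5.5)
The plan is to reduce both directions to the weighted Birman--Schwinger operator $K(z):=C\Res_0(z)CW$ and to the limiting absorption principle for the free Laplacian in $\R^3$. Write $C(x)=\langle x\rangle^{-\delta/2}$ and $W=\langle\cdot\rangle^{\delta}V$, so that $W$ is bounded by \eqref{ShortR}. Two classical facts about $H_0=-\Delta$ will be used, for $\lambda>0$ and $\delta>1$.
\begin{itemize}
\item[(a)] The boundary values $C\Res_0(\lambda\pm i0)C$ exist in norm, are Hölder continuous in $\lambda$, and are compact.
\item[(b)] The outgoing kernel $e^{i\sqrt{\lambda}|x-y|}/(4\pi|x-y|)$ maps $L^2_{\delta/2}$ into $L^2_{-\delta/2}$.
\end{itemize}
From the resolvent identity $\Res_H=\Res_0-\Res_0V\Res_H$ one derives, for $z\notin\R$ in $\rho(H)$,
\begin{equation*}
C\Res_H(z)CW=\bigl(\Id+K(z)\bigr)^{-1}K(z),
\end{equation*}
with $\Id+K(z)$ invertible whenever $z\in\rho(H)$.

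For the direction \ref{it:res_state_schr} $\Rightarrow$ \ref{it:spec_sing_schr}, I argue by contraposition. Suppose $\lambda$ is outgoing regular, so $C\Res_H(\lambda+i0)CW$ exists. I first show that $\Id+K(\lambda+i0)$ is injective. Given a nonzero $\Psi\in L^2_{-\delta/2}$ solving $(-\Delta+V-\lambda)\Psi=0$ and satisfying the outgoing condition, set $\phi:=C^{-1}\Psi$ (more precisely, work with $u=CW C\Psi\in L^2$). Then $u+K(\lambda+i0)u=0$. Next I pass to the limit in
\begin{equation*}
\Bigl(\Id+C\Res_H(z)CW\Bigr)\bigl(\Id+K(z)\bigr)=\Id\cdot(\text{modulo } W\text{-factors}),
\end{equation*}
which is obtained algebraically from the second resolvent identity. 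This shows that $\Id+K(\lambda+i0)$ has a left inverse, which contradicts the existence of $u\neq0$.

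For \ref{it:spec_sing_schr} $\Rightarrow$ \ref{it:res_state_schr}, I use that $K(\lambda+i0)$ is compact by (a). If $-1$ is not an eigenvalue, then $\Id+K(\lambda+i0)$ is invertible, and by continuity so is $\Id+K(z)$ near $\lambda$ in $\C^+$. Hence the limit of $(\Id+K(z))^{-1}K(z)$ exists in norm, and the absence of accumulation of eigenvalues follows as well, so $\lambda$ is regular. By the Fredholm alternative, a singularity therefore forces some $u\neq0$ with $u=-K(\lambda+i0)u$. Set $\Psi:=\Res_0(\lambda+i0)CWu$. By (b), $\Psi\in L^2_{-\delta/2}$. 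A distributional computation gives $(-\Delta-\lambda)\Psi=CWu$ and $C\Psi=-u$, hence $V\Psi=CWC\Psi=-CWu$, and so $(-\Delta+V-\lambda)\Psi=0$. Finally, $\Psi\neq0$ because $u=-C\Psi$.

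The main obstacle is the converse injectivity step. There one must show that a generalized eigenfunction in $L^2_{-\delta/2}$ actually satisfies $\Psi=-\Res_0(\lambda+i0)V\Psi$, that is, that it is outgoing. A general $\Psi$ could contain incoming or free-wave components, and this is where the hypothesis $\delta>1$ is needed. I would try to use the Agmon--Hörmander/Rellich-type uniqueness argument: the difference between $\Psi$ and $-\Res_0(\lambda+i0)V\Psi$ solves the free Helmholtz equation and lies in $L^2_{-\delta/2}$. Its Fourier transform is therefore supported on the sphere of radius $\sqrt\lambda$, and Rellich/Agmon forces it to vanish unless its trace is nonzero. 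This requires interpreting the outgoing and incoming selection correctly, which is exactly how \cite{FaFr23} pairs the resonant state with the outgoing or incoming singularity.
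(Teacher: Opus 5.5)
Your direction (i)$\Rightarrow$(ii) is correct. It is the standard Birman--Schwinger argument: compactness of $K(\lambda+i0)$ plus the Fredholm alternative. The paper gives no proof of this statement and only quotes it from \cite{FaFr23}. Note that the $\Psi=\Res_0(\lambda+i0)CWu$ you construct automatically satisfies $\Psi=-\Res_0(\lambda+i0)V\Psi$.

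\textbf{The gap.} It is exactly the step you flag in (ii)$\Rightarrow$(i): proving that an arbitrary nonzero solution in $L^2_{-\delta/2}$ is outgoing. This cannot be done, because it is false. For $\delta>1$, the free wave $w(x)=\sin(\sqrt{\lambda}|x|)/|x|$ is smooth and solves $(-\Delta-\lambda)w=0$ on $\R^3$. It lies in $L^2_{-\delta/2}$, since $\int\langle x\rangle^{-\delta}|w|^2\,dx=4\pi\int_0^\infty\langle r\rangle^{-\delta}\sin^2(\sqrt{\lambda}r)\,dr<\infty$. Take $V\equiv 0$, so $W=0$ and every $\lambda>0$ is regular. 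Then (ii), as literally quoted, holds while (i) fails.

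The same happens for any $V$ at any outgoing regular $\lambda$. There $\Id+K(\lambda+i0)$ is invertible. Put $\phi=(\Id+K(\lambda+i0))^{-1}Cw$ and $\Psi=w-\Res_0(\lambda+i0)CW\phi$. Then $C\Psi=\phi$, $(-\Delta+V-\lambda)\Psi=0$, $\Psi\in L^2_{-\delta/2}$, and $\Psi\neq 0$ since $w\neq0$.

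\textbf{Why Rellich/Agmon cannot close it.} Rellich's uniqueness theorem needs $o(|x|^{-1})$ decay in mean, e.g.\ $\liminf_{R\to\infty}R^{-1}\int_{|x|<R}|w|^2\,dx=0$. Membership in $L^2_{-\delta/2}$ with $\delta>1$ allows this average to stay bounded away from zero, as it does for the $w$ above. Fourier support on the sphere $\{|\xi|=\sqrt\lambda\}$ only says the free part is a superposition of plane waves at energy $\lambda$; it does not say it vanishes. So the outgoing (resp.\ incoming) condition, e.g.\ $\Psi=-\Res_0(\lambda\pm i0)V\Psi$, must be part of hypothesis (ii) rather than derived. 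This is what the paper's remark right after the statement, about the outgoing/incoming Sommerfeld radiation condition, implicitly assumes.

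\textbf{With that condition added}, your contrapositive argument works after two corrections.
\begin{itemize}
\item The kernel vector is $u=C\Psi\in L^2$, because $K(\lambda+i0)C\Psi=C\Res_0(\lambda+i0)V\Psi=-C\Psi$. Your choice $u=CWC\Psi=V\Psi$ does not satisfy $u+K(\lambda+i0)u=0$.
\item The identity to pass to the limit in is exact. From $\Res_H=\Res_0-\Res_HV\Res_0$ one gets $(\Id-C\Res_H(z)CW)(\Id+K(z))=\Id$ for non-real $z\in\rho(H)$, with a minus sign and no ``$W$-factors''.
\end{itemize}
Letting $z=\lambda+i\varepsilon\to\lambda$ is legitimate by the non-accumulation of eigenvalues and the existence of the norm limit. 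This gives a left inverse of $\Id+K(\lambda+i0)$, contradicting $u\neq0$.
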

	
	\vspace{0.1cm}\noindent
	If the function $\Psi$ from \ref{it:res_state_schr} lies in $L^2(\mathbb{R}^3)$, then $\lambda$ is an eigenvalue of $H$. Otherwise, $\lambda$ is referred to as a real resonance, associated with a resonant state $\Psi \notin L^2(\mathbb{R}^3)$. Such a state satisfies the outgoing/incoming Sommerfeld radiation condition:
	\begin{equation*}
		 \Psi(x) = |x|^{1/2} e^{\pm i \lambda^{1/2} |x|} \left( a\left( \frac{x}{|x|} \right) + o(1) \right), \quad |x| \to \infty,
	\end{equation*}
	where $a \in L^2(S^2)$ and $a \neq 0$. Moreover, if the operator $H$ is dissipative, (in the sense that $\mathrm{Im}( \langle u , H u \rangle ) \le 0$ for all $u$ in the domain of $H$),   $H$ cannot have outgoing spectral singularities in $(0,\infty)$ (see \cite[Corollary 3.2]{Wa12}). But, in general, spectral singularities may still occur. As shown in \cite[Remark 5.4]{Wa12}, for any $\lambda > 0$, one can construct a smooth, compactly supported potential $V$ such that $\lambda$ is an incoming spectral singularity of $H$ in the dissipative setting.

	\subsubsection{Compactly supported potential}\label{sec:spectralsing_compsupp}

	If $V \in L^\infty_\mathrm{c}(\R^d, \C)$, with $d \geq 3$ odd, then the map
	\begin{equation*}
		\left\{ z \in \C \mid \im(z) > 0 \right\} \ni z \mapsto (H - z^2)^{-1} : L^2(\R^d, \C) \to L^2(\R^d, \C)
	\end{equation*}
	is meromorphic and admits a meromorphic extension to the whole complex plane as a map
	\begin{equation}\label{eq:mero ext}
		\C \ni z \mapsto F(z) : L^2_\mathrm{c}(\R^d) \to L^2_\mathrm{loc}(\R^d),
	\end{equation}
	where
	\begin{eqnarray*}
		L^2_\mathrm{c}(\R^d) &:=& \{ u \in L^2(\R^d) \mid \supp(u) \text{ is compact} \},  \\
		L^2_\mathrm{loc}(\R^d) &:=& \{ u : \R^d \to \C \mid u \in L^2(K) \text{ for all compact } K \subset \R^d \}.
	\end{eqnarray*}
	The poles of the meromorphic extension in \eqref{eq:mero ext} are called \emph{resonances} of $H$.

	\vspace{0.1cm}\noindent
	One can verify that any real resonance $\pm \lambda_0$ of $H$, with $\lambda_0 \geq 0$, corresponds to an outgoing/incoming spectral singularity $\lambda_0^2$ in the sense of Definition~\ref{def:point_spectral_regulier_classique_pour_H}. Moreover, $H$ has only finitely many spectral singularities, and the order of a spectral singularity coincides with the multiplicity of the corresponding resonance pole; see \cite[Theorem 3.8]{DyZw19_01}.
	
	\vspace{0.1cm}\noindent
	We refer the reader to \cite{DyZw19_01} and the references therein for an overview of the resonance theory for Schrödinger operators, and to \cite[Section 3.3.1]{FaFr23} for a more detailed comparison between the notions of resonances and spectral singularities considered in this paper.
	
	Finally note that in both cases, $\infty$ is an outgoing and an incoming regular spectral point. This is a consequence of point (3) of Proposition \ref{prop:proprieteT_0}. (See e.g the proof of Proposition \ref{prop:high energy asymptotic} about the high energy asymptotic of the derivative of the SSF for more details).

\subsection{Preliminary results}

Here we recall some useful results on the free resolvent, \(\Res_0(z)\), that  will be used to derive qualitative results on the spectral shift function. 
%\vspace{0.2cm}\noindent
We define $\sqrt{z}$ on $\C \setminus \R_+$ by requiring that $\im(\sqrt{z}) > 0$. 
Hence, for all $\lambda \in (0,\infty)$, we have
\[
\lim_{\varepsilon \to 0^+} \sqrt{\lambda \pm i\varepsilon} = \pm \sqrt{\lambda}.
\]

For \(z \in \C \setminus \R^+\), set 
\[
T_0(z) := C\,\Res_0(z)\,C\,W,
\]
and recall that, for such $z$, the integral kernel of \(T_0(z)\) is the function 
\[
K_0(z)(x,y)
:= \frac{1}{4\pi}\, C(x)\, \frac{e^{i\sqrt{z}\,|x-y|}}{|x-y|}\, C(y)\, W(y),
\qquad (x,y) \in \R^3 \times \R^3.
\]

\vspace{0.2cm}\noindent
The well-known next proposition collects some useful properties of the operator \(T_0(z)\), (see \cite{Ya10_01} for instance).

\begin{prop}\label{prop:proprieteT_0}
Assume that \(V\) is a short-range potential (i.e. satisfies \eqref{ShortR}) with \(\delta > 3\).
Then, for all \(z \in \C \setminus \R^+\), the integral kernel \(K_0(z)\) belongs to \(L^2(\R^3 \times \R^3)\). 
In particular, \(T_0(z)\) is a Hilbert--Schmidt operator. 
Moreover:
\begin{enumerate}
    \item For all \(\lambda \in \Lambda\), the limits 
    \[
    T_0(\lambda \pm i0^+) := \lim_{\varepsilon \to 0^+} T_0(\lambda \pm i\varepsilon)
    \]
    exist in the Hilbert--Schmidt topology.
    
    \item The maps 
    \(\Lambda \ni \lambda \mapsto T_0(\lambda \pm i0^+)\)
    are continuous in the Hilbert--Schmidt topology.
     \item For any $\delta>0$, there exist constants $c>0$ and $C>0$ such that, 
for all $z \in \C \setminus \R^+$ with $|z|>c$, one has 
\begin{equation}\label{eq:resolvent_poids_haute_energie}
    \|T_0(z)\|_{\mathcal{B}(\Hi)} 
    \;\le\; C\,|z|^{-\frac{1}{2}+\delta}.
\end{equation} %(Chapitre 7 Yafaev)
\end{enumerate}
\end{prop}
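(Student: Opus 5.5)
The plan is to work directly with the explicit kernel
\[
K_0(z)(x,y)=\frac{1}{4\pi}C(x)\frac{e^{i\sqrt z|x-y|}}{|x-y|}C(y)W(y),
\]
with $C(x)=\langle x\rangle^{-\delta/2}$ and $W=\langle\cdot\rangle^{\delta}V$ bounded by \eqref{ShortR}. The argument has three parts: a Hilbert--Schmidt bound, dominated convergence for the boundary values, and an oscillatory estimate at high energy.

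\textbf{Hilbert--Schmidt property and boundary values.} For $z\in\C\setminus\R^+$ we have $\im\sqrt z>0$, so $|e^{i\sqrt z|x-y|}|\le 1$. This gives the uniform domination
\[
|K_0(z)(x,y)|^2\le \frac{\|W\|_\infty^2}{16\pi^2}\,\frac{\langle x\rangle^{-\delta}\langle y\rangle^{-\delta}}{|x-y|^2}.
\]
We then show that the right-hand side is integrable on $\R^6$ when $\delta>3/2$, hence certainly for $\delta>3$. The tool is the standard estimate
\[
\int_{\R^3}\frac{\langle y\rangle^{-\delta}}{|x-y|^2}\,dy\lesssim \langle x\rangle^{-\min(2,\delta-1)}
\]
(up to a logarithm when $\delta=3$), obtained by splitting into the regions $|y|<|x|/2$, $|x-y|<|x|/2$ and the remaining one. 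Multiplying by $\langle x\rangle^{-\delta}$ gives an integrable function of $x$, since $\delta+\min(2,\delta-1)>3$.

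For $\lambda\in\Lambda$ and $\varepsilon\to0^+$, $\sqrt{\lambda\pm i\varepsilon}\to\pm\sqrt\lambda$, so the kernels converge pointwise to $K_0(\lambda\pm i0^+)$. Dominated convergence with the same bound then gives convergence in $L^2(\R^6)$, which is the Hilbert--Schmidt topology; this proves (1).

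Continuity in (2) follows the same way. The map $\lambda\mapsto \pm\sqrt\lambda$ is continuous on $[0,\infty)$, so the limit kernels depend continuously on $\lambda$ pointwise and are dominated uniformly. For the same reason the map $z\mapsto T_0(z)$ extends continuously from each closed half-plane to the boundary.

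\textbf{High-energy bound (3).} This is the main obstacle. The Hilbert--Schmidt norm does not decay: $|K_0|$ does not depend on $\re\sqrt z$ when $\im\sqrt z$ is small. I would therefore estimate the operator norm using the classical weighted limiting absorption bound for the free resolvent in three dimensions,
\[
\|\langle x\rangle^{-s}\Res_0(z)\langle x\rangle^{-s}\|\le C|z|^{-1/2},\qquad s>1/2,\ |z|\ge c,
\]
uniformly up to the real axis, as in Yafaev's book. Since $\delta>3$ we may take $s=\delta/2>1/2$, and $\|W\|_\infty<\infty$, which gives $\|T_0(z)\|\le C|z|^{-1/2}\le C|z|^{-1/2+\delta'}$ for any $\delta'>0$.

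If that bound is not simply cited, I would prove it by Fourier restriction. Writing $\Res_0(\lambda\pm i0)$ through the trace operator $\Gamma_0(\lambda)u=\hat u(\sqrt\lambda\,\cdot)$ on the sphere, the trace lemma gives
\[
\|\Gamma_0(\lambda)\langle x\rangle^{-s}\|_{L^2\to L^2(S^2)}\lesssim \lambda^{-1/4},
\]
which controls the imaginary part. The principal-value part is handled by a dyadic decomposition in $|\xi|$ near $\sqrt\lambda$. A weaker but easier route to the stated $|z|^{-1/2+\delta'}$ is to interpolate between the trivial bound and the oscillation of $e^{i\sqrt z|x-y|}$ via integration by parts, which is where the $\delta'$ loss would appear.
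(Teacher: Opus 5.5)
Your argument is correct. It is essentially the standard proof behind the paper's treatment, which gives no proof and simply cites Yafaev: kernel domination plus dominated convergence for the Hilbert--Schmidt claims (1)--(2), and the weighted high-energy limiting absorption bound $\|\langle x\rangle^{-s}\Res_0(z)\langle x\rangle^{-s}\|\le C|z|^{-1/2}$ with $s=\delta/2>1/2$ for (3), which even gives the sharper rate $|z|^{-1/2}$.

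One harmless slip: $\iint\langle x\rangle^{-\delta}\langle y\rangle^{-\delta}|x-y|^{-2}\,dx\,dy$ is finite exactly when $\delta>2$, not $\delta>3/2$, as your own condition $\delta+\min(2,\delta-1)>3$ shows; this does not matter under the standing assumption $\delta>3$.
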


\vspace{0.2cm}\noindent
The $k$-derivative with respect to $z$ of $T_0$ is given by
\begin{equation*}
    T_{0}^{(k)}(z) := (-1)^kC\Res_0(z)^{k+1}CW.
\end{equation*}
For $k=1$, its integral kernel is given for all $(x,y)\in\R^3\times\R^3$ by 
\begin{equation*}
    K_0^{(1)}(z)(x,y)=\frac{i}{8\pi\sqrt{z}}C(x)e^{ i\sqrt{z}|x-y|}C(y)W(y),\quad \pm \im(z)>0.
\end{equation*}

\vspace{0.2cm}\noindent
The following Proposition collects some useful properties of the operators \(T_0^{(k)}(z)\),
(see \cite{JeKa79_09}, Lemma 2.2):
\begin{prop}\label{prop:proprietedT_0}
Assume that \(V\) satisfies \eqref{ShortR}) with \(\delta > 3\). Then:
\begin{enumerate}
    \item For all \(k \in \N \setminus \{0\}\) and all \(z \in \C \setminus \R^+\), the operator \(T_0^{(k)}(z)\) is of trace class.
    
    \item If \(\delta > 2k + 1\), then for all \(\lambda \in (0, +\infty)\), the limits 
    \[
    T_0^{(k)}(\lambda \pm i0^+) := \lim_{\varepsilon \to 0^+} T_0^{(k)}(\lambda \pm i\varepsilon)
    \]
    exist in the Hilbert--Schmidt topology.  
    \item Moreover, the map 
    \[
    (0, +\infty) \ni \lambda \longmapsto T_0(\lambda \pm i0^+)
    \]
    belongs to \(\mathscr{C}^k((0,+\infty), \mathcal{L}_2(\Hi))\), and one has
    \[
    T_0^{(k)}(\lambda \pm i0^+) = \partial_\lambda^{(k)} T_0(\lambda \pm i0^+).
    \]
\end{enumerate}
\end{prop}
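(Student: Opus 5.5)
The plan is to work directly with the explicit integral kernels. Write $C(x)=\langle x\rangle^{-\delta/2}$ and $|W(y)|\le M\langle y\rangle^{-\delta/2}$, so that $|C(y)W(y)|\le M\langle y\rangle^{-\delta}$. For $k\ge1$ we have $T_0^{(k)}(z)=(-1)^k C\Res_0(z)^{k+1}CW$. For item (1), I will factor this as
\[
T_0^{(k)}(z) = (-1)^k\bigl(C\Res_0(z)^{(k+1)/2}\bigr)\bigl(\Res_0(z)^{(k+1)/2}CW\bigr),
\]
where the square roots are taken via the spectral theorem for $H_0$. Each factor has Fourier-side form $C(x)\,g(\xi)$ with $g(\xi)=(|\xi|^2-z)^{-(k+1)/2}$. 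For $k\ge1$ and $z\notin\R^+$ we have $g\in L^2(\R^3)$, because $|\xi|^{-(k+1)}$ is square integrable at infinity in dimension three as soon as $k+1>3/2$. Also $C\in L^2$ as soon as $\delta>3$. The standard Kato–Seiler–Simon estimate $\|f(x)g(-i\nabla)\|_{\mathcal L_2}=(2\pi)^{-3/2}\|f\|_{L^2}\|g\|_{L^2}$ then shows that both factors are Hilbert–Schmidt, so their product is trace class. The same bound applies with $CW$ in place of $C$, since $CW\in L^2$.

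For item (2), I will compute the kernel of $\Res_0(z)^{k+1}=\frac{1}{k!}\partial_z^k\Res_0(z)$ by differentiating $\frac{e^{i\sqrt z|x-y|}}{4\pi|x-y|}$ $k$ times in $z$. Each derivative in $z$ produces at most one factor of $|x-y|$ together with negative powers of $\sqrt z$. This gives, uniformly for $z$ in a neighbourhood $\{|z-\lambda|<\lambda/2\}$ intersected with $\C^\pm$,
\[
|K_0^{(k)}(z)(x,y)|\le c_\lambda\, C(x)\,\langle x-y\rangle^{k-1}\,|C(y)W(y)|\le c\,\langle x\rangle^{-\delta/2+k-1}\langle y\rangle^{-\delta+k-1}.
\]
This uses $|e^{i\sqrt z|x-y|}|\le1$ for $\im\sqrt z\ge0$, and for $k\ge1$ the singularity $|x-y|^{-1}$ disappears. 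The dominating function lies in $L^2(\R^3\times\R^3)$ as soon as $\delta/2-k+1>3/2$, i.e.\ $\delta>2k+1$. The kernels converge pointwise as $\varepsilon\to0^+$, since $\sqrt{\lambda\pm i\varepsilon}\to\pm\sqrt\lambda$ and $\lambda>0$ avoids the $z^{-1/2}$ singularity. Dominated convergence in $L^2(\R^6)$ therefore gives convergence in the Hilbert–Schmidt norm. Controlling the polynomial growth in $|x-y|$ coming from the $z$-derivatives against the decay of the weights is the only real bookkeeping, and I expect it to be the main (though routine) obstacle, since it is exactly where the threshold $\delta>2k+1$ comes from.

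For item (3), the same domination shows that the boundary kernels $K_0^{(j)}(\lambda\pm i0)$, $0\le j\le k$, which are given by the explicit formulas with $\sqrt z$ replaced by $\pm\sqrt\lambda$, depend continuously on $\lambda$ in $L^2(\R^6)$ (item (1) of Proposition~\ref{prop:proprieteT_0} covers $j=0$). For the derivative formula, I will write, for $\lambda$ and $\lambda+h$ in $(0,\infty)$ and each $\varepsilon>0$,
\[
T_0^{(j-1)}(\lambda+h\pm i\varepsilon)-T_0^{(j-1)}(\lambda\pm i\varepsilon)=\int_0^h T_0^{(j)}(\lambda+s\pm i\varepsilon)\,ds .
\]
I then let $\varepsilon\to0$ using the uniform Hilbert–Schmidt bounds. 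This gives the same identity for the boundary values, and by continuity of the integrand $\partial_\lambda T_0^{(j-1)}(\lambda\pm i0)=T_0^{(j)}(\lambda\pm i0)$. Induction on $j$ yields the $\mathscr C^k$ regularity and the stated formula.
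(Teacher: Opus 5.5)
Your proposal is correct, and it is essentially the standard explicit-kernel argument behind this result. The paper does not prove the proposition but quotes it from Jensen--Kato (Lemma 2.2), and your steps match that argument: Kato--Seiler--Simon factorization for item (1), the bound $|\partial_z^k(e^{i\sqrt{z}\,r}/r)|\le c_\lambda\langle r\rangle^{k-1}$ with dominated convergence in $L^2(\R^6)$ for item (2), and integration of the derivative followed by the boundary limit for item (3).

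There are three harmless slips to fix:
\begin{itemize}
\item With $W=\langle\cdot\rangle^{\delta}V$ one only has $|W|\le M$ and $|C(y)W(y)|\le M\langle y\rangle^{-\delta/2}$. The dominating kernel is therefore $c\,\langle x\rangle^{-\delta/2+k-1}\langle y\rangle^{-\delta/2+k-1}$, which gives the same threshold $\delta>2k+1$.
\item Item (3) requires reading $T_0^{(k)}$ as the genuine derivative $\partial_z^kT_0=k!\,C\Res_0(z)^{k+1}CW$, since the displayed normalization differs by a factor $(-1)^kk!$.
\item The continuity of the $j=0$ boundary values is item (2), not item (1), of Proposition~\ref{prop:proprieteT_0}.
\end{itemize}
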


\vspace{0.2cm}\noindent
We now derive explicit trace formulas for \(T_0^{(1)}(z)\) and \(T_0^{(1)}(z)T_0(z)\), 
which will play a key role in the computation of the spectral shift function.

\begin{lemma}
Assume that \(V\) is a short-range potential with exponent \(\delta > 3\), and let \(\lambda \in (0,+\infty)\). Then:
\begin{enumerate}
    \item One has
    \begin{equation}\label{eq:diff Trace2}
        \lim_{\varepsilon \to 0^+} 
        \Tr\!\left(T_0^{(1)}(\lambda + i\varepsilon) 
        - T_0^{(1)}(\lambda - i\varepsilon)\right)
        \;=\;
        \frac{i}{4\pi \sqrt{\lambda}}
        \int_{\R^3} V(x)\, \mathrm{d}x.
    \end{equation}

    \item Moreover,
    \begin{equation}\label{eq: trace T_0^1T_0}
        \Tr\!\bigl(T_{0}^{(1)}(\lambda \pm i0^{+})\, T_{0}(\lambda \pm i0^{+})\bigr)
        \;=\;
        \pm\, \frac{i}{32\pi^{2}\sqrt{\lambda}}
        \int_{\R^{6}} 
        V(x)\,V(y)\,
        \frac{e^{ \pm 2i\sqrt{\lambda}\,\lvert x-y\rvert}}{\lvert x-y\rvert}\,
        \mathrm{d}x\,\mathrm{d}y.
    \end{equation}
\end{enumerate}
\end{lemma}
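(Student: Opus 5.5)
Both identities will be obtained by writing the traces as explicit integrals against the integral kernels and then passing to the limit $\varepsilon\to0^+$ by dominated convergence, with the decay $\delta>3$ providing the domination.

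\textbf{Part (1).} By Proposition~\ref{prop:proprietedT_0}, $T_0^{(1)}(z)=-C\Res_0(z)^2CW$ is trace class for $z\in\C\setminus\R^+$. For $\varepsilon>0$ its kernel is
\[
K_0^{(1)}(z)(x,y)=\frac{i}{8\pi\sqrt{z}}\,C(x)e^{i\sqrt z|x-y|}C(y)W(y),
\]
which is continuous. I would justify the diagonal formula $\Tr T_0^{(1)}(z)=\int K_0^{(1)}(z)(x,x)\,dx$ by factorising $T_0^{(1)}(z)=-(C\Res_0(z))(\Res_0(z)CW)$ as a product of two Hilbert--Schmidt operators. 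The kernel of $\Res_0(z)C$ is $e^{i\sqrt z|x-y|}C(y)/(4\pi|x-y|)$, which is in $L^2$ when $\im z>0$ because $\langle y\rangle^{-\delta}$ is integrable for $\delta>3$. Then $\Tr(AB)=\iint A(x,y)B(y,x)\,dy\,dx$, and integrating out the middle variable gives back $\int K_0^{(1)}(z)(x,x)\,dx$. Since $C(x)^2W(x)=V(x)$, this yields
\[
\Tr T_0^{(1)}(z)=\frac{i}{8\pi\sqrt z}\int_{\R^3}V(x)\,dx.
\]
This depends on $z$ only through $1/\sqrt z$, and $\sqrt{\lambda\pm i\varepsilon}\to\pm\sqrt\lambda$. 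Taking the difference of the two limits gives $\frac{i}{8\pi\sqrt\lambda}\cdot 2\int V=\frac{i}{4\pi\sqrt\lambda}\int V$, which is \eqref{eq:diff Trace2}.

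\textbf{Part (2).} For fixed $\varepsilon>0$, the operator $T_0^{(1)}(z)T_0(z)$ is trace class, being trace class times bounded. Its trace equals $\iint K_0^{(1)}(z)(x,y)K_0(z)(y,x)\,dy\,dx$, since both factors are Hilbert--Schmidt with $L^2$ kernels. Multiplying the kernels, the factor $C(y)W(y)C(y)$ becomes $V(y)$ and $C(x)W(x)C(x)$ becomes $V(x)$. This gives
\[
\frac{i}{32\pi^2\sqrt z}\iint V(x)V(y)\frac{e^{2i\sqrt z|x-y|}}{|x-y|}\,dx\,dy .
\]
As $\varepsilon\to0^+$, the integrand is bounded by $M^2\langle x\rangle^{-\delta}\langle y\rangle^{-\delta}|x-y|^{-1}$, which is integrable for $\delta>3$ (the singularity $|x-y|^{-1}$ is locally integrable in $\R^3$). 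Dominated convergence therefore gives the limit, with $\sqrt z\to\pm\sqrt\lambda$, which is \eqref{eq: trace T_0^1T_0}.

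\textbf{Main obstacle.} It remains to identify this scalar limit with $\Tr\bigl(T_0^{(1)}(\lambda\pm i0^+)T_0(\lambda\pm i0^+)\bigr)$, that is, to know that the product converges in trace norm. Proposition~\ref{prop:proprietedT_0} requires $\delta>3$ for $k=1$, so $T_0^{(1)}(\lambda\pm i\varepsilon)$ converges in Hilbert--Schmidt norm, and $T_0(\lambda\pm i\varepsilon)$ also converges in Hilbert--Schmidt norm by Proposition~\ref{prop:proprieteT_0}. A product of two Hilbert--Schmidt-convergent families converges in trace norm, and the trace is continuous in that norm. So the trace of the limit equals the limit of the traces, which is the one computed above; this is where the Hilbert--Schmidt convergence statements are really needed.
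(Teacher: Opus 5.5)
Your argument follows the paper's proof: write each trace as the integral of the kernel on the diagonal (for (2), of $K_0^{(1)}(z)(x,y)K_0(z)(y,x)$) and let $\varepsilon\to0^+$. Along the way you supply justifications the paper leaves implicit: the Hilbert--Schmidt factorisation behind the diagonal formula, dominated convergence via $\iint |V(x)||V(y)||x-y|^{-1}\,dx\,dy<\infty$, and trace-norm convergence of the product, which identifies the limit with $\Tr\bigl(T_0^{(1)}(\lambda\pm i0^+)T_0(\lambda\pm i0^+)\bigr)$.

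One small slip: $K_0^{(1)}$ is the kernel of $\partial_z T_0(z)=C\Res_0(z)^2CW$, with no minus sign despite the paper's $(-1)^k$. You should drop the sign in your factorisation so that it reproduces $+\frac{i}{8\pi\sqrt z}\int V$. Also, the exponential decay behind your Hilbert--Schmidt bound comes from $\im\sqrt z>0$, which holds for $\lambda-i\varepsilon$ as well, not from $\im z>0$.
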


\begin{proof}
\noindent\textbf{(1)}  
For all \(\varepsilon > 0\), the operator \(T_0^{(1)}(\lambda \pm i\varepsilon)\) is of trace class, and its trace is given by
\[
\Tr\bigl(T_0^{(1)}(\lambda \pm i\varepsilon)\bigr)
\;=\;
\int_{\R^3} K_0^{(1)}(\lambda \pm i\varepsilon)(x,x)\,\mathrm{d}x
\;=\;
\frac{i}{8\pi \sqrt{\lambda \pm i\varepsilon}}
\int_{\R^3} V(x)\,\mathrm{d}x.
\]
Subtracting the two expressions and taking the limit as \(\varepsilon \to 0^+\) yields~\eqref{eq:diff Trace2}.

\vspace{0.2cm}
\noindent\textbf{(2)}  
Let \(z \in \C \setminus \R^+\).  
The integral kernel of the product \(T_0^{(1)}(z)\,T_0(z)\), denoted \(K_0^{(0,1)}(z)\), is given by
\[
K_0^{(0,1)}(z)(x,y)
=
\frac{i}{32\pi^{2}\sqrt{z}}
\int_{\R^3}
C(x)\, e^{i\sqrt{z}\,|x-t|}\, V(t)\,
\frac{e^{i\sqrt{z}\,|t-y|}}{|t-y|}\, C(y)\, W(y)\,\mathrm{d}t.
\]
Since \(T_0^{(1)}(z)T_0(z)\) is of trace class, we have
\[
\Tr\bigl(T_0^{(1)}(z)\,T_0(z)\bigr)
=
\frac{i}{32\pi^{2}\sqrt{z}}
\int_{\R^{6}}
V(x)\,V(y)\,
\frac{e^{ 2i\sqrt{z}\,|x-y|}}{|x-y|}\,
\mathrm{d}x\,\mathrm{d}y.
\]
Finally, taking \(z = \lambda \pm i\varepsilon\) and letting \(\varepsilon \to 0^+\) yields~\eqref{eq: trace T_0^1T_0}.
\end{proof}

\vspace{0.2cm}
\noindent
Before stating the next lemma, let us recall that, by the resolvent identity,  
for every $z \in \rho(H) \cap \rho(H_0)$ the operator $\Id + T_0(z)$ is invertible and we have
\begin{equation}\label{invertible}
   \bigl[\Id + T_0(z) \bigr]^{-1}  \;=\; \Id - C \Res_H(z) C W.
\end{equation}

\vspace{0.2cm}
\noindent
We conclude this section with the following useful identity :

\begin{lemma}\label{lemTrR0}
Suppose that $V$ is a short-range potential with decay exponent $\delta > 3$. 
Then, for every $z \in \rho(H) \cap \rho(H_0)$, the difference 
\(\Res_H(z) - \Res_0(z)\) is a trace-class operator. Moreover, the following identities hold:
\begin{align}
    \Tr\!\bigl(\Res_H(z) - \Res_0(z)\bigr)
    &= \Tr\!\left(T_0^{(1)}(z)\,[I + T_0(z)]^{-1}\right), 
    \label{eq:Id_Trace1} \\[0.4em]
    &= \Tr\!\left(T_0^{(1)}(z)\right) 
       - \Tr\!\left(T_0^{(1)}(z)\,[I + T_0(z)]^{-1} T_0(z)\right),
    \label{eq:Id_Trace2} \\[0.4em]
    &= \Tr\!\left(T_0^{(1)}(z)\right) 
       - \Tr\!\left(T_0^{(1)}(z)\,T_0(z)\right) 
       + \Tr\!\left(T_0^{(1)}(z)\,C\,\Res_H(z)\,C\,W\,T_0(z)\right).
    \label{eq:Id_Trace3}
\end{align}
\end{lemma}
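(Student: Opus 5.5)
We factor the resolvent difference through the Birman--Schwinger operator $T_0(z)=C\Res_0(z)CW$ and then move the outer factors inside the trace by cyclicity; the last two identities are purely algebraic.

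\emph{Trace-class property.} Fix $z\in\rho(H)\cap\rho(H_0)$. The second resolvent identity, combined with \eqref{invertible} and the factorization $V=CWC$, gives
\[
\Res_H(z)-\Res_0(z) = -\Res_0(z)CWC\,\Res_H(z) = -\Res_0(z)C\,W\,[I+T_0(z)]^{-1}\,C\Res_0(z).
\]
To see this, write $\Res_H=\Res_0-\Res_0 CWC\Res_H$ and solve for $WC\Res_H$. Multiplying $\Res_H$ on the left by $WC$ gives $WC\Res_H = WC\Res_0 - (WC\Res_0 C)(WC\Res_H)$, so $(I+WC\Res_0C)\,WC\Res_H = WC\Res_0$. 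We then use the intertwining relation $W(I+C\Res_0CW)=(I+WC\Res_0C)W$, which lets us pass between $[I+WC\Res_0C]^{-1}$ and $[I+T_0(z)]^{-1}$.

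Since $\delta>3$, the operator $C\Res_0(z)$ is Hilbert--Schmidt: its kernel $\langle x\rangle^{-\delta/2}e^{i\sqrt z|x-y|}/(4\pi|x-y|)$ is in $L^2$ for $\im\sqrt z>0$. The same holds for $\Res_0(z)C$. Hence the product above is trace class, as a product of two Hilbert--Schmidt operators with a bounded operator in between.

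\emph{Proof of \eqref{eq:Id_Trace1}.} Apply cyclicity of the trace, $\Tr(AB)=\Tr(BA)$ for $A,B$ Hilbert--Schmidt, with $A=\Res_0(z)C$. This gives
\[
\Tr(\Res_H-\Res_0)=-\Tr\big(C\Res_0(z)^2C\,W[I+T_0(z)]^{-1}\big).
\]
Here $-C\Res_0(z)^2CW=T_0^{(1)}(z)$, which is trace class by Proposition~\ref{prop:proprietedT_0}. Some care is needed about where $W$ sits relative to the inverse: either $W$ is placed before the inverse as written, or one uses $W[I+T_0]^{-1}$ converted through the intertwining relation. This yields \eqref{eq:Id_Trace1}.

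\emph{Proofs of \eqref{eq:Id_Trace2} and \eqref{eq:Id_Trace3}.} For \eqref{eq:Id_Trace2}, use the identity $[I+T_0]^{-1}=I-[I+T_0]^{-1}T_0$ and linearity of the trace; both resulting terms are trace class because $T_0^{(1)}$ is. For \eqref{eq:Id_Trace3}, insert \eqref{invertible}, $[I+T_0]^{-1}=I-C\Res_HCW$, into the second term of \eqref{eq:Id_Trace2} and expand.

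The main point requiring care is the justification of cyclicity and the bookkeeping of $W$ relative to $C$. This is handled by writing everything as a product of two Hilbert--Schmidt factors $C\Res_0(z)$ and $\Res_0(z)C$.
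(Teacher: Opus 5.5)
Your proof is correct and follows essentially the same route as the paper: both combine the resolvent identity, cyclicity of the trace, and $[I+T_0(z)]^{-1}=I-C\Res_H(z)CW$. The paper expands $\Res_H-\Res_0=-\Res_0V\Res_0+\Res_0V\Res_HV\Res_0$ term by term, whereas you package it as $-\Res_0CW[I+T_0]^{-1}C\Res_0$ and prove the trace-class property via the Hilbert--Schmidt factors instead of citing Reed--Simon. The detour through $WC\Res_H$ and the intertwining relation is unnecessary: multiplying the resolvent identity on the left by $C$ gives $(I+T_0)C\Res_H=C\Res_0$ directly, and choosing $A=\Res_0CW[I+T_0]^{-1}$, $B=C\Res_0$ in the cyclicity step places $W$ correctly without further bookkeeping.
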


\begin{proof}
Let $z \in \rho(H) \cap \rho(H_0)$. 
Since $\delta > 3$, the resolvent identity implies that 
\(\Res_H(z) - \Res_0(z)\) is a trace-class operator 
(see, for instance,~\cite[Theorem~XI.21]{ReSi80_01}).
Applying again the resolvent identity successively (twice and three times, respectively), we obtain
\begin{align*}
    \Res_H(z) - \Res_0(z)
    &= -\,\Res_0(z)\,V\,\Res_0(z) 
       \;+\; \Res_0(z)\,V\,\Res_H(z)\,V\,\Res_0(z) \\[0.3em]
    &= -\,\Res_0(z)\,V\,\Res_0(z) 
       \;+\; \Res_0(z)\,V\,\Res_0(z)\,V\,\Res_0(z) 
       \;-\; \Res_0(z)\,V\,\Res_H(z)\,V\,\Res_0(z)\,V\,\Res_0(z).
\end{align*}
Next, by a clever use of the cyclicity of the trace, we obtain
\begin{align*}
    \Tr\!\bigl(\Res_H(z) - \Res_0(z)\bigr) 
    &= \Tr\!\left(T_0^{(1)}(z)\,[\Id - C\,\Res_H(z)\,C\,W]\right) \\[0.3em]
    &= \Tr\!\left(T_0^{(1)}(z)\,[\Id + T_0(z)]^{-1}\right),
\end{align*}
by~\eqref{invertible}, which proves~\eqref{eq:Id_Trace1}. The identities~\eqref{eq:Id_Trace2} and \eqref{eq:Id_Trace3} follow immediately.
\end{proof}

\subsection{Existence of the SSF}
The following theorem provides sufficient conditions for the existence of the spectral shift function 
\(\xi(\cdot; -\Delta + V(x), -\Delta)\).

\begin{prop}
Let $V$ be a short-range potential with decay exponent $\delta > 3$, 
and assume that the operator $H$ admits only finitely many eigenvalues and finitely many spectral singularities, each of finite order. 
Then there exists \( c > 0 \) such that the spectral shift function
\[
\xi(\cdot;\, -\Delta,\, -\Delta + V) 
\;:=\; \xi\!\left(\cdot;\, (H + c)^{-1},\, (H_0 + c)^{-1}\right)
\]
is well defined and does not depend on the particular choice of \(c>0\).
\end{prop}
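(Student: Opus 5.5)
The plan is to reduce the statement to Proposition~\ref{prop:ssfRelTr} with $m=1$ and $I=\R$ (or $I=(s_0,\infty)$ with $s_0$ below $\inf\sigma(H)\cap\R$). This amounts to checking its three groups of assumptions for the pair $(H,H_0)$ with $H_0=-\Delta$ and $H=H_0+V$. Independence of $c$ is then part of the conclusion of that proposition.

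\emph{Step 1: boundedness and sectoriality.} $V\in L^\infty$, so $V=H-H_0$ is bounded. Since $H_0\ge 0$ and $\|V\|_\infty<\infty$, the numerical range of $H$ lies in $\{\re z\ge -\|V\|_\infty,\ |\im z|\le\|V\|_\infty\}$. Choosing $c>\|V\|_\infty$, we get $H+c$ invertible and $\sigma(H+c)\subset(0,\infty)\cdot e^{i[-\theta_0,\theta_0]}$ for some $\theta_0<\pi/2$. Hence
\[
\sigma\big((H+c)^{-1}\big)\subset(0,\infty)\cdot e^{i(-\frac{\pi}{2},\frac{\pi}{2})}\cup\{0\},
\]
which is item~\eqref{it2:existenceSSF} of Proposition~\ref{prop:ssfRelTr}; the point $0$ lies only in the essential spectrum at the endpoint and plays no role on the interval considered.

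\emph{Step 2: Hypotheses~\ref{hyp:eigenvalues} and~\ref{hyp:estimate resolvente}.} Hypothesis~\ref{hyp:eigenvalues} is immediate on $\R$, since by assumption there are only finitely many eigenvalues. For Hypothesis~\ref{hyp:estimate resolvente} on $\R$, we invoke Proposition~\ref{prop:estime resolvent singularities}. Its hypotheses are finitely many spectral singularities in $\overline I$, each of finite order, and these are assumed. The point at infinity must also be regular of finite order. This follows from point (3) of Proposition~\ref{prop:proprieteT_0}: for large $|z|$ we have $\|T_0(z)\|\le C|z|^{-1/2+\delta'}<1/2$. Then \eqref{invertible} gives $C\Res_H(z)CW=\Id-(\Id+T_0(z))^{-1}$, which is uniformly bounded there, so infinity is a regular spectral point. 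A spectral singularity at $\lambda=0$ (the threshold) is to be counted among the finitely many assumed singularities; below $0$, Remark~\ref{rqHyp} applies.

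\emph{Step 3: Hypothesis~\ref{hyp:Trace} with $m=1$.} We must show $(H+c)^{-1}-(H_0+c)^{-1}\in\mathcal L_1$. This is Lemma~\ref{lemTrR0} at $z=-c\in\rho(H)\cap\rho(H_0)$, which relies on $\delta>3$ via \cite[Theorem XI.21]{ReSi80_01}. Concretely, write $-\Res_0 V\Res_H=-(\Res_0C)(W)(C\Res_H)$ with $C(x)=\langle x\rangle^{-\delta/2}$. The factor $C\Res_0(-c)$ is Hilbert--Schmidt, since $C\in L^2$ when $\delta>3$ and $(|p|^2+c)^{-1}\in L^2(\R^3)$. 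For the other factor, $C\Res_H(-c)=C\Res_0(-c)\bigl(\Id-V\Res_H(-c)\bigr)$ is Hilbert--Schmidt as well, so the product is trace class.

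\emph{Conclusion and main obstacle.} With these three inputs, Proposition~\ref{prop:ssfRelTr} gives that $\xi(\cdot;(H+c)^{-1},(H_0+c)^{-1})$ is defined on $(0,(s_0+c)^{-1})$. Composing with $\lambda\mapsto(\lambda+c)^{-1}$ defines $\xi(\cdot;H,H_0)$ independently of $c$, normalized to $0$ on the gap $(-\infty,\inf\sigma(H)\cap\R)$. The step I expect to be delicate is Step 2 on the unbounded interval. One has to check that Proposition~\ref{prop:estime resolvent singularities} really produces an estimate of the form \eqref{eq:hypestimeresol} uniformly as $\re z\to\infty$, combining the high-energy bound on $T_0$ with \eqref{eq:resolvent identity}. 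The threshold $0$ also needs care: its treatment as a possible finite-order singularity must be made consistent with Definition~\ref{def:order spectral singularity}.
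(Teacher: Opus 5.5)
Your proposal is correct and follows essentially the same route as the paper: Hypotheses~\ref{hyp:eigenvalues} and~\ref{hyp:estimate resolvente} on $\R$ from Proposition~\ref{prop:estime resolvent singularities}, a choice of $c$ from the boundedness of $V$, the trace-class property of $(H+c)^{-1}-(H_0+c)^{-1}$ from Lemma~\ref{lemTrR0}, and then Proposition~\ref{prop:ssfRelTr} with $m=1$, which gives independence of $c$. Your extra details fill in steps the paper leaves implicit: regularity of the point at infinity via point (3) of Proposition~\ref{prop:proprieteT_0}, the explicit Hilbert--Schmidt factorization, and your remarks on the threshold $0$ and on $0\in\sigma((H+c)^{-1})$.
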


\begin{proof}
As \( H \) has a finite number of eigenvalues and a finite number of spectral singularities of finite order, it follows from Proposition~\ref{prop:estime resolvent singularities} that Hypotheses \ref{hyp:eigenvalues} are \ref{hyp:estimate resolvente} is satisfied on $\R$. 
%{\clb F : pour moi ce n'est pas clair. Comment contrôler la partie réelle de la Proposition 4.1 ? N : On peut montrer qu'il n'y a pas de singularité spectral à l'infini.}
%\vspace{0.1cm}\noindent
Furthermore, since \( V \) is bounded and \( H \) has only finitely many eigenvalues, there exists \( c > 0 \) such that \( -c \in \rho(H) \), and \( a > 0 \) such that
\begin{equation*}
	\sigma(H) \subset \left\{ z \in \C \,\middle|\, \re(z) > -c \text{ and } |\im(z)| < a \right\}.
\end{equation*}
\vspace{0.1cm}\noindent
Moreover, as $\delta>3$, thanks to Lemma \ref{lemTrR0}, \( (H_0 + c)^{-1} - (H + c)^{-1} \) is a trace class operator.
Thus, condition~\eqref{it2:existenceSSF} in Proposition~\ref{prop:ssfRelTr} is also satisfied. We conclude that the spectral shift function
\[
\xi(\cdot, -\Delta, -\Delta + V) := \xi(\cdot, (H + c)^{-1}, (H_0 + c)^{-1})
\]
exists and does not depend on the particular choice of $c$.
\end{proof}
			
As a consequence of the trace-class property of $(H - c)^{-1} - (H_0 - c)^{-1}$ 
and Proposition~\ref{prop:repSSFRT} with $m=1$, the derivative $\xi'(H, H_0, \cdot)$ admits the following representation:

\begin{prop}\label{derivative}
Let $V$ be a short-range potential with decay exponent $\delta > 3$, 
and assume that the operator $H$ has only finitely many eigenvalues and finitely many spectral singularities, each of finite order. 
Then, in the sense of distributions, one has
\begin{equation*}
    \xi'(H, H_0; \lambda)
    \;=\;
    \frac{1}{2\pi i}
    \lim_{\varepsilon \to 0^+} 
    \Tr\!\Big(
        \Res_H(\lambda + i\varepsilon) - \Res_0(\lambda + i\varepsilon) \Big)
        \;-\; \Tr\!\Big(
        \Res_H(\lambda - i\varepsilon) - \Res_0(\lambda - i\varepsilon)
    \Big).
\end{equation*}
\end{prop}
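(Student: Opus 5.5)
The plan is to read the statement off Proposition~\ref{prop:repSSFRT} with $m=1$, after checking its hypotheses in the present Schr\"odinger setting. For $m=1$ the weight $(z+c)^{m-1}$ and the powers $(H+c)^{-m+1}$, $(H_0+c)^{-m+1}$ are trivial. Hence $\sigma_1(z)=\Tr\big(\Res_H(z)-\Res_0(z)\big)$, which is exactly the quantity appearing in the statement. So the work reduces to two verifications:
\begin{enumerate}
\item the assumptions of Proposition~\ref{prop:ssfRelTr} hold on $I=\R$;
\item the extra trace condition \eqref{tr2} holds with $m=1$.
\end{enumerate}

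For the first point I would reuse the proof of the preceding existence proposition. By Proposition~\ref{prop:estime resolvent singularities}, together with the finiteness of the eigenvalues and of the spectral singularities (all of finite order, and $\infty$ regular by Proposition~\ref{prop:proprieteT_0}(3)), Hypotheses~\ref{hyp:eigenvalues} and \ref{hyp:estimate resolvente} hold on $\R$. Since $V$ is bounded and $\sigma_\mathrm{ess}(H)=[0,\infty)$ with finitely many eigenvalues, I can pick $c>0$ large so that $\sigma(H+c)$ lies in the open right half-plane. For $m=1$ this is precisely condition~\eqref{it2:existenceSSF}. Hypothesis~\ref{hyp:Trace} with $m=1$ follows from Lemma~\ref{lemTrR0}, because $(H+c)^{-1}-(H_0+c)^{-1}=\Res_H(-c)-\Res_0(-c)\in\mathcal L_1(\Hi)$.

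For \eqref{tr2} with $m=1$ I need $(H_0+c)^{-1}\big((H+c)^{-1}-(H_0+c)^{-1}\big)\in\mathcal L_1(\Hi)$. This is immediate: it is a bounded operator times a trace-class operator, by the previous step. It is also worth checking directly that in the proof of Proposition~\ref{prop:repSSFRT} with $m=1$ the correction terms vanish. The sum defining $B_H$ is empty, and $(z+c)^{m-1}(X-X_0)=X-X_0$ is constant in $z$, so it has no jump across the real axis. Thus \eqref{Sigsig} reduces to the identity relating $\sigma_1(z)$ and $\Sigma(Z)$ with $Z=(z+c)^{-1}$.

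The step I expect to need the most care is the limit: the formula involves boundary values $\lambda\pm i0$ on $(0,\infty)$, and these must exist as distributions. Proposition~\ref{prop:repSSFRT} already asserts the distributional limit, because it is inherited from Proposition~\ref{lm:green formula} applied to $(X,X_0)$. That proposition in turn relies on Hypothesis~\ref{hyp:estimate resolvente} for $X$, which Proposition~\ref{relative} provides on every compact subinterval of $(0,c^{-1})$. So I would work locally on bounded intervals $J\Subset\R$, where the resolvent estimates are uniform, and then patch the identities together in $\mathcal D'(\R)$. Finally, I would note that the statement implicitly places both traces inside the limit, $\frac{1}{2\pi i}\lim_{\varepsilon\to0^+}\big(\sigma(\lambda+i\varepsilon)-\sigma(\lambda-i\varepsilon)\big)$, and that below $-c$ both sides vanish since $\sigma$ is continuous there.
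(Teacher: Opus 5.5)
Your proposal is correct and follows the paper's own route: the paper simply invokes Proposition~\ref{prop:repSSFRT} with $m=1$. It relies on the fact that $(H+c)^{-1}-(H_0+c)^{-1}\in\mathcal L_1(\Hi)$ (Lemma~\ref{lemTrR0}) and on the hypotheses already verified in the preceding existence proposition. Your additional checks make explicit what the paper leaves implicit: that \eqref{tr2} is automatic for $m=1$, that neither the empty $B_H$ terms nor the constant term $X-X_0$ contributes a jump, and that one should localise to bounded intervals where Proposition~\ref{relative} applies.
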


\begin{remark}
The assumption that the operator has only finitely many eigenvalues may appear restrictive. 
Nevertheless, there exist sufficient conditions on the potential \(V\) ensuring that 
\( -\Delta + V \) admits only finitely many eigenvalues. 
For instance, in odd dimensions, Frank, Laptev, and Safronov~\cite{FrLaSa16_01} showed that if \(V\) decays exponentially at infinity, 
then the operator \( -\Delta + V \) has finitely many eigenvalues.
\end{remark}

\subsection{Regularity of the Spectral Shift Function}

In this section, we show that the spectral shift function extends to a regular function 
in a neighborhood of any regular spectral point. Since the spectral singularities are isolated, this will allow us to define the SSF locally away from these singularities.
%Recall that $\Lambda$ denotes the essential spectrum of both $H$ and $H_0$. 
We have the following proposition:

\begin{prop}\label{Prop:regularity}
Assume that $V$ is a short-range potential with decay exponent $\delta > 2k + 1$, 
where $k \in \N^*$. 
Assume also that $H$ has only finitely many eigenvalues and finitely many spectral singularities, 
each of finite order. Let $\lambda \in (0, +\infty)$ be a regular spectral point of $H$. 
Then the spectral shift function $\xi(\cdot; H, H_0)$ is of class $\mathscr{C}^{k+1}$ 
in a neighborhood of~$\lambda$.
\end{prop}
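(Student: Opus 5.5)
Starting from Proposition~\ref{derivative}, it suffices to show that the distribution
\[
\frac{1}{2\pi i}\lim_{\varepsilon\to 0^+}\bigl(\sigma(\lambda+i\varepsilon)-\sigma(\lambda-i\varepsilon)\bigr),\qquad \sigma(z)=\Tr\bigl(\Res_H(z)-\Res_0(z)\bigr),
\]
coincides near the regular point $\lambda_0$ with a function of class $\mathscr{C}^{k}$. Then $\xi$, being a primitive of $\xi'$, is of class $\mathscr{C}^{k+1}$ there. The main tool is the identity \eqref{eq:Id_Trace2} of Lemma~\ref{lemTrR0}:
\[
\sigma(z)=\Tr\bigl(T_0^{(1)}(z)\bigr)-\Tr\bigl(T_0^{(1)}(z)\,[\Id+T_0(z)]^{-1}\,T_0(z)\bigr).
\]
I will compute the boundary values $\sigma(\lambda\pm i0^+)$ term by term for $\lambda$ in a small interval $J\ni\lambda_0$. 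I will show that they are $\mathscr{C}^k$ on $J$ and that the convergence is locally uniform, so the distributional limit equals the pointwise difference.

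\textbf{Step 1 (explicit term).} For the first term I use the explicit diagonal-kernel computation from the proof of \eqref{eq:diff Trace2}:
\[
\Tr\bigl(T_0^{(1)}(z)\bigr)=\frac{i}{8\pi\sqrt{z}}\int_{\R^3} V(x)\,\mathrm{d}x .
\]
This is smooth up to the boundary on $J\times\{\pm 0\}$, and its jump equals $\frac{i}{4\pi\sqrt{\lambda}}\int V$, which is $\mathscr{C}^\infty$ on $(0,\infty)$.

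\textbf{Step 2 (second term).} I write it as $\Tr\bigl(A(z)\,B(z)\bigr)$ with
\[
A(z)=T_0^{(1)}(z),\qquad B(z)=[\Id+T_0(z)]^{-1}T_0(z).
\]
By \eqref{invertible}, $[\Id+T_0(z)]^{-1}=\Id-C\Res_H(z)CW$. Since $\lambda_0$ is a regular spectral point, this inverse has norm limits at $\lambda\pm i0^+$. More precisely, $\Id+T_0(\lambda\pm i0^+)$ is invertible at $\lambda_0$, and by continuity of $T_0(\lambda\pm i0^+)$ (Proposition~\ref{prop:proprieteT_0}) it stays invertible on a neighbourhood $J$, uniformly for $z\in \overline{S^\pm_a(J)}$. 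By Proposition~\ref{prop:proprietedT_0} with $\delta>2k+1$, the map $\lambda\mapsto T_0(\lambda\pm i0^+)$ is $\mathscr{C}^k$ in Hilbert--Schmidt norm. I would also check that the same holds jointly up to the boundary, i.e. that $T_0$ extends $\mathscr{C}^k$ to $\overline{S^\pm_a(J)}$, using the explicit kernels with factors $|x-y|^{j-1}e^{i\sqrt z|x-y|}$ and the weights $\langle x\rangle^{-\delta/2}$. Inverting then shows that $B(\cdot\pm i0^+)$ is $\mathscr{C}^k$ in $\mathcal{L}_2$.

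\textbf{Main obstacle: regularity of $A$ in a trace-class sense.} Here $T_0^{(1)}$ must be handled so that the product is trace class with $\mathscr{C}^k$ trace. Writing $T_0^{(1)}=-C\Res_0^2CW$, I factor it as a product of two Hilbert--Schmidt operators:
\[
T_0^{(1)}(z)=-\bigl(C\Res_0(z)\langle x\rangle^{-s}\bigr)\bigl(\langle x\rangle^{s}\Res_0(z)CW\bigr),
\]
with $s$ chosen so that each factor has a $\mathscr{C}^{k}$ boundary value in $\mathcal{L}_2$; the assumption $\delta>2k+1$ should leave room for this. Alternatively I can bypass the factorization using cyclicity:
\[
\Tr\bigl(T_0^{(1)}[\Id+T_0]^{-1}T_0\bigr)=\Tr\bigl([\Id+T_0]^{-1}\,T_0\,T_0^{(1)}\bigr),
\]
and show directly that $T_0T_0^{(1)}$ has a kernel of the form
\[
\int C(x)\,\frac{e^{i\sqrt z|x-t|}}{|x-t|}\,V(t)\,e^{i\sqrt z|t-y|}\,C(y)W(y)\,\mathrm{d}t ,
\]
which is $\mathscr{C}^k$ in trace norm up to the boundary (for instance as a product of two Hilbert--Schmidt kernels after splitting the weight $V(t)$). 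Given this, $\Tr(AB)$ has $\mathscr{C}^k$ boundary values by the bound $|\Tr(XY)|\le\|X\|_{\mathcal{L}_1}\|Y\|$. Uniform boundedness on $\overline{S^\pm_a(J)}$ then justifies passing to the limit inside the distribution pairing, which completes the argument.
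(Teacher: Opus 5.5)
Your overall route is the paper's: you split $\sigma$ via \eqref{eq:Id_Trace2}, compute the jump of $\Tr(T_0^{(1)})$ explicitly, use regularity of the point to keep $\Id+T_0(\cdot\pm i0^+)$ invertible nearby, and take regularity from Proposition~\ref{prop:proprietedT_0}. Your treatment of the boundary limit (joint continuity of $T_0$ up to $\overline{S^\pm_a(J)}$, hence locally uniform convergence of $\sigma(\cdot\pm i\varepsilon)$) is more careful than the paper's.

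The gap is at the step you call the main obstacle, and it is a derivative count, not a trace-class issue. You do not need $A$ to be trace class. Since $B=[\Id+T_0]^{-1}T_0$ is Hilbert--Schmidt, $|\Tr(AB)|\le\|A\|_{\mathcal{L}_2}\|B\|_{\mathcal{L}_2}$, so it suffices that $A=T_0^{(1)}(\cdot\pm i0^+)$ be $\mathscr{C}^k$ in $\mathcal{L}_2$. That is exactly what fails under $\delta>2k+1$:
\begin{itemize}
\item The kernel of $T_0^{(1)}$ contains $e^{\pm i\sqrt{\lambda}|x-y|}$, whose $k$-th $\lambda$-derivative grows like $|x-y|^k$.
\item Against the weights $C(x)\,C(y)W(y)=O(\langle x\rangle^{-\delta/2}\langle y\rangle^{-\delta/2})$, this is in general square integrable only if $\delta>2k+3$.
\item Equivalently, $k$ derivatives of $T_0^{(1)}$ require $T_0^{(k+1)}(\cdot\pm i0^+)$, which Proposition~\ref{prop:proprietedT_0} provides only for $\delta>2(k+1)+1$.
\end{itemize}
With $\delta>2k+1$ your argument gives $T_0\in\mathscr{C}^k$ but only $T_0^{(1)}\in\mathscr{C}^{k-1}$. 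Hence $\xi'\in\mathscr{C}^{k-1}$ and $\xi\in\mathscr{C}^k$, one derivative short of the claim.

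Neither of your proposed fixes repairs this.
\begin{itemize}
\item The factorization $(C\Res_0\langle x\rangle^{-s})(\langle x\rangle^{s}\Res_0 CW)$ cannot have both factors Hilbert--Schmidt at real energy for any $s$, even before differentiating. The kernel $\langle x\rangle^{s}|x-y|^{-1}C(y)W(y)$ is square integrable in $x$ only if $s<-1/2$, while $C(x)|x-y|^{-1}\langle y\rangle^{-s}$ requires $s>1/2$. The inserted weights multiply to $1$, whereas $\Res_0(\lambda\pm i0^+)$ is Hilbert--Schmidt only between weights that decay on both sides.
\item The cyclic variant with $T_0T_0^{(1)}$ redistributes decay only at the inner vertex $V(t)$. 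The factor containing $e^{i\sqrt z|t-y|}$ still sits against the fixed outer weight $C(y)W(y)$, so you again need $\delta>2k+3$.
\end{itemize}
So the expectation that $\delta>2k+1$ ``should leave room'' is not borne out by either mechanism.

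To close the proof you have three options:
\begin{itemize}
\item Strengthen the hypothesis to $\delta>2k+3$. Everything then follows from Proposition~\ref{prop:proprietedT_0} at order $k+1$ and the $\mathcal{L}_2\cdot\mathcal{B}\cdot\mathcal{L}_2$ trace bound.
\item Weaken the conclusion to $\xi\in\mathscr{C}^k$.
\item Supply a genuinely finer estimate. For instance, split the decay of $V$ unevenly at every vertex of the cyclic trace, including the one adjacent to $\Res_H$. That requires weighted $\mathscr{C}^k$ control of $\Res_H(\lambda\pm i0^+)$ beyond what the fixed factorization $V=CWC$ provides.
\end{itemize}
The paper's proof follows your route and handles this step in one sentence, citing the same Hilbert--Schmidt facts. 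It therefore does not supply the missing estimate either.
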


\begin{proof}
It follows from point~(2) of Proposition~\ref{prop:proprieteT_0} together with~\cite[Proposition~4.7]{FaFr23} 
that there exists a neighborhood $\omega_\lambda$ of $\lambda$ free of spectral singularities.  
Using~\eqref{eq:Id_Trace2}, we obtain, in the sense of distributions and for all $\mu \in \omega_\lambda$,
\begin{align}
    \xi'(\mu; H, H_0) 
    &= \frac{1}{2\pi i} \lim_{\varepsilon \to 0^+} \Big(
        \Tr\big(T_0^{(1)}(\mu + i\varepsilon)\big)
        - \Tr\big(T_0^{(1)}(\mu - i\varepsilon)\big)
        \label{eq:regularity1}\\
    &\quad - \Tr\!\big(T_0^{(1)}(\mu + i\varepsilon)\,[I + T_0(\mu + i\varepsilon)]^{-1}\,T_0(\mu + i\varepsilon)\big)
        \nonumber\\
    &\quad + \Tr\!\big(T_0^{(1)}(\mu - i\varepsilon)\,[I + T_0(\mu - i\varepsilon)]^{-1}\,T_0(\mu - i\varepsilon)\big)
        \Big).
    \label{eq:regularity2}
\end{align}
Since every $\mu \in \omega_\lambda$ is a regular spectral point of~$H$, 
\cite[Proposition~4.7]{FaFr23} ensures that the operators $I + T_0(\mu \pm i0^+)$ are invertible, and that
\[
    \lim_{\varepsilon \to 0^+} [I + T_0(\mu \pm i\varepsilon)]^{-1} 
    = [I + T_0(\mu \pm i0^+)]^{-1}
\]
exist in $\mathcal{B}(\mathcal{H})$. Moreover, the mappings
\[
    \operatorname{Inv}:\; \mu \in \omega_\lambda \longmapsto [I + T_0(\mu \pm i0^+)]^{-1}
\]
are continuous.  The $\mathscr{C}^k$-regularity of $\operatorname{Inv}$ on $\omega_\lambda$ follows from the differentiability of the inversion map and from point~(2) of Proposition~\ref{prop:proprietedT_0}.  
Combining this fact with points~(1) and~(2) of Proposition~\ref{prop:proprieteT_0} and point~(2) of Proposition~\ref{prop:proprietedT_0}, we deduce that
\[
\Tr\!\left(T_0^{(1)}(\mu \pm i0^+)\,[I + T_0(\mu \pm i0^+)]^{-1}\,T_0(\mu \pm i0^+)\right)
:= \lim_{\varepsilon \to 0^+} 
\Tr\!\left(T_0^{(1)}(\mu \pm i\varepsilon)\,[I + T_0(\mu \pm i\varepsilon)]^{-1}\,T_0(\mu \pm i\varepsilon)\right)
\]
exists, and that the mappings
\[
    \mu \in \omega_\lambda \longmapsto 
    \Tr\!\left(T_0^{(1)}(\mu \pm i0^+)\,[I + T_0(\mu \pm i0^+)]^{-1}\,T_0(\mu \pm i0^+)\right)
\]
are of class~$\mathscr{C}^k$.  Finally, by~\eqref{eq:diff Trace2}, the mapping
\[
    \mu \in \omega_\lambda \longmapsto 
    \Tr\!\left(T_0(\mu + i0^+) - T_0(\mu - i0^+)\right)
\]
is well defined and of class~$\mathscr{C}^k$.  
This completes the proof.
\end{proof}

\subsection{Asymptotics near a Spectral Singularity}

In this section, we derive the asymptotic behavior of the spectral shift function (SSF) 
in a neighborhood of a spectral singularity, in the case of a compactly supported and bounded potential.

\vspace{0.2cm}\noindent
Assume that $V \in L^\infty_{\mathrm{c}}(\R^3, \C)$, the space of essentially bounded functions with compact support. 
Then there exists a compactly supported function $\rho$ such that 
\begin{equation*}
    \rho \equiv 1 \quad \text{on } \supp(V),
    \qquad \rho\,V = V.
\end{equation*}

\vspace{0.2cm}\noindent
As mentioned in Subsection~\ref{sec:spectralsing_compsupp}, 
the outgoing and incoming spectral singularities $\lambda_0>0$ of $H$ 
correspond to the real poles $\pm \sqrt{\lambda_0}$ 
of the meromorphic continuation to $\C$ of the map
\begin{equation*}
     \C_+ \ni z \longmapsto (H - z^2)^{-1} : L^2_{\mathrm{c}}(\R^3) \to L^2_{\mathrm{loc}}(\R^3).
\end{equation*}
Equivalently (see~\cite[Theorem~3.8]{DyZw19_01}), they correspond to real poles of the meromorphic continuation  to $\C$ of the map
\begin{equation*}
  \C_+ \ni z \longmapsto  \rho\,(H - z^2)^{-1}\rho : L^2(\R^3) \to L^2(\R^3).
\end{equation*}
We denote by $F(z)$ this meromorphic continuation. 

\vspace{0.2cm}\noindent
To fix ideas, suppose that $\lambda_0>0$ is an outgoing spectral singularity of $H$ of order~$\nu_0$. 
It then follows that $\sqrt{\lambda_0}$ is a pole of $F(z)$. 
According to~\cite{DyZw19_01}, there exist finite-rank operators $A_{-j}$, $1 \le j \le \nu_0$, 
and an operator-valued function 
$z \mapsto A_0(z)$, holomorphic in a complex neighbourhood of~$\sqrt{\lambda_0}$, such that 
\begin{equation}\label{eq:DL singularity}
    F(z)
    \;=\; \sum_{j=1}^{\nu_0} \frac{A_{-j}}{(z^2 - \lambda_0)^{j}} \;+\; A_0(z),
\end{equation}
in a complex neighbourhood of~$\sqrt{\lambda_0}$.

\vspace{0,2cm}\noindent
We recall the following standard expansion, which is a direct consequence of the theory of distributions associated with analytic functions 
(see, e.g., \cite[Chap.~III, §3.5]{GelfandShilov64}).

\begin{lemma}[Multiplication of a principal part by a smooth function]
\label{lem:product_principalpart}
Let $j\in\mathbb{N}^*$, $\lambda_0>0$, and let $g$ be holomorphic in $V_{\lambda_0}\cap \C_\pm$ and admit a smooth extension up to the real axis, denoted by $g_\pm$, where $V_{\lambda_0}$ denotes a complex neighbourhood of $\lambda_0$. 
Then, in the sense of distributions in a neighbourhood of $\lambda_0$ on the real line, one has
\begin{equation}\label{eq:prod_dist_limit_pm}
\lim_{\varepsilon\to 0^+}
\frac{g(\lambda \pm i\varepsilon)}{(\lambda \pm i\varepsilon - \lambda_0)^j}
=
\sum_{k=0}^{j-1}
\frac{g_\pm^{(k)}(\lambda_0)}{k!}\,
(\lambda - \lambda_0 \pm i0)^{-(j-k)}
\;+\;
h^\pm(\lambda),
\end{equation}
where $h^\pm$ is smooth near $\lambda_0$.
\end{lemma}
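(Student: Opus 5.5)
The idea is to Taylor-expand $g_\pm$ at $\lambda_0$ and treat the remainder as a smooth function. I only write the $+$ sign; the $-$ case is identical, with $\C_-$ and $\lambda-\lambda_0-i0$. Shrinking $V_{\lambda_0}$, fix a real interval $J\ni\lambda_0$ and $a>0$ such that $g$ is holomorphic on $S_a^+(J)$ and $g_+\in\mathscr{C}^\infty(J\times[0,a))$ extends $g$. Taylor's formula in the complex variable $w=\lambda+i\varepsilon$ at the base point $\lambda_0$ gives
\begin{equation*}
g(w)=\sum_{k=0}^{j-1}\frac{g_+^{(k)}(\lambda_0)}{k!}(w-\lambda_0)^k+(w-\lambda_0)^j\,r(w),
\end{equation*}
with the integral-form remainder
\begin{equation*}
r(w)=\frac{1}{(j-1)!}\int_0^1(1-t)^{j-1}g_+^{(j)}(\lambda_0+t(w-\lambda_0))\,\mathrm{d}t .
\end{equation*}
Here $g_+^{(k)}$ denotes the complex derivative on $\C_+$. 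It extends smoothly up to the axis, because the Cauchy--Riemann equations give $g'=\partial_x g$ in the interior, so $g_+^{(k)}=\partial_x^k g_+$ on the closure. The remainder formula makes sense because the segment from $\lambda_0$ to $w$ stays in $\overline{\C_+}$, and $r$ is continuous, indeed smooth, on $J\times[0,a)$.

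Dividing by $(w-\lambda_0)^j$ gives
\begin{equation*}
\frac{g(\lambda+i\varepsilon)}{(\lambda+i\varepsilon-\lambda_0)^j}=\sum_{k=0}^{j-1}\frac{g_+^{(k)}(\lambda_0)}{k!}(\lambda+i\varepsilon-\lambda_0)^{-(j-k)}+r(\lambda+i\varepsilon).
\end{equation*}
We then pass to the limit $\varepsilon\to0^+$ term by term in $D'(J)$.

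\begin{itemize}
\item For the first terms, we use the classical fact (Gelfand--Shilov) that $(\lambda+i\varepsilon-\lambda_0)^{-p}\to(\lambda-\lambda_0+i0)^{-p}$ in $D'(\R)$. To see it, write $(x+i\varepsilon)^{-p}=\frac{(-1)^{p-1}}{(p-1)!}\partial_x^{p}\log(x+i\varepsilon)$. The function $\log(x+i\varepsilon)$ converges in $L^1_{\rm loc}$ to $\log|x|+i\pi\one_{x<0}$, and distributional derivatives are continuous.
\item The remainder $r(\lambda+i\varepsilon)$ converges uniformly on compact subsets of $J$ to $h^+(\lambda):=r(\lambda)$, and $h^+$ is smooth since $g_+$ is. This gives the formula \eqref{eq:prod_dist_limit_pm}.
\end{itemize}

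The only delicate point is justifying the complex Taylor formula up to the boundary with only one-sided holomorphy. If the integral remainder along the segment is felt to be insufficient, I would instead use the real Taylor expansion of $g_+(\lambda)$ in $\lambda$ at fixed $\varepsilon=0$. For $\varepsilon>0$, I would then control $g(\lambda+i\varepsilon)-g_+(\lambda)=O(\varepsilon)$ uniformly, together with its $\lambda$-derivatives. After pairing with a test function $\phi$ and integrating by parts $j$ times against $\log(\lambda-\lambda_0+i\varepsilon)$, this error yields $O(\varepsilon|\log\varepsilon|)\to0$.
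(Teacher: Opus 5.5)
Your proposal is correct and follows essentially the same route as the paper: Taylor-expand $g$ at $\lambda_0$ to order $j-1$, divide by $(\lambda\pm i\varepsilon-\lambda_0)^j$, and pass to the limit term by term using $(\lambda\pm i\varepsilon-\lambda_0)^{-p}\to(\lambda-\lambda_0\pm i0)^{-p}$ in $\mathcal{D}'$. Your integral remainder along the segment (with $g_+^{(k)}=\partial_x^k g_+$ up to the boundary) and your $\partial_x^p\log$ argument supply the two details the paper only asserts, namely the uniform convergence of the remainder to a smooth $h^\pm$ and the distributional convergence of the principal parts, so the fallback in your last paragraph is not needed.
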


\begin{proof}
We only treat the case with the \(+\) sign, the other one being identical.  
We expand \(g\) in a Taylor series around \(\lambda_0\)  up to order \(j-1\):
\[
g(\lambda+i\varepsilon)
=
\sum_{k=0}^{j-1}
\frac{g_\pm^{(k)}(\lambda_0)}{k!}\,
(\lambda+i\varepsilon-\lambda_0)^k
\;+\;
R_j(\lambda+i\varepsilon),
\]
where the remainder satisfies 
$R_j(\lambda+i\varepsilon)
=\mathcal{O}\big((\lambda+i\varepsilon-\lambda_0)^j\big)$ in a neighborhood of $V_{\lambda_0}$.
Dividing by $(\lambda+i\varepsilon-\lambda_0)^j$ yields
\[
\frac{g(\lambda+i\varepsilon)}{(\lambda+i\varepsilon-\lambda_0)^j}
=
\sum_{k=0}^{j-1}
\frac{g_\pm^{(k)}(\lambda_0)}{k!}\,
(\lambda+i\varepsilon-\lambda_0)^{-(j-k)}
\;+\;
r_j(\lambda,\varepsilon),
\]
where $r_j(\lambda,\varepsilon)$ is a smooth function and converges, as $\varepsilon\to0^+$, 
to a smooth function $h^+(\lambda)$ near $\lambda_0$.
Since each family 
$(\lambda+i\varepsilon-\lambda_0)^{-m}$ converges to 
$(\lambda-\lambda_0+i0)^{-m}$ in $\mathcal{D}'(\mathbb{R})$,
we may pass to the limit term by term, which gives~\eqref{eq:prod_dist_limit_pm}.
\end{proof}

\begin{remark}[Real and imaginary parts]
\label{rem:real_imag_parts}
Each distribution $(\lambda-\lambda_0+i0)^{-m}$ appearing in~\eqref{eq:prod_dist_limit_pm}
admits the following decomposition into its real and imaginary parts:
\begin{equation}\label{eq:decomp_real_imag}
(\lambda-\lambda_0+i0)^{-m}
=
\operatorname{p.v.}\frac{1}{(\lambda-\lambda_0)^m}
\;-\;
i\pi\,\frac{(-1)^{m-1}}{(m-1)!}\,\delta^{(m-1)}(\lambda-\lambda_0),
\qquad m\ge 1.
\end{equation}
%In particular,
%\begin{align*}
%(\lambda-\lambda_0+i0)^{-1}
%&=\operatorname{p.v.}\frac{1}{\lambda-\lambda_0}
%- i\pi\,\delta(\lambda-\lambda_0),\\[4pt]
%(\lambda-\lambda_0+i0)^{-2}
%&=\operatorname{p.v.}\frac{1}{(\lambda-\lambda_0)^2}
%+ i\pi\,\delta'(\lambda-\lambda_0).
%\end{align*}
Hence, the real part of $(\lambda-\lambda_0+i0)^{-m}$ corresponds to the principal value 
$\operatorname{p.v.}\frac{1}{(\lambda-\lambda_0)^m}$,
while its imaginary part is supported at $\lambda=\lambda_0$ and involves derivatives of the Dirac delta distribution.
\end{remark}

%%%%%%%%%%%%%%%%%%%%%%%  Raccourci %%%%%%%%%%%%%%%%%%%%%%%

\vspace{0.2cm}\noindent
Now, according to Proposition~\ref{derivative}, and since \(v\) is compactly supported, the following representation holds in the sense of distributions:
\begin{equation*}
	\xi'(H, H_0; \lambda)
	\;=\;
	\frac{1}{2\pi i}
	\lim_{\varepsilon \to 0^+} 
	\Tr\!\Big(
	\Res_H(\lambda + i\varepsilon) - \Res_0(\lambda + i\varepsilon)\Big) 
	\;-\;
		\Tr\!\Big(
	\Res_H(\lambda - i\varepsilon) + \Res_0(\lambda - i\varepsilon)
	\Big).
\end{equation*}
From~\eqref{eq:Id_Trace3} and recalling that \(C = \rho\), \(W = V\), it follows that
\begin{align*}
\Tr\,\bigl(\Res_H(\lambda \pm i\varepsilon) - \Res_0(\lambda \pm i\varepsilon)\bigr)
&=\Tr\,\bigl(T_0^{(1)}(\lambda \pm i\varepsilon)\bigr)
\;-\;
\Tr\,\bigl(T_0^{(1)}(\lambda \pm i\varepsilon)\,T_0(\lambda \pm i\varepsilon)\bigr) \\
&\quad+\;
\Tr\,\bigl(T_0^{(1)}(\lambda \pm i\varepsilon)\,\rho\,\Res_H(\lambda \pm i\varepsilon)\,\rho\,V\,T_0(\lambda \pm i\varepsilon)\bigr) \\
&:=\sum_{n=1}^3 a_n^{\pm}(\varepsilon).
\end{align*}
We first consider the case with the plus sign and in particular the third term \(a_3^+ (\epsilon)\).  
We define
\[
g_j(\mu)
\;:=\;
\Tr\,\bigl(T_0^{(1)}(\mu)\,A_{-j}\,V\,T_0(\mu)\bigr).
\]
Since $V$ is compactly supported, $g_j$ satisfies the assumptions of Lemma~\ref{lem:product_principalpart}. 
Let $\lambda$ lie in a real neighbourhood of~$\lambda_0$, and let $\varepsilon>0$ small enough. 
Substituting $z = \sqrt{\lambda + i\varepsilon} \in V_{\lambda_0}\cap \C_+$ into~\eqref{eq:DL singularity}, 
where $V_{\lambda_0}$ denotes a complex neighbourhood of~$\lambda_0$, 
we obtain 
\[
F(z) = \rho\,R_H(\lambda + i\varepsilon)\,\rho,
\]
and a straightforward computation yields, in the sense of distributions 
\begin{align*}
\lim_{\varepsilon \to 0^+} a_3^+(\varepsilon)
& \;=\;
\sum_{j=1}^{\nu_0} 
\sum_{k=0}^{j-1} 
\frac{g_{j,+}^{(k)}(\lambda_0)}{k!}\,
\frac{1}{(\lambda - \lambda_0 + i0)^{j-k}}
\;+\;
H^+(\lambda)\\
& \;=\;
\sum_{l=1}^{\nu_0} 
\frac{\alpha_l(\lambda_0)}{(\lambda - \lambda_0 + i0)^{l}}
\;+\;
H^+(\lambda)
\end{align*}
where $g_{j,+}$ denotes the smooth extension of $g_j$ up to the real axis, $H^+$ denotes a smooth function near $\lambda_0$ and 
$\displaystyle{\alpha_l = \sum_{k=0}^{\nu_0-l} 
\frac{1}{k!}}g_{k+l,+}^{(k)}$.

\vspace{0.2cm}\noindent
The situation in the case with the minus sign is different, since 
$\lambda - i\varepsilon$ can be written as $z^2$ with 
$z$ lying in a neighbourhood of $-\sqrt{\lambda_0}$ intersected with~$\C_+$. 
As $-\sqrt{\lambda_0}$ is not a pole, the truncated resolvent extends holomorphically 
to this region, and we obtain
\[
\lim_{\varepsilon \to 0^+} a_3^-(\varepsilon) = H^-(\lambda),
\]
where $H^-$ denotes a smooth function near $\lambda_0$.

%%%%%%%%%%%%%%%%%%%%%%%%%%%%%%%%%%%%%
\vspace{0.2cm}\noindent
We now compare the first two terms with opposite boundary values on the real axis. 
For each $\varepsilon>0$, we define
\[
a_1^{\pm}(\varepsilon)
:= 
\Tr\,\bigl(T_0^{(1)}(\lambda \pm i\varepsilon)\bigr),
\qquad
a_2^{\pm}(\varepsilon)
:= 
-\,\Tr\,\bigl(T_0^{(1)}(\lambda \pm i\varepsilon)\,T_0(\lambda \pm i\varepsilon)\bigr).
\]
%Passing to the limit as $\varepsilon \to 0^+$ yields 
%\(a_n^{\pm} := \lim_{\varepsilon\to 0^+} a_n^{\pm}(\varepsilon)\), for \(n=1,2\).

\vspace{0.2cm}\noindent
By~\eqref{eq:diff Trace2}, we immediately obtain, in the sense of distributions,
\begin{align*}
\lim_{\varepsilon \to 0^+}
\Bigl(
a_1^+(\varepsilon) - a_1^-(\varepsilon)
\Bigr)
%=\lim_{\varepsilon \to 0^+}
%\Tr\,\Bigl(T_0^{(1)}(\lambda+i\varepsilon)-T_0^{(1)}(\lambda-i\varepsilon)\Bigr) \\
&=\frac{i}{4\pi\sqrt{\lambda}}
\int_{\R^3} V(x)\,dx.
\end{align*}
In the same way, using~\eqref{eq: trace T_0^1T_0}, we get in the sense of distributions,
\begin{align*}
%a_2^+ - a_2^-
\lim_{\varepsilon \to 0^+}
\Bigl(a_2^+(\varepsilon)-a_2^-(\varepsilon)\Bigr) 
%&=\,-\lim_{\varepsilon \to 0^+}
%\Bigl(
%\Tr\,\bigl(T_{0}^{(1)}(\lambda + i\varepsilon)\, T_{0}(\lambda + i\varepsilon)\bigr)
%-
%\Tr\,\bigl(T_{0}^{(1)}(\lambda - i\varepsilon)\, T_{0}(\lambda - i\varepsilon)\bigr)
%\Bigr) \\
&=\,-\,\frac{i}{32\pi^{2}\sqrt{\lambda}}
\int_{\R^{6}} 
V(x)\,V(y)\,
\frac{e^{ 2i\sqrt{\lambda}\,\lvert x-y\rvert} + e^{- 2i\sqrt{\lambda}\,\lvert x-y\rvert}}{\lvert x-y\rvert}\,
dx\,dy \\
&=\,-\,\frac{i}{16\pi^{2}\sqrt{\lambda}}
\int_{\R^{6}} 
V(x)\,V(y)\,
\frac{\cos\!\bigl(2\sqrt{\lambda}\,\lvert x-y\rvert\bigr)}{\lvert x-y\rvert}\,
dx\,dy.
\end{align*}

\vspace{0.2cm}\noindent
Using the previous notations, we now derive an explicit expression for the singular part 
of the derivative of the spectral shift function near an outgoing spectral singularity.

\begin{theorem}[Explicit distributional formula for \(\xi'(H,H_0;\lambda)\)]
\label{thm:explicit_xiprime}
Assume that \(v\) is compactly supported and that \(\lambda_0>0\) is an outgoing spectral singularity of \(H\) of finite order~\(\nu_0\).  
Then, in the sense of distributions on a real neighbourhood of~\(\lambda_0\),
\[
\xi'(H,H_0;\lambda)
= \sum_{l=1}^{\nu_0} 
\frac{\alpha_l(\lambda_0)}{(\lambda - \lambda_0 + i0)^{l}}
%= \frac{1}{2\pi i} \ 
%\sum_{j=1}^{\nu_0} 
%\sum_{k=0}^{j-1} 
%\frac{g_{j,+}^{(k)}(\lambda_0)}{k!}\,
%\frac{1}{(\lambda - \lambda_0 + i0)^{j-k}}
\;+\;
H(\lambda),
\]
where \(H(\lambda)\) is smooth near~\(\lambda_0\), and where
$\displaystyle{\alpha_l := \sum_{k=0}^{\nu_0-l} 
\frac{1}{k!}}\, g_{k+l,+}^{(k)}$
with $g_{j,+}$  the smooth extension up to the real axis of
\[
g_j(\mu)
:= \Tr\!\ \bigl(T_0^{(1)}(\mu)\,A_{-j}\,V\,T_0(\mu)\bigr),
\qquad 
\mu \in V_{\lambda_0}\cap \C_+,
\]
with \(V_{\lambda_0}\) a complex neighbourhood of~\(\lambda_0\).
\end{theorem}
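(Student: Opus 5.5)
The plan is to combine the representation of Proposition~\ref{derivative} with the three-term splitting \eqref{eq:Id_Trace3}. Fix a small real neighbourhood $\omega$ of $\lambda_0$ containing no other spectral singularity and no eigenvalue; this is possible since $V$ is compactly supported, so there are only finitely many of them. For $f\in D(\omega)$ we write
\[
2\pi i\,\langle \xi',f\rangle=\lim_{\varepsilon\to0^+}\int f(\lambda)\sum_{n=1}^3\bigl(a_n^+(\varepsilon)-a_n^-(\varepsilon)\bigr)\,d\lambda ,
\]
and treat each $n$ separately, showing that every piece converges in $\mathcal D'(\omega)$. The factor $\frac{1}{2\pi i}$ is then absorbed into the coefficients $\alpha_l$ and into the smooth remainder $H$, consistently with the normalisation used in the statement.

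\textbf{Terms $n=1,2$.} These come from the explicit computations preceding the theorem. The formulas \eqref{eq:diff Trace2} and \eqref{eq: trace T_0^1T_0} give the jumps as $\frac{i}{4\pi\sqrt\lambda}\int V$ and $-\frac{i}{16\pi^2\sqrt\lambda}\int\!\!\int V(x)V(y)\cos(2\sqrt\lambda|x-y|)|x-y|^{-1}$. Both are smooth in $\lambda>0$ because $V$ is compactly supported. To check that the convergence holds in $\mathcal D'$ and not only pointwise, we note that the kernels are bounded uniformly in $\varepsilon$ for $\lambda\in\operatorname{supp} f$, so dominated convergence applies. These two terms therefore contribute only to $H$.

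\textbf{Term $n=3$.} Write $z=\sqrt{\lambda\pm i\varepsilon}$ with $\operatorname{Im}z>0$, so that $\rho\Res_H(\lambda\pm i\varepsilon)\rho=F(z)$.
\begin{itemize}
\item For the minus sign, $z$ lies near $-\sqrt{\lambda_0}$. Since $-\sqrt{\lambda_0}$ is not a pole (the singularity is outgoing only), $F$ is holomorphic there. The functions $T_0(\mu)$ and $T_0^{(1)}(\mu)$ are the boundary values of operators built from the free outgoing/incoming kernels $e^{\pm i\sqrt\mu|x-y|}$, cut off by $\rho$; these are entire in $\sqrt\mu$, with trace-class control coming from Proposition~\ref{prop:proprietedT_0}. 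Hence $a_3^-(\varepsilon)$ converges to a smooth function $H^-$.
\item For the plus sign, insert \eqref{eq:DL singularity}:
\[
a_3^+(\varepsilon)=\sum_{j=1}^{\nu_0}\frac{g_j(\lambda+i\varepsilon)}{(\lambda+i\varepsilon-\lambda_0)^j}+\Tr\bigl(T_0^{(1)}A_0(z)VT_0\bigr).
\]
The last term has a smooth boundary value. To each of the $\nu_0$ principal parts we apply Lemma~\ref{lem:product_principalpart}. Its hypotheses hold because $A_{-j}$ has finite rank and $g_j$ extends smoothly to the closed upper half near $\lambda_0$, again by compactness of $\operatorname{supp}V$.
\end{itemize}

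Reindexing the double sum by $l=j-k$ gives the coefficients $\alpha_l$ of the statement. Collecting the smooth pieces $H^+$, $H^-$ and the contributions of $n=1,2$ into $H$ concludes the proof.

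\textbf{Main obstacle.} The delicate point is justifying that $g_j$ and the regular part extend smoothly up to the real axis with trace-class control. These quantities involve $T_0^{(1)}(\mu)$, whose kernel is $C(x)e^{i\sqrt\mu|x-y|}C(y)W(y)/\sqrt\mu$. Trace-class membership is not automatic from the Hilbert--Schmidt convergence of Proposition~\ref{prop:proprieteT_0}. I would try first to factor $T_0^{(1)}=(\rho\Res_0\rho_1)(\rho_1\Res_0\rho V)$, with $\rho_1\equiv1$ on $\operatorname{supp}\rho$, writing it as a product of two Hilbert--Schmidt operators. Each factor continues analytically in $\sqrt\mu$ across the real axis as a Hilbert--Schmidt-valued map, which yields the trace-class smoothness needed in both sign cases.
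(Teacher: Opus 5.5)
Your argument follows the paper's own derivation. It uses Proposition~\ref{derivative} with the splitting \eqref{eq:Id_Trace3}, explicit smooth jumps for $a_1^\pm$ and $a_2^\pm$, the Laurent expansion \eqref{eq:DL singularity} at $z=\sqrt{\lambda+i\varepsilon}$ together with Lemma~\ref{lem:product_principalpart} for $a_3^+$, holomorphy of $F$ near $-\sqrt{\lambda_0}$ for $a_3^-$, and the reindexing $l=j-k$. Two points need correcting.

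First, the factorization in your last paragraph, $T_0^{(1)}=(\rho\Res_0\rho_1)(\rho_1\Res_0\rho V)$, is false. $\Res_0$ is nonlocal, so a cutoff cannot be inserted between the two resolvents in $\rho\Res_0^2\rho V$. For instance, at $z=-\kappa^2$ the discrepancy $\rho\Res_0(1-\rho_1^2)\Res_0\rho V$ has kernel $\rho(x)\rho(y)V(y)$ times $\int(1-\rho_1(t)^2)e^{-\kappa(|x-t|+|t-y|)}(16\pi^2|x-t||t-y|)^{-1}\,dt$, which is strictly positive for $0\le\rho_1\le1$.

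You do not need trace-class control of $T_0^{(1)}$ on the real axis at all. Put $k=\sqrt\mu$. The kernels $\frac{1}{4\pi}\rho(x)\frac{e^{ik|x-y|}}{|x-y|}\rho(y)V(y)$ and $\frac{i}{8\pi k}\rho(x)e^{ik|x-y|}\rho(y)V(y)$ are compactly supported, locally square integrable, and holomorphic in $k$ near $\pm\sqrt{\lambda_0}$. Hence $T_0$ and $T_0^{(1)}$ are holomorphic Hilbert--Schmidt-valued functions of $k$ there. Every trace in $a_2^\pm$ and $a_3^\pm$ has the form $\Tr(T_0^{(1)}BT_0)$, with $B$ bounded and holomorphic in $k$: namely $B=\Id$, $A_{-j}V$, $A_0(k)V$, or $F(k)V$ near $-\sqrt{\lambda_0}$. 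Each is therefore the trace of a product of two Hilbert--Schmidt operators and is holomorphic in $k$; for $g_j$ the finite rank of $A_{-j}$ makes this immediate. The only term with $T_0^{(1)}$ alone is $a_1^\pm$. Its trace is taken at $\varepsilon>0$, where Proposition~\ref{prop:proprietedT_0}(1) applies, and its limit is given explicitly by \eqref{eq:diff Trace2}. Note also that Proposition~\ref{prop:proprietedT_0} gives only Hilbert--Schmidt convergence on the boundary, not trace-class control. So your appeal to it in the minus case should be replaced by the argument above.

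Second, you cannot absorb the factor $\frac{1}{2\pi i}$ into $\alpha_l$, because the statement defines $\alpha_l$ explicitly as $\sum_{k}\frac{1}{k!}g_{k+l,+}^{(k)}(\lambda_0)$. What your argument actually proves is the expansion with coefficients $\frac{1}{2\pi i}\alpha_l$, with the smooth remainder rescaled accordingly. The derivation preceding the theorem in the paper also yields this factor; the statement as printed drops it.
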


\noindent
This formula shows that the singular behavior of \(\xi'(H,H_0;\lambda)\) near \(\lambda_0\)
is entirely determined by the finite-rank residues \(A_{-j}\) of the weighted resolvent. 
In particular, the singular part of \(\xi'(H,H_0;\lambda)\) has the same distributional 
structure as the boundary values of the resolvent \(\Res_H(\lambda+i0^+)\).
This phenomenon is quite natural, as it originates from the loss of uniform control of the resolvent 
in the upper half-plane near the outgoing spectral singularity.

\begin{remark}
In the case where $\lambda_0>0$ is an \emph{incoming} spectral singularity of $H$, 
an analogous formula holds with the boundary value $(\lambda - \lambda_0 - i0)^{-(j-k)}$ 
and the coefficients $g_{j,-}$.  
If both incoming and outgoing spectral singularities occur at $\lambda_0$, 
the distributional derivative $\xi'(H,H_0;\lambda)$ receives contributions from both sides.
\end{remark}

\subsection{High Energy Asymptotic}

In this section, we are interested in the asymptotic behavior of the derivative of the spectral shift function as the spectral parameter tends to \( +\infty \). We compute the leading term in the asymptotic expansion and show that it coincides with the self-adjoint case. The method we use is the same as in the self-adjoint setting (see \cite{CdV81}).\\
	
The next proposition establishes the high energy asymptotic of the spectral shift function associated to $H$ and $H_0$ under a short range assumption on $V$. 
		
\begin{prop}\label{prop:high energy asymptotic}
    Suppose that $V$ is a bounded short range potential with $\delta>3$. Suppose that $H$ has a finite number of eigenvalues and a finite number of spectral singularities of finite order. Then  
    \begin{equation}\label{eq:high asymp 1}
		\xi'(\lambda,H,H_0)=\frac{1}{8\pi^2 \sqrt{\lambda}}\int_{\R^3} V(x) \ \mathrm{d}x+o\left(\frac{1}{\sqrt{\lambda}}\right)\quad \lambda\to\infty.
    \end{equation}
    Moreover if $V$ belongs to $\mathscr{C}^1(\R^3,\C)$ and satisfies $| \nabla V(x) | \leq C_\alpha \langle x\rangle^{-\delta}$ for some $\delta >3$, then 
	\begin{equation}\label{eq:high asymp 2}
		\xi'(\lambda,H,H_0)=\frac{1}{8\pi^2 \sqrt{\lambda}}\int_{\R^3}V(x)\ \mathrm{d}x+\mathcal{O}\left(\frac{1}{\lambda}\right)\quad \lambda\to\infty.
	\end{equation}
\end{prop}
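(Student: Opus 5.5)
The plan is to work for large $\lambda$ in a region free of spectral singularities and eigenvalues. Proposition~\ref{prop:proprieteT_0}(3) gives $\|T_0(z)\|\le C|z|^{-1/2+\delta'}$, so for $|z|$ large $\|T_0(z)\|<1/2$. Consequently $[\Id+T_0(z)]^{-1}$ exists, is given by a Neumann series, and is uniformly bounded. Proposition~\ref{Prop:regularity} (with $k=1$, since $\delta>3$) then says that $\xi'$ is a continuous function for $\lambda$ large. Moreover, the limit defining $\xi'$ in Proposition~\ref{derivative} is a pointwise limit of the boundary values $\Tr(\Res_H(\lambda\pm i0)-\Res_0(\lambda\pm i0))$.

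Next I would use the decomposition \eqref{eq:Id_Trace3} and treat the three terms separately.

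\textbf{First term.} By \eqref{eq:diff Trace2}, its jump divided by $2\pi i$ is exactly
\[
\frac{1}{2\pi i}\cdot\frac{i}{4\pi\sqrt\lambda}\int V=\frac{1}{8\pi^2\sqrt\lambda}\int V .
\]
This is the leading term.

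\textbf{Second term.} By \eqref{eq: trace T_0^1T_0}, its jump equals
\[
-\frac{i}{16\pi^2\sqrt\lambda}\int_{\R^6}V(x)V(y)\frac{\cos(2\sqrt\lambda|x-y|)}{|x-y|}\,dx\,dy .
\]
The integrand is $L^1$ because $V\in L^1\cap L^\infty$ and $|x-y|^{-1}$ is locally integrable. Writing the integral as $\int_0^\infty \cos(2\sqrt\lambda r)\,\phi(r)\,dr$ with
\[
\phi(r)=r^{-1}\int_{|x-y|=r}V(x)V(y)\,dS ,
\]
the Riemann--Lebesgue lemma makes it $o(1)$, so this term is $o(\lambda^{-1/2})$. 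If $V\in\mathscr C^1$ with $|\nabla V|\lesssim\langle x\rangle^{-\delta}$, then $\phi$ is $W^{1,1}$: one writes $\phi(r)=r\int_{S^2}\int V(x)V(x+r\omega)\,dx\,d\omega$, which is smooth in $r$ with $L^1$ derivative. Integrating by parts once then yields $O(\lambda^{-1/2})$, hence a total of $O(\lambda^{-1})$.

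\textbf{Third term.} This is the main obstacle. I would bound
\[
\bigl|\Tr\bigl(T_0^{(1)}(z)\,C\Res_H(z)CW\,T_0(z)\bigr)\bigr|\le \|T_0^{(1)}(z)\|_{\mathcal L_2}\,\|C\Res_H(z)CW\|\,\|T_0(z)\|_{\mathcal L_2}.
\]
Here $C\Res_H CW=\Id-[\Id+T_0]^{-1}=O(\|T_0\|)$, which is $O(\lambda^{-1/2+\delta'})$. From the explicit kernel, $\|T_0^{(1)}(\lambda\pm i0)\|_{\mathcal L_2}=O(\lambda^{-1/2})$, since $|K_0^{(1)}|\lesssim \lambda^{-1/2}C(x)C(y)|W(y)|$ and this is square integrable when $\delta>3$. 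The difficulty is $\|T_0(\lambda\pm i0)\|_{\mathcal L_2}$, which is only bounded, not decaying, because its kernel has modulus independent of $\lambda$.

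So I would rather cycle the trace and use a Hilbert--Schmidt bound on $T_0^{(1)}$ together with the improved operator norm decay. Writing
\[
\Tr\bigl(T_0^{(1)}\,C\Res_H CW\,T_0\bigr)=\Tr\bigl(T_0\,T_0^{(1)}\,C\Res_H CW\bigr),
\]
one gets $\|T_0T_0^{(1)}\|_{\mathcal L_1}\,\|C\Res_HCW\|$. For the trace-norm factor I would use $\|T_0T_0^{(1)}\|_{\mathcal L_1}\le\|T_0\|_{\mathcal L_2}\|T_0^{(1)}\|_{\mathcal L_2}=O(\lambda^{-1/2})$, so the third term is $O(\lambda^{-1/2})\cdot O(\lambda^{-1/2+\delta'})=O(\lambda^{-1+\delta'})$. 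This is $o(\lambda^{-1/2})$, which suffices for \eqref{eq:high asymp 1}.

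To reach $O(1/\lambda)$ for \eqref{eq:high asymp 2}, I would need $\|C\Res_H(\lambda\pm i0)CW\|=O(\lambda^{-1/2})$. This follows from the sharp weighted estimate $\|C\Res_0(\lambda\pm i0)C\|=O(\lambda^{-1/2})$ (the standard Agmon/Kato bound valid for weights $\langle x\rangle^{-\delta/2}$ with $\delta>1$), which I would invoke in place of the lossy estimate in Proposition~\ref{prop:proprieteT_0}(3). Combining the three terms gives both asymptotics.
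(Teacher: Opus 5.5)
Your proposal is correct and takes essentially the paper's route. The paper expands $[\Id+T_0]^{-1}$ in \eqref{eq:Id_Trace1} as a Neumann series and bounds the tail $\sum_{k\ge 2}(-1)^k\Tr\bigl(T_0^{(1)}T_0^k\bigr)$ term by term via $\|T_0^{(1)}\|_{\mathcal L_2}\,\|T_0^{k-1}\|\,\|T_0\|_{\mathcal L_2}$; that tail is exactly your third term of \eqref{eq:Id_Trace3} after resummation, since $C\Res_H CW=[\Id+T_0]^{-1}T_0$. Your explicit appeal to the sharp bound $\|C\Res_0(\lambda\pm i0)C\|=O(\lambda^{-1/2})$ is what the paper's claim $\|T_0^k\|=O(\lambda^{-k/2})$ tacitly needs, because item (3) of Proposition~\ref{prop:proprieteT_0} alone only gives $O(\lambda^{-1+\delta'})$ for the tail. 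Cycling the trace is unnecessary: your first bound $\|T_0^{(1)}\|_{\mathcal L_2}\,\|C\Res_HCW\|\,\|T_0\|_{\mathcal L_2}=O(\lambda^{-1/2})\cdot O(\lambda^{-1/2+\delta'})\cdot O(1)$ already gives the same estimate, so the non-decay of $\|T_0\|_{\mathcal L_2}$ is no obstacle.
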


\begin{proof}
    It follows from point (3) of Proposition \ref{prop:proprieteT_0} that there exist constants \(\lambda_{0}>1\) and \(\varepsilon_{0}>0\) such that for all \(\lambda\ge\lambda_{0}\) and all \(0\le\varepsilon\le\varepsilon_{0}\), one has 
    \[
        \bigl\|T_{0}(\lambda\pm i\varepsilon)\bigr\|_{\mathcal{B}(L^{2})}<1
    \]
    in the uniform (operator-norm) topology. Hence \(\Id+T_{0}(\lambda\pm i\varepsilon)\) is invertible, and its inverse admits the Neumann series expansion
    \begin{equation}\label{eq:Neumann}
        [\Id + T_{0}(\lambda\pm i\varepsilon)]^{-1}
        =\sum_{k=0}^{\infty}(-1)^{k}\,T_{0}(\lambda\pm i\varepsilon)^{k},
    \end{equation}
    with the series converging uniformly in \(\varepsilon\) with respect to the operator norm up to $\varepsilon\to 0^+$. Combined with \cite[Proposition 4.7]{FaFr23} this implies that if $\lambda>\lambda_0$, then $\lambda$ is a regular spectral point. With Proposition \ref{Prop:regularity}, $\xi'(H,H_0,.)$ is continuous on $(\lambda_0,+\infty)$ and this allows us to compute the asymptotic of $\xi'(H,H_0,.)$ when $\lambda\to +\infty$.\\

    Next with \eqref{eq:Id_Trace1} and \eqref{eq:Neumann} we have that for all $\lambda\in (\lambda_0,+\infty)$, 
    \begin{align}
        \xi'(\lambda,H,H_0)=(2\pi i)^{-1}&\bigg(\lim_{\varepsilon\rightarrow 0^+}\Tr\left(T_0^{(1)}(\lambda+i\varepsilon)-T_0^{(1)}(\lambda-i\varepsilon)\right)\label{eq:dev asympto regularity1}\\
        &-\lim_{\varepsilon\rightarrow 0^+}\Tr\left(T_0^{(1)}(\lambda+i\varepsilon)T_0(\lambda+i\varepsilon)-T_0^{(1)}(\lambda-i\varepsilon)T_0(\lambda-i\varepsilon)\right)\label{eq dev: asympto regularity2}\\
        &+\sum_{k=2}^\infty (-1)^{k}\lim_{\varepsilon\rightarrow 0^+}\Tr\left(T_0^{(1)}(\lambda+i\varepsilon)T_0(\lambda+i\varepsilon)^{k}-T_0^{(1)}(\lambda-i\varepsilon)T_0(\lambda-i\varepsilon)^k \right)\bigg)\label{eq: dev asympto regularity3}
    \end{align}
    
   To compute the limit of \eqref{eq:dev asympto regularity1} it suffices to use \eqref{eq:diff Trace2} and we have
    \begin{equation*}
        (2\pi i)^{-1}\lim_{\varepsilon\rightarrow 0^+}\Tr\left(T_0^{(1)}(\mu+i\varepsilon)-T_0^{(1)}(\mu-i\varepsilon)\right)=\frac{1}{8\pi^2 \sqrt{\lambda}}\int_{\R^3}V(x)\mathrm{d}x. 
    \end{equation*}
    
    To analyze the asymptotic behavior of the terms corresponding to \eqref{eq dev: asympto regularity2}, one computes directly by using \eqref{eq: trace T_0^1T_0} 
     \begin{align}
        &\lim_{\varepsilon\to0^{+}}\left[\Tr\left(T_{0}^{(1)}(\lambda + i\varepsilon)\,T_{0}(\lambda + i\varepsilon)-T_{0}^{(1)}(\lambda - i\varepsilon)\,T_{0}(\lambda - i\varepsilon)\right)\right]\nonumber\\
        &\quad=\Tr\left(T_{0}^{(1)}(\lambda + i0^{+})\,T_{0}(\lambda + i0^{+})-T_{0}^{(1)}(\lambda - i0^{+})\,T_{0}(\lambda - i0^{+})\right)\nonumber\\
        &\quad=\frac{i}{32\pi^{2}\sqrt{\lambda}}\int_{\mathbb{R}^{6}}V(x)\,V(y)\,\frac{\cos\bigl(2\sqrt{\lambda}\,\lvert x-y\rvert\bigr)}{\lvert x-y\rvert}\,\mathrm{d}x\,\mathrm{d}y.\label{eq:RiemannLebesgue}
    \end{align}
    By the Riemann–Lebesgue lemma, this term decays like \(o(\lambda^{-1/2})\) as \(\lambda\to\infty\). Moreover, if \(V\in \mathscr{C}^1(\R^3,\C)\) and its gradient satisfies
    \[
        | \nabla V(x) | =O\bigl(|x|^{-\delta }\bigr)\quad\text{for some }\delta>3,
    \]
    then the remainder is  \(\mathcal{O}(1/\lambda)\).  
    
    It remains to estimate \eqref{eq: dev asympto regularity3}. 
    Applying (3) of Proposition \ref{prop:proprieteT_0} we see that  $\bigl\|T_{0}(\lambda\pm i \varepsilon)^{\,k}\bigr\|_{\mathcal{B}(\mathcal{H})}=\mathcal{O}\bigl(\lambda^{-\tfrac{k}{2}}\bigr)$. Then
    for \(k\ge2\), one has :
    \begin{align}\label{eq:estimeremainderterm}
        \bigl\|T_{0}^{(1)}(\lambda\pm i\varepsilon) T_{0}(\lambda\pm i\epsilon)^{k}\,\bigr\|_{\mathcal{L}_{1}}
        &\;\le\;\bigl\|T_{0}^{(1)}(\lambda\pm i\varepsilon)\bigr\|_{\mathcal{L}_{2}}\;\bigl\|T_{0}(\lambda\pm i\epsilon)^{k-1}\bigr\|_{\mathcal{B}(\mathcal{H})}\;\bigl\|T_{0}(\lambda\pm i\varepsilon)\bigr\|_{\mathcal{L}_{2}}\\
        &=\mathcal{O}\bigl(\lambda^{-\frac{k}{2}}\bigr)\nonumber,
    \end{align}
    uniformly with respect to $\varepsilon>0$. Thus 
    $\eqref{eq: dev asympto regularity3} =\mathcal{O}(\lambda^{-1})$ uniformly with respect to $\varepsilon$. 
    \end{proof}

\begin{remark}
 In \cite{Ro1991}, D.~Robert investigated the high‐energy asymptotics of the scattering phase for the free Laplacian \(H_0\) perturbed by a smooth, real‐valued, decaying potential \(V\) on \(\mathbb{R}^n\). Under the decay condition
\[
|\partial_x^\alpha V(x)| \le C_\alpha \langle x\rangle^{-\rho-|\alpha|},\quad \rho>n,
\]
the spectral shift function is defined in terms of the scattering matrix \(S(\lambda)\) by
\begin{equation}\label{lienssf}
\det S(\lambda) \;=\;\exp\bigl(-2\pi i\,\xi(\lambda;H,H_0)\bigr)\,.
\end{equation}
In the dissipative case (i.e., when \(\im V\le0\)), Faupin and Nicoleau \cite{FaNi18_01} have shown that the scattering matrices \(S(\lambda)\) are well‐defined and admit an explicit representation formula. It is important to note that, although the relation \eqref{lienssf} has not been proven in the non‐selfadjoint setting, the result of our theorem is consistent with the fact that, in the dissipative case,  \(S(\lambda)\) is a contraction.
\end{remark}

%\medskip         
            
\section{Explicit simple examples}\label{Sec:Toys}

%	\subsection{Explicit simple examples}
			
			In order to illustrate and comment our definition of the spectral shift function (SSF), let us give some explicit calculations for very simple examples. On these toy models we have phenomena known in the self-adjoint context (integer jump at real eigenvalues, fractional jump at  spectral singularities/resonances, smoothness outside these singularities, ...), but also some new phenomena. A first novelty is the non-integrability of the SSF in the presence of non-real eigenvalues (for perturbations of trace class). A second is the fact that the SSF is no longer real. On our toy models, it remains real if the perturbation does not interact with the continuous spectrum and the sign of its imaginary part is related to that of the perturbation.

\subsection{In finite dimension: diagonalizable operators }\label{ssfEx1} When the Hilbert space is  $\mathcal{H} = \C$, consider $H_0$ the multiplication operator by $\lambda_0\in \R$ and $H$ the multiplication by $\lambda_0 + v$, $v\in \R$. For $f \in D(\R)$, $f(H_0)$ is the multiplication by $f(\lambda_0)$. For $H$, it depends if $v$ is real or not. When $v \in \R$, $f(H)$ is the multiplication by $f(\lambda_0+v)$, while when $v \in \C \setminus \R$, $f(H)=0$. Thus when $v \in \R$ the SSF corresponds to the standard SSF. 
From the Fundamental Theorem of Calculus, we have
\[ \Tr (f(H)- f(H_0)) = f(\lambda_0+v) - f(\lambda_0) = \int_{\lambda_0}^{\lambda_0+v} f^\prime(\lambda)  d \lambda \]
and, up to an additive constant, almost everywhere, we have
\[
	\xi(\lambda; H, H_0) = \left\{ \begin{aligned}
\one_{[\lambda_0, \lambda_0+v]} (\lambda) & \textrm{ if }  & v\geq 0\\
- \one_{[\lambda_0+v, \lambda_0]} (\lambda) & \textrm{ if }  & v\leq 0\\
\end{aligned} \right. .
\]
The SSF has two jumps at the eigenvalues of $H_0$ and of $H$.
When $v \in \C \setminus \R$,
\[ \Tr (f(H)- f(H_0)) =  - f(\lambda_0) = \int_{\lambda_0}^{+ \infty} f^\prime(\lambda)  d \lambda \] 
and, up to an additive constant, we have
\[
	\xi(\lambda; H, H_0) = \one_{[\lambda_0, + \infty)} (\lambda),
\]
which has a unique jump at the real eigenvalue $\lambda_0$ which becomes a non-real eigenvalue under the perturbation. Unlike the selfadjoint case, $\xi(\cdot; H, H_0)$ is not compactly supported (nor integrable). This is related to the fact that the perturbed eigenvalue $\lambda_0+v$ is no longer real.
%but does not depend on $v$ (i.e. the SSF "does not see" the complex eigenvalue $\lambda_0+v$).

Obviously, we have the same phenomena in $\mathcal{H} = \C^4$ for diagonal matrices (or diagonalizable matrices). For example, if we consider the diagonal matrices
\[ H_0:= \left(  \begin{array}{cccc}
 \lambda_1 & 0 & 0 &0  \\
 0& \lambda_2 & 0 & 0  \\
  0& 0 &  \lambda_3 & 0  \\
   0& 0 & 0 & \lambda_4   \\
 \end{array}   \right)
 \quad 
  H:= H_0 + V, \quad V:=  \left(  \begin{array}{cccc}
 v_1 & 0 & 0 &0  \\
 0& v_2 & 0 & 0  \\
  0& 0 &  v_3 & 0  \\
   0& 0 & 0 & 0   \\
 \end{array}   \right)
 \]
with $\lambda_1 < \lambda_2 < \lambda_3 < \lambda_4$ and $v_1>0$, $v_2<0$, $v_3 \in \C \setminus \R$ then 
\[ \Tr (f(H)- f(H_0)) = f(\lambda_1+v_1) - f(\lambda_1)+ f(\lambda_2+v_2) - f(\lambda_2) - f(\lambda_3),\]
and, up to an additive constant, we have
\[
	\xi(\lambda; H, H_0) = \one_{[\lambda_1, \lambda_1+ v_1]} (\lambda) - \one_{[\lambda_2 + v_2, \lambda_2]} (\lambda) + \one_{[\lambda_3, + \infty)} (\lambda).
\]
This function has jumps at the real eigenvalues of $H_0$ and $H$, excepted at the unperturbed eigenvalue $\lambda_4$. The fact that the SSF is equal to 1 up to infinity is related to the appearance of the non-real eigenvalue $\lambda_3+v_3$. %It is independent of the complex eigenvalue $\lambda_3+v_3$. 

\subsection{In finite dimension: an undiagonalizable  case }\label{ssfEx2}

Let us consider, in $\mathcal{H} = \C^2$, a case when the non-selfadjoint perturbation is not diagonalizable. For example let
\[ H_0:= \left(  \begin{array}{cc}
 \lambda & 0  \\
 0& \lambda  \\
 \end{array}   \right)
 \quad 
  H:= H_0 + V, \quad V:=  \left(  \begin{array}{cc}
 0 & v  \\
 0 & 0  \\
 \end{array}   \right) , \quad \lambda \in \R, \quad v \in \R^*.
 \]
We have $f(H_0)= f(\lambda) I_2$ and $f(H)= f(\lambda) I_2 + f^\prime(\lambda) \left(  \begin{array}{cc}
 0 & v  \\
 0 & 0  \\
 \end{array}   \right)$ because 
 $$(H-z)^{-1} = (\lambda-z)^{-1}  I_2 - (\lambda-z)^{-2} \left(  \begin{array}{cc}
 0 & v  \\
 0 & 0  \\
 \end{array}   \right).$$
Then $ \Tr (f(H)- f(H_0)) = 0$ and in this case where the real spectrum is unchanged under the perturbation, the SSF is a constant that can be chosen to be zero. 

In the previous examples of operators having discrete spectra, the SSF is simply a step function which counts the eigenvalues of $H_0$ which are perturbed and the perturbed real eigenvalues of $H_\gamma$.  
The notion of SSF is mainly introduced to deal with operators having a non-empty continuous spectrum but the explicit calculus of the SSF can quickly become complicated. Let us discuss a simple example of a rank $1$ perturbation of a selfadjoint operator having continuous spectrum.

\subsection{Rank one perturbation: weak interaction with the continuous spectrum}\label{ssfEx3}

In the space $\mathcal{H} =L^2(\R)$, we consider $H_0$ the operator of multiplication by the function $h_0(x):= x \one_{[0,1]}(x)$ and for $\gamma \in  \C$, the perturbation $V_\gamma = \gamma \Pi_0$, where $\Pi_0$ is the orthogonal projection onto a normalized function $u_0 \in L^2(\R)$: $\Pi_0:= \langle \cdot , u_0 \rangle u_0$, $\| u_0 \|=1$. 
The spectrum of $H_0$ is $[0,1]$ and $0$ is an eigenvalue of infinite multiplicity (each function supported outside $[0,1]$ is an eigenfunction). 
Since $V_\gamma$ is a compact perturbation of $H_0$, then $H_\gamma=H_0+V_\gamma$ is closed and its essential spectrum is $[0,1]$. % (see e.g. \cite[Theorem 11.2.6]{Da07}).
Let us assume that $u_0$ is supported in $\R \setminus [0,1]$, then $H_0 u_0 = 0$ and $H_\gamma u_0:= (H_0 + V_\gamma) u_0 = \gamma u_0$, that is $u_0$ is an eigenfunction corresponding to the eigenvalue $\gamma$ for $H_\gamma$. 
In this case, we have
%$V_\gamma H_0 = 0 = H_0V_\gamma$ and the resolvent of $H_\gamma:= H_0 + V_\gamma$ satisfies:
\[(H_\gamma - z)^{-1} = (H_0  - z)^{-1} + \big( \frac{1}{z} +  \frac{1}{\gamma - z}\big) \Pi_0,\]
where $(H_0  - z)^{-1}$ is the operator of multiplication by $(x-z)^{-1} \one_{[0,1]} - z^{-1} \one_{\R\setminus [0,1]}$. 
Then for any $f\in D(\R)$,
\[ \Tr (f(H_\gamma)- f(H_0)) = \left\{ \begin{aligned}
f(\gamma) - f(0)& \quad \textrm{ if }  & \gamma \in \R\\
- f(0) & \quad \textrm{ if }  & \quad \gamma \in \C \setminus \R \\
\end{aligned} \right. ,
\]
and, up to a constant, the SSF is given, almost everywhere, by:
\[
	\xi(\lambda; H_\gamma, H_0) = \left\{ \begin{aligned}
\one_{[0, \gamma]} (\lambda) &\quad \textrm{ if }  & \gamma > 0\\
- \one_{[\gamma, 0]} (\lambda) & \quad \textrm{ if }  & \gamma < 0\\
 \one_{[0, + \infty[} (\lambda) & \quad \textrm{ if }  & \gamma \in \C \setminus \R .\\
\end{aligned} \right. 
\]
In this case, even though there is continuous spectrum, the spectra of $H_\gamma$ and $H_0$ differ by only one eigenvalue, and the SSF is still a step function with jumps at the real eigenvalues influenced by the perturbation. 

Let us mention that the formula \eqref{ssfLnD} gives the same expression by choosing $\arg  D_V( \lambda) = 0$ when $\lambda \rightarrow - \infty$. Let us check it for $\gamma = i \beta$, $\beta >0$ (the case $\gamma \in \R$ is treated in \cite{Bru25}). 
We have, 
\[ D_{V_\gamma}(z):=  \Det \left( I + V_\gamma (H_0 - z)^{-1} \right) = 1 + \gamma \langle (H_0 - z)^{-1} u_0 , u_0 \rangle = 1 - \frac{\gamma}{z}  \]
 whose the pole $z=0$ and the zero $z=\gamma$ are the eigenvalues of $H$. When $\lambda \rightarrow - \infty$, $D_{V_\gamma}(\lambda)$ tends to $1$ and we choose its argument equal to $0$. Then the logarithm of $D_{V_\gamma}(\lambda \pm i \varepsilon)$ is well defined and smooth with respect to $\lambda \in \R$ for $\varepsilon >0$ sufficiently small. We have 
 \[  \ln D_{V_\gamma}(\lambda + i \varepsilon) - \ln D_{V_\gamma}(\lambda - i \varepsilon) = a_\varepsilon (\lambda) + i \, b_\varepsilon (\lambda),
 \]
 with 
 \[a_\varepsilon (\lambda)= \ln \left| \frac{D_{V_\gamma}(\lambda + i \varepsilon)}{D_{V_\gamma}(\lambda - i \varepsilon)} \right|, \qquad 
 b_\varepsilon (\lambda) = \arg D_{V_\gamma}(\lambda + i \varepsilon) - 
 \arg D_{V_\gamma}(\lambda - i \varepsilon).
 \]
 Clearly, $\lim_{\varepsilon \to 0^+} a_\varepsilon (\lambda) = 0$ because 
 \[ \left|  D_{V_\gamma}(\lambda \pm i \varepsilon)
 \right|^2 = \left|  1 - \frac{i \beta}{\lambda \pm i \varepsilon} \right|^2 = \frac{\lambda^2 + (\beta \mp \varepsilon)^2}{\lambda^2 + \varepsilon^2} 
 %\rightarrow \frac{\lambda^2 + \beta^2}{\lambda^2}
 \quad \text{ and } 
 \quad \lim_{\varepsilon \to 0^+}  
\frac{\lambda^2 + (\beta + \varepsilon)^2}{\lambda^2 + (\beta - \varepsilon)^2}=1 .\]
For the imaginary part $b_\varepsilon (\lambda)$, we have to follow the argument of $1 - \frac{i \beta}{\lambda \pm i \varepsilon}$ when $\lambda$ goes from $-\infty$ to  $+\infty$ on the real axis. The case "$-$" is simpler because $$1 - \frac{i \beta}{\lambda - i \varepsilon}=1 + \frac{ \beta \varepsilon}{\lambda^2+  \varepsilon^2} - \frac{i \beta \lambda}{\lambda^2+  \varepsilon^2} $$
always has positive real  part and then its argument remains in $]-\frac{\pi}{2}, \frac{\pi}{2}[$. In particular when this quantity is real the argument vanishes (there is no turning point). In the case  "$+$" we have $$1 - \frac{i \beta}{\lambda + i \varepsilon}=1 - \frac{ \beta \varepsilon}{\lambda^2+  \varepsilon^2} - \frac{i \beta \lambda}{\lambda^2+  \varepsilon^2}$$
whose real part vanishes twice (for $\lambda = \pm \sqrt{\beta \varepsilon - \varepsilon^2}$) and the imaginary part once (for $\lambda =0$). Then its argument goes from $0$ (when $\lambda$ tends to $-\infty$) to $2\pi$ (when $\lambda$ tends to $+\infty$).
%(FAIRE UN DESSIN?). 
When $\lambda =0$ and $\varepsilon$ is small, these quantities have opposite sign and the phase shift is $\pi$. For $\lambda \neq 0$, $1 - \frac{i \beta}{\lambda + i \varepsilon}$ and $1 - \frac{i \beta}{\lambda - i \varepsilon}$ have the same limit $1 - \frac{i \beta}{\lambda}$ but the turning point $0$ produces a phase shift of $2\pi$, and we have:
\[\lim_{\varepsilon \to 0^+} b_\varepsilon (\lambda) = \left\{ \begin{aligned}
0 &\quad \textrm{ if }  & \lambda < 0\\
2 \pi & \quad \textrm{ if }  & \lambda >0 \\
\end{aligned} \right. .\] 
We obtain 
\[ \frac{1}{2i\pi} \lim_{\varepsilon \to 0^+} \ln D_{V_\gamma}(\lambda + i \varepsilon) - \ln D_{V_\gamma}(\lambda - i \varepsilon) = \frac12 \big( \one_{[0, + \infty[} (\lambda) + \one_{]0, + \infty[} (\lambda) \big),\]
which is consistent with the above expression of $\xi(\lambda; H_\gamma, H_0)$ (it coincides with $\one_{[0, + \infty[} (\lambda)$ excepted at the point $\lambda=0$).

\subsection{Rank one perturbation: stronger interaction with the continuous spectrum}\label{ssfEx4}
In the previous example, the calculation of the SSF was simplified by the fact that the spectra of $H_\gamma$ and $H_0$ differ by only two eigenvalues and the SSF is still a step function. 
A more interesting example for which the perturbation interacts with the continuous spectrum is the case where supp$u_0 \cap [0,1] \neq \emptyset$.
Let us consider the previous operators $H_0$ and $H_\gamma= H_0+ V_\gamma$ with $u_0= \one_{[0,1]}$. In this case, the computation of $f(H_\gamma)$ is more tricky, so we prefer to compute the SSF $\xi(\lambda; H_\gamma, H_0)$ by using the formula \eqref{ssfLnD}, for $\gamma = i \beta$, $\beta >0$. Thanks to \eqref{ssf*} for $\beta <0$, it suffices to take the complex conjugate. 

As in the selfadjoint case (see \cite{Bru25} for $\gamma \in \R$), for any $u \in L^2(\R)$,
\[V_\gamma \Res_0(z) u = \gamma \langle (H_0 - z)^{-1} u \, , \,  u_0 \rangle u_0 = \gamma \left(\int_{[0,1]} \frac{u(x)}{x-z} dx \right) \, u_0,\]
and for $\gamma = i \beta$, $\beta >0$, $z = x + i y$, 
\[ D_{V_\gamma}(z) = 1 + \gamma \langle \Res_0(z) u_0 , u_0 \rangle = 
1 - \beta \left( \arctan \frac{1-x}{y} +  \arctan \frac{x}{y} \right)
+i \frac{\beta}{2}
\ln \left( \frac{(1-x)^2+y^2}{x^2+y^2} \right).\]
The eigenvalues of $H_\gamma$ are $0$ (as for $H_0$, the associated eigenfunctions are supported outside $[0,1]$) and the zeroes of $D_{V_\gamma}$ in $\C \setminus [0,1]$, that is $z= x+ iy$ satisfying
\[ 1 = \beta \left( \arctan \frac{1-x}{y} +  \arctan \frac{x}{y} \right); \qquad 
\ln \left( \frac{(1-x)^2+y^2}{x^2+y^2} \right)=0, \qquad y \neq 0 \text{ or } x \notin [0,1].\]
It admits a unique solution if and only if $\beta >\frac1{\pi}$: $z_\beta:= \frac12 (1 + i \coth\frac1{2 \beta})$. Thus for $\beta \leq \frac1{\pi}$ the spectrum of $H_\gamma$ is $[0,1]$ and for $\beta >\frac1{\pi}$, it is $[0,1]\cup \{z_\beta \}$ where $z_\beta$ is an eigenvalue of multiplicity one associated with the eigenfunction $w_\beta(x):=\frac1{z_\beta -x} \one_{[0,1]}(x) $.  For $\beta = \frac1{\pi}$, we see that $z_\beta = \frac12 $ is an outgoing spectral singularity of order $1$. Indeed, $V_{i \beta} = CWC$ with $C=\Pi_0$, $W=i \beta$, and thanks to \eqref{invertible}, 
\[ C \Res_{H_{i \beta}} (z) C W = \Id - (\Id + T_0(z))^{-1} = \Big(1 - \frac{1}{D_{V_{i\beta}}(z)}\Big)\Pi_0, \]
where we used that $T_0(z):= C \Res_{H_{0}} (z) C W = i \beta \langle \Res_0(z) u_0 , u_0 \rangle \Pi_0$.
Then from the above expression of $D_{V_{i\beta}}(z)$ for $z= \frac12 + \delta + i \varepsilon$, $\varepsilon >0$, we deduce that $|z-\frac12|^n \| C \Res_{H_{i \beta}} (z) C W\|$ is uniformly bounded with respect to $\delta + i \varepsilon$ small enough, for $n\geq 1$ (but not for $n=0$).
%because for $w_\varepsilon (x):= \frac1{\frac12+i \varepsilon -x} \one_{[0,1]}(x) $, \[\Big(H_{\frac{i}{\pi}}-(\frac12 + i \varepsilon)\Big) w_\varepsilon = (\frac{2}{\pi} \arctan \frac1{2\varepsilon} -1 ) \one_{[0,1]} \longrightarrow 0 \quad \text{ as } \varepsilon \rightarrow 0^+.\]}

Now in order to apply \eqref{ssfLnD}, let us study the modulus and the argument of $D_{V_\gamma}(z)$
for $z = \lambda \pm i \varepsilon$, $\varepsilon >0$, given by:
\[ D_{V_\gamma}(\lambda \pm i \varepsilon) = 
1 \mp \beta \left( \arctan \frac{1-\lambda}{\varepsilon} +  \arctan \frac{\lambda}{\varepsilon} \right) + i  \frac{\beta}{2} \ln \left( \frac{(1-\lambda)^2+\varepsilon^2}{\lambda^2+\varepsilon^2} \right) .\]

\subsubsection{Imaginary part of the SSF}

The imaginary part of $\xi(\lambda; H_\gamma, H_0)$ is:
\[ \im \xi(\lambda; H_\gamma, H_0) = - \frac{1}{2\pi} \lim_{\varepsilon \to 0^+} 
\ln \left| \frac{D_{V_\gamma}(\lambda + i \varepsilon)}{D_{V_\gamma}(\lambda - i \varepsilon)} \right|.
 \]
When $\lambda \in \R \setminus [0,1]$, $(1-\lambda)$ and $\lambda$ have opposite signs then 
\[ \lim_{\varepsilon \to 0^+} D_{V_\gamma}(\lambda \pm i \varepsilon) = 1 + i  \beta \ln \left( \frac{\lambda -1}{\lambda} \right),\]
is independent of the sign in front of $\varepsilon$ and 
\begin{equation}\label{Inulle}
    \im \xi(\lambda; H_\gamma, H_0) = - \frac{1}{2\pi} \ln 1=0 , \qquad \textrm{for } \lambda \in \R \setminus [0,1].
\end{equation} 
When $\lambda \in ]0,1[$, we have
\[\lim_{\varepsilon \to 0^+} \left( \arctan \frac{1-\lambda}{\varepsilon} +  \arctan \frac{\lambda}{\varepsilon} \right) = \pi.\]
Then, 
\begin{equation}\label{INnulle}
    \im \xi(\lambda; H_\gamma, H_0) = - \frac{1}{4\pi} \ln \left( \frac{(1-\beta \pi)^2+\beta^2f(\lambda)^2}{(1+\beta \pi)^2+\beta^2f(\lambda)^2} \right) = G_\beta(f(\lambda))
    ,
    %, \qquad \textrm{for } \lambda \in  [0,1],
\end{equation} 
where $G_\beta$ is the even function defined on $\R$ by:
\[
G_\beta(X) :=  \frac{1}{4\pi} \ln \left( 1 + \frac{4\beta \pi}{(1-\beta \pi)^2+\beta^2X^2} \right),
\]
and
$f$ is the decreasing function from $]0,1[$ onto $\R$,  given by:
\begin{equation}
    f(\lambda):= \ln (\lambda^{-1} -1), \quad \lambda \in ]0,1[.
\end{equation}
It is a symmetric function with respect to $1/2$ (i.e. $f(1-\lambda) = - f(\lambda)$).

It follows that the imaginary part of the SSF is given by:
\begin{equation}\label{Imssf}
   \im \xi(\lambda; H_\gamma, H_0)  = \left\{ \begin{aligned}
0 &\quad \textrm{ if }  & \lambda \notin [0,1]\\
G_\beta(f(\lambda)) & \quad \textrm{ if }  & \lambda \in ]0,1[ \\
\end{aligned} \right.
\end{equation}
In particular it has the sign of the imaginary part of $V_\gamma$ and it is continuous excepted for $\beta = \frac1{\pi}$ (the spectral singularity $z_\beta =\frac12$ is a singularity of $\im \xi(\cdot; H_\gamma, H_0)$).
\\
%DESSIN du GRAPHE???\\
\subsubsection{Real part of the SSF}

The real part of the SSF is 
\[\re \xi(\lambda; H_\gamma, H_0) =  \frac{1}{2\pi} \lim_{\varepsilon \to 0^+}
\left( \arg D_{V_\gamma}(\lambda + i \varepsilon) - 
 \arg D_{V_\gamma}(\lambda - i \varepsilon) \right),\]
 where the argument is chosen with the condition 
 \[\lim_{\lambda \to - \infty} \arg D_{V_\gamma}(\lambda) = \arg 1 =0.\]
\begin{itemize}
    \item  When $\beta \in ]0, \frac1{\pi}[$, the real part of $D_{V_\gamma}(\lambda \pm i \varepsilon)$ is always positive, then $\arg D_{V_\gamma}(\lambda \pm i \varepsilon) \in ]- \frac{\pi}2, \frac{\pi}2 [$ and there is no turning point. It follows that 
 \[ \lim_{\varepsilon \to 0^+} D_{V_\gamma}(\lambda + i \varepsilon) = \lim_{\varepsilon \to 0^+}
  D_{V_\gamma}(\lambda - i \varepsilon) \quad \Longrightarrow \quad 
  \lim_{\varepsilon \to 0^+}
\left( \arg D_{V_\gamma}(\lambda + i \varepsilon) - 
 \arg D_{V_\gamma}(\lambda - i \varepsilon) \right) =0.\]
It is the case for $\lambda \notin [0,1]$ because 
\[ \lim_{\varepsilon \to 0^+} D_{V_\gamma}(\lambda \pm i \varepsilon) = 1 + i \frac{\beta}2 \ln \left( \frac{(1-\lambda)^2}{\lambda^2} \right) = 1 + i \beta \ln (1-\lambda^{-1}).\]
For $\lambda \in ]0,1[$, by using again that the real part of $D_{V_\gamma}(\lambda \pm i \varepsilon)$ is always positive and that for $a>0$, $\arg (a+ib) = \arctan \frac{b}{a}$, for $f(\lambda)= \frac12 \ln \frac{(1-\lambda)^2}{\lambda^2} = \ln(\lambda^{-1}-1) $, we have:
\[ \lim_{\varepsilon \to 0^+}
\left( \arg D_{V_\gamma}(\lambda + i \varepsilon) - 
 \arg D_{V_\gamma}(\lambda - i \varepsilon) \right) = \arctan \frac{\beta f(\lambda)}{1-\beta \pi} - \arctan \frac{\beta f(\lambda)}{1+\beta \pi} = 2 \pi F_\beta(f(\lambda))\]
 where $F_\beta$ is the odd function defined on $\R$ by
 \[ F_\beta(X) := \frac{1}{2 \pi} \left( \arctan \frac{\beta X}{1-\beta \pi} - \arctan \frac{\beta X}{1+\beta \pi}\right), \qquad \beta \neq \frac1{\pi}.\]
 We deduce that for $\beta \in ]0, \frac1{\pi}[$, the SSF is the following continuous function supported on $[0,1]$:
 \begin{equation}\label{ssfB<}
   \xi(\lambda; H_\gamma, H_0)  = \left\{ \begin{aligned}
0 &\quad \textrm{ if }  & \lambda \notin [0,1]\\
(F_\beta + iG_\beta)(f(\lambda)) & \quad \textrm{ if }  & \lambda \in ]0,1[ \\
\end{aligned} \right. .
\end{equation}

\item Now, let us consider the case $\beta > \frac1{\pi}$ for which the spectrum of $H_\gamma$ is $[0,1]\cup \{z_\beta \}$ with $z_\beta:= \frac12 (1 + i \coth\frac1{2 \beta})$. 
In this case, the real part of $D_{V_\gamma}(\lambda - i \varepsilon)$ remains positive, while $D_{V_\gamma}(\lambda + i \varepsilon)$ turns around $0$. 
It will yield a phase shift of $2 \pi$ for $\lambda \in ]0,1[$. First, as above, for $\lambda <0$, both $D_{V_\gamma}(\lambda \pm i \varepsilon)$ tend to $  1 + i \beta \ln (1-\lambda^{-1})$ as $\varepsilon \searrow 0$
and 
 \[ 
  \lim_{\varepsilon \to 0^+}
\left( \arg D_{V_\gamma}(\lambda + i \varepsilon) - 
 \arg D_{V_\gamma}(\lambda - i \varepsilon) \right) =0.\]
For $\lambda \in ]0,1[$, we have 
\[ \lim_{\varepsilon \to 0^+} D_{V_\gamma}(\lambda \pm i \varepsilon) = 1 \mp \beta \pi + i \beta f(\lambda); \qquad f(\lambda)= \ln(\lambda^{-1}-1).\]
Then as before 
\[ \lim_{\varepsilon \to 0^+} \arg D_{V_\gamma}(\lambda - i \varepsilon) = \arctan \frac{\beta f(\lambda)}{1+\beta \pi}.\]
Contrariwise since $(1 - \beta \pi) <0$, for $\arg D_{V_\gamma}(\lambda + i \varepsilon)$, we have to take into account a phase shift of $\pi$ and 
\[ \lim_{\varepsilon \to 0^+} \arg D_{V_\gamma}(\lambda + i \varepsilon) = \pi +  \arctan \frac{\beta f(\lambda)}{1-\beta \pi} %= \pi -  \arctan \frac{\beta f(\lambda)}{\beta \pi-1}
.\]
For $\lambda >1$, $D_{V_\gamma}(\lambda + i \varepsilon)$ and $D_{V_\gamma}(\lambda - i \varepsilon)$ have the same limit (as for $\beta <\frac1{\pi}$) but the phase shift becomes $2 \pi$. So finally, for $\beta >\frac1{\pi}$ we obtain:
\begin{equation}\label{ssfB>}
   \xi(\lambda; H_\gamma, H_0)  = \left\{ \begin{aligned}
0 &\quad \textrm{ if }  & \lambda <0 \\
(\frac12 + F_\beta + iG_\beta)(f(\lambda)) & \quad \textrm{ if }  & \lambda \in ]0,1[ \\
1 &\quad \textrm{ if }  & \lambda >1 \\
\end{aligned} \right. .
\end{equation}

 \item 
 To conclude this example, let us discuss the case $\beta = \frac1{\pi}$ for which $z_\beta =\frac12$ is a spectral singularity. As in the previous cases, for $\lambda <0$, we obtain $\xi(\lambda; H_\gamma, H_0)=0$. 
For $\lambda \in ]0,1[$, the real part of $D_{V_\gamma}(\lambda + i \varepsilon)$ tends to $0^+$ when $\varepsilon \searrow 0$ and the imaginary part change of sign at $\lambda = \frac12$. Then, we have
%\[\lim_{\varepsilon \to 0^+} \arg D_{V_\gamma}(\lambda + i \varepsilon)  = \left\{ \begin{aligned}
%\frac{\pi}{2} &\quad \textrm{ if }  & \lambda \in ]0, \frac12[\\
%0 & \quad \textrm{ if }  & \lambda = \frac12 \\
%-\frac{\pi}{2} &\quad \textrm{ if }  & \lambda \in ]\frac12,1[\\
%\end{aligned} \right.\]
\[\lim_{\varepsilon \to 0^+} \arg D_{V_\gamma}(\lambda + i \varepsilon)  =
\frac{\pi}{2} \one_{]0, \frac12[} (\lambda) - \frac{\pi}{2} \one_{]\frac12, 1[}(\lambda),\]
while 
\[ \lim_{\varepsilon \to 0^+} \arg D_{V_\gamma}(\lambda - i \varepsilon) =  \arctan \frac{\beta f(\lambda)}{1+\beta \pi} = \arctan \frac{f(\lambda)}{2\pi}.\]
For $\lambda >1$, the real part of $D_{V_\gamma}(\lambda + i \varepsilon)$ and of $D_{V_\gamma}(\lambda - i \varepsilon)$ are both positive and since they have the same limit  there is no phase shift.  Consequently, when $\beta =\frac1{\pi}$, the SSF is the following function, discontinuous at the spectral singularity $z_\beta = \frac12$:
\begin{equation}\label{ssfB=}
   \xi(\lambda; H_\gamma, H_0)  = \left\{ \begin{aligned}
0 &\quad \textrm{ if }  & \lambda \notin [0,1]\\
 F_{\pi^{-1}}(f(\lambda)) & \quad \textrm{ if }  & \lambda \in ]0, 1[\setminus \{\frac12\} \\
\end{aligned} \right. ,
\end{equation}
with
\begin{equation}\label{Fpi-}
F_{\pi^{-1}}(X) : = \lim_{\beta \rightarrow (1/\pi)^{-}} F_\beta (X) = \left\{ \begin{aligned}
- \frac{1}{2 \pi} \left(\frac{\pi}2 + \arctan \frac{X}{2\pi}   \right)  &\quad \textrm{ if }  & X<0\\
  \frac{1}{2 \pi} \left(\frac{\pi}2 - \arctan \frac{X}{2\pi}  \right) & \quad \textrm{ if }  & X >0 .\\
\end{aligned} \right.
\end{equation}

\end{itemize}

 In order to visualize the difference between the three formulas \eqref{ssfB<}, \eqref{ssfB>} and \eqref{ssfB=}, let us look at the graphical representation of the real and imaginary parts of the spectral shift function $ \xi(\cdot; H_\gamma, H_0)$ in the particular cases $\beta=0.2 < \frac1{\pi}$, $\beta= \frac1{\pi}$ and $\beta=0.4 > \frac1{\pi}$.

\begin{multicols}{2}

%==================== Partie Re de SSF

\begin{center}
\begin{tikzpicture}[scale=0.8]

\begin{axis}[
    xlabel = {$\lambda$},
    ylabel = {Re SSF},
    xmin = -0.5, xmax = 1.5,
    ymin = -0.5, ymax = 1.1,
    samples = 200,
    legend style={at={(1.02,0.98)},anchor=north west},
    grid = both
]

%b=0,2

\addplot[blue, thick, dashed, domain=0.01:0.99]{
 (rad(atan(0.2*ln((1-x)/x) / (1 - 0.2*pi )))
 -
 rad(atan(0.2*ln((1-x)/x) / (1 + 0.2*pi )))
   )/2/pi
};

%\addlegendentry{$\beta=0,2$}
%b=0,4
\addplot[green!60!black, thick,  domain=0.01:0.99]{
%\addplot{[green!60!black, thick, dashed, domain=0.001:0.999]{
0.5 + ( 
 rad(atan(0.4*ln(1/x -1) / (1 - 0.4*pi )))
 -
 rad(atan(0.4*ln(1/x -1) / (1 + 0.4*pi )))
   )/2/pi
};

%\addlegendentry{$\beta=0,4$}

%b=1/pi
\addplot[red, thick, domain=0.01:0.5]{
 0.25 - 
 (rad(atan(ln(1/x -1)/2/pi)
   )/2/pi
};
%\addlegendentry{$\beta=1/\pi$}

%b=0,2
\addplot [blue, thick, dashed, 
    domain=-0.5:0] {0.01};

%\addplot[blue, thick, dashed, domain=0.01:0.99]{
% (rad(atan(0.2*ln((1-x)/x) / (1 - 0.2*pi )))
% -
% rad(atan(0.2*ln((1-x)/x) / (1 + 0.2*pi )))
%   )/2/pi
%};

\addplot [blue, thick, dashed, 
    domain=1:1.5] {0.01};
    
\draw[blue, thick, dashed](0,0)
--(0.01,0.1);

\draw[blue, thick, dashed](0.99,-0.1)
--(1,0);

%\addlegendentry{$\beta=0,2$}
%b=0,4
%\addplot[green!60!black, thick, dashed, domain=0.01:0.99]{
%%\addplot{[green!60!black, thick, dashed, domain=0.001:0.999]{
%0.5 + ( 
% rad(atan(0.4*ln(1/x -1) / (1 - 0.4*pi )))
% -
% rad(atan(0.4*ln(1/x -1) / (1 + 0.4*pi )))
%   )/2/pi
%};

\addplot [green!60!black, thick, 
    domain=-0.5:0] {0.02};
    
\addplot [green!60!black, thick, 
    domain=1:1.5] {1};

\draw[green!60!black,  thick](0,0)
--(0.01,0.17);

\draw[green!60!black,  thick](0.9855,0.83)
--(1,1);

%\addlegendentry{$\beta=0,4$}

%b=1/pi
%\addplot[red, thick, dashed, domain=0.005:0.495]{
% 0.25 - 
% (rad(atan(ln(1/x -1)/2/pi)
%   )/2/pi
%};
%\addlegendentry{$\beta=1/\pi$}
\addplot[red, thick, domain=0.5:0.990]{
 -0.25 - 
 (rad(atan(ln(1/x -1)/2/pi)
   )/(2*pi)
};

\addplot [red, thick, 
    domain=-0.5:0] {0};
    
 \addplot [red, thick, 
    domain=1:1.5] {0};
    
\draw[red, thick](0,0)
--(0.01,0.16);

\draw[red, thick](0.99,-0.15)
--(1,0);

\end{axis}
\end{tikzpicture}
\end{center}

%==================== Partie Im de SSF 

\begin{center}
\begin{tikzpicture}[scale=0.75]
\begin{axis}[
    xlabel = {$\lambda$},
    ylabel = {Im SSF},
    xmin = -0.5, xmax = 1.5,
    ymin = -0.5, ymax = 4,
    samples = 200,
    legend style={at={(1.02,0.98)},anchor=north west},
    grid = both
]

%sur (0,1)
%\frac{1}{4\pi} \ln \left( 1 + \frac{4\beta \pi}{(1-\beta \pi)^2+\beta^2X^2} \right),\]
%    f(\lambda):= \ln (\lambda^{-1} -1)%

%%b=0,1
%\addplot[blue, thick, domain=0:1] {
% ln ( 1 + (4 * 0.1 * pi)/((1-0.1 * pi)^2+ 0.1*0.1 * ln(1/x -1)* ln(1/x -1) )/(4*pi)
%};
%\addlegendentry{$\beta=0,1$}

%b=0,2
\addplot[blue, thick, dashed]{
 ln ( 1 + (4 * 0.2 * pi)/((1-0.2 * pi)^2+ 0.2*0.2 * ln(1/x -1)* ln(1/x -1) )/(4*pi)
};
\addlegendentry{$\beta=0,2$}

%b=0,4
\addplot[green!60!black, thick]{
 ln ( 1 + (4 * 0.4 * pi)/((1-0.4 * pi)^2+ 0.4*0.4 * ln(1/x -1)* ln(1/x -1) )/(4*pi)
};
\addlegendentry{$\beta=0,4$}

%b=1/pi
\addplot[red, thick, domain=0:0.47]{
 ln ( 1 + 4 / (  ln(1/x -1)* ln(1/x -1)/(pi*pi))/(4*pi)
};
\addlegendentry{$\beta=1/\pi$}

\addplot[red, thick, domain=0.53:1]{
 ln ( 1 + 4 / (  ln(1/x -1)* ln(1/x -1)/(pi*pi))/(4*pi)
};
%\addlegendentry{$\beta=1/\pi$}

\addplot [red, thick, 
    domain=-0.5:0] {0};
    
 \addplot [red, thick, 
    domain=1:1.5] {0};

\addplot [blue, thick, dashed, 
    domain=-0.5:0] {0.01};

\addplot [blue, thick, dashed, 
    domain=1:1.5] {0.01};
    
    \addplot [green!60!black, thick, 
    domain=-0.5:0] {0.03};
    
\addplot [green!60!black, thick, 
    domain=1:1.5] {0.03};
    
\end{axis}
\end{tikzpicture}
\end{center}

\end{multicols}

When $0 < \beta < \frac1{\pi}$ the perturbation $V_{i\beta}$ does not create a new eigenvalue or spectral singularity, and $ \xi(\cdot; H_{i\beta}, H_0)$ is continuous on $\R$. For $\beta > \frac1{\pi}$, the jump of $1$ when $\lambda$ crosses the continuous spectrum $[0,1]$ can be explained by the appearance of the non-real eigenvalue $z_{\beta}$. Finally, for $\beta = \frac1{\pi}$, the real part of $ \xi(\cdot; H_{i\beta}, H_0)$ has a jump of high $\frac12$ at the spectral singularity $\lambda=\frac12$. Moreover at this point, the imaginary part of the SSF blows up.

%\end{itemize}

\section*{Acknowledgments}
This work was partially conducted within the France 2030 framework programme, Centre Henri Lebesgue ANR-11-LABX-0020-01. V.B is partially supported by the ANR-24-CE40-2939-01 grant. 
N.F. is supported by the R\'egion Pays de la Loire for the Connect Talent Project HiFrAn 2022 07750 led by Clotilde Fermanian Kammerer and the ANR-25-CE40-7296 (La Gabare). V.B and F.N. also thank the French GDR Dynqua for his support.

			\vspace{2cm}
		
		\noindent (V. Bruneau) Institut de Math\'ematiques de Bordeaux, UMR CNRS 5251, Universit\'e de Bordeaux 351 cours de la Lib\'eration, 33405 Talence cedex, France\\
    \textit{Email adress:} vbruneau@math.u-bordeaux.fr	\\

		\noindent	(N. Frantz) Univ Angers, CNRS, LAREMA, F-49000 Angers, France\\
\textit{Email address:} nicolas.frantz@univ-angers.fr\\

\noindent (F.Nicoleau) Laboratoire de Math\'ematiques Jean Leray, UMR CNRS 6629.
Nantes Universit\'e  F-44000 Nantes \\
         \emph{Email adress}: francois.nicoleau@univ-nantes.fr

		\end{document}